\newtheorem*{rep@theorem}{\rep@title}
\newcommand{\newreptheorem}[2]{%
\newenvironment{rep#1}[1]{%
 \def\rep@title{#2 \ref{##1}}%
 \begin{rep@theorem}}%
 {\end{rep@theorem}}}
\newtheorem*{theorem*}{Theorem}
\newtheorem{theorem}{Theorem}
\newtheorem{lemma}{Lemma}
\newtheorem{definition}{Definition}
\newcommand{\E}{\mathbb{E}}
\newcommand{\ceil}[1]{\left\lceil #1 \right\rceil}
\newcommand{\dk}[1]{#1}
\newcommand{\jo}[1]{#1}
\newcommand{\remove}[1]{}
\newcommand{\CF}{f}
\newcommand{\CA}{g}
\newcommand{\dmin}{d_{\min}}
\newcommand{\dmax}{d_{\max}}
\newcommand{\wellconnected}{well-connected{}}
\newcommand{\cA}{\mathcal{A}}
\newcommand{\cI}{\mathcal{I}}
\newcommand{\cL}{\mathcal{L}}
\title{Deterministic Fault-Tolerant Local Load Balancing
and its Applications against Adaptive Adversaries}
\author{	Dariusz R. Kowalski \footnotemark[1]
		\and
		Jan Olkowski \footnotemark[2]}
\date{}
\begin{document}

\maketitle

\begin{abstract}
Load balancing is among the basic primitives in distributed computing. In this paper, we consider this problem when executed locally on a network with nodes prone to failures. We show that there exist lightweight network topologies that are immune to message delivery failures incurred by (at most) a constant fraction of all nodes. More precisely, we design a novel deterministic fault-tolerant local load balancing (LLB) algorithm, which, similarly to their classical counterparts working in fault-free networks, has a relatively simple structure and guarantees exponentially fast convergence to the average value despite crash and omission failures. 

As the second part of our contribution, we show three applications of the newly developed fault-tolerant local load balancing protocol. We give a randomized consensus algorithm, working against $t < n / 3$ crash failures, that improves over the best-known consensus solution by Hajiaghayi et al. (STOC'22) with respect to communication complexity, yet with an arguable simpler technique of combining a randomly and locally selected virtual communication graph with a deterministic fault-tolerant local load balancing on this graph. 

We also give a new solution for consensus for networks with omission failures. Our solution works against $t < \frac{n}{C\log{n} (\log\log n)^2}$ omissions, for some constant $C$, is nearly optimal in terms of time complexity, but most notably -- it has communication complexity $O((t^2 + n)\text{ polylog } {n})$, matching, within a polylogarithmic factor, the lower bound by Abraham et. al. (PODC'19) with respect to both terms depending on $t$ and $n$. Ours is the first algorithm in the literature that is simultaneously nearly optimal, in terms of $n,t$, with respect to both complexity measures, against the adaptive omission-causing adversary.
Finally, we show an application of load balancing to the problem of almost everywhere counting in networks prone to crash or omission failures. Our techniques improve the communication complexity of these applications, compared to the best-known algorithms, by at least a polylogarithmic factor, and in case of consensus against omission failures -- even by a polynomial.
\end{abstract}

\thispagestyle{empty}
\setcounter{page}{0}

\newpage

\section{Introduction}

A 
fault-tolerant {\bf\em local load balancing (LLB)} is a distributed problem in which every process (also called a vertex or a node) starts with an input value in $[0, 1]$, and 
at the end 
there exists a subset of processes, of size $n - O(t)$, such that the values stored by these processes 
differ by at most $\tilde O(t / n)$ from the mean $\mu$ of all the input values, where $n$ is the number of all processes and $t$ is the number of faulty ones.\footnote{Notation $\tilde O(\cdot)$ hides poly-logarithmic factors.}

{\bf Message-passing system.}
We consider a message-passing fully connected distributed system of $n$ processes. The processes know $n$ (or its linear estimate).
They work synchronously in rounds. Each of them knows only its local numbering of communication ports, but the port numbers do not automatically provide information about the identity of the other side of the adjacent link. We also do not require the processes to 
have a unique ID, i.e., they could be anonymous. Messages sent via point-to-point links are of size $O(\log{n})$ each, but a process may send different messages via different ports in a round (multicast operation).

Our LLB algorithm can work efficiently not only in fully-connected message-passing systems, as described above, but also in a class of well-connected network topologies (see Def.~\ref{def:well-connected}). 
The considered applications, however, require full connectivity of the communication network.

{\bf Process failures.}
We consider two standard types of process' failures:
crashes 
and omissions. 
Suppose a {\em crash failure} of a process occurs in a round of computation. In that case, 
only some arbitrary subset of links adjacent to this process succeeds in delivering messages from the process in this round, and starting from the next round, the process halts its computation entirely: no messages can be sent by nor delivered to this process, and the process is excluded from any further consideration.
An {\em omission failure} of a process means that its adjacent links could occasionally omit some in-coming and out-going messages, starting from the round in~which~this~process~becomes~faulty. 

{\bf Adversary.}
In any type of failures, we assume that an {\em adaptive full-information adversary} controls the pattern of these failures. More specifically, the adversary knows the protocol and can control chosen processes, up to $t$ of them, together with their adjacent links, in an online fashion based on the history of the computation in the system. If processes use randomness, we assume that the adversary can see the entire system's state at any moment of the execution, but cannot foresee future random bits to be provided to the processes (if~any).


Load balancing is a widely applicable primitive -- we demonstrate efficient applications of our fault-tolerant LLB algorithm to the problems of counting and consensus under crash and~omission~failures. 

{\bf Counting.}
In the {\bf\em $k$-almost-everywhere counting problem}, there is a subset of processes with a raised flag, and each process knows in the beginning only whether its flag is raised or not. \\
{\bf\em Correctness:} At the end, there 
is
a subset of processes, of size at least $k$, such that the counting algorithm at each process in this subset returns the number of processes with raised flag, with an additive accuracy~of~$\tilde{O}(t)$. 

{\bf Consensus.}
In the {\bf\em binary consensus problem} each process is given an input value in $\{0, 1\}$ and the goal of every correct process is to decide on some of the processes' input, adhering to the three requirements: \\
{\bf\em Validity:} Only a value among the initial ones may be decided upon.\\
{\bf\em Agreement:} No two non-faulty processes decide on different values.\\
{\bf\em Termination:} Each process eventually decides on some value unless it is faulty.

We consider randomized solutions to the above problems, and thus, we require that all problem-specific conditions hold with high probability. We say that an event holds \textit{with high probability (whp)} if there exists an absolute constant $C \ge 1$ such that the probability of this event is at least~$1-\frac{1}{n^{C}}$.

\subsection*{Our contributions}

Our main technical contribution is a \underline{deterministic fault-tolerant 
LLB}
algorithm $\textsc{FaultTolerantLLB}$, which works under both \underline{crash and omission failures}. The algorithm executed by a process $v$, takes as input 
a subset of ports
and the load of a node corresponding to the process executing the algorithm, and returns a pair $x(v), \texttt{type}$, where $x(v)$ is the final load at process $v$. If the conceptual communication graph/network $G$, formed by edges corresponding to the input ports at nodes (i.e., edge $\{v,w\}$ means one of the input ports in $v$ leads to $w$ and vice versa), is  
$(d_{\min}, d_{\max})$-\wellconnected, 
i.e., it satisfies certain connectivity properties and each node has degree between 
$d_{\min}$ and $d_{\max}$ 
(see Definition~\ref{def:well-connected}), then our algorithm guarantees the following (here we present a simplified version of Theorem~\ref{thm:load-balancing} for suitable constants $d_{\min},d_{\max}$):

\begin{theorem*}[Theorem~\ref{thm:load-balancing} in Section~\ref{sec:load-balancing}, version for constant $d_{\min},d_{\max}$]
Let $G$ be a $(d_{\min}, d_{\max})$-\wellconnected\space  graph. 
The algorithm \textsc{FaultTolerantLLB} executed on the graph $G$ under at most $t$ crash or omission failures, achieves the following guarantees:\\
\vspace{-3mm}\\
\noindent \textit{$(i)$} it terminates in $O\left(\tau_1 + \tau_2\right)$ rounds using $O\left(\left(\tau_1 + \tau_2
\right)|M|\right)$ communication bits per node, where $|M|$ is the size of machine word and $\tau_1,\tau_2 = O(\log{n})$;\\ 
\vspace{-3mm}\\ 
\noindent \textit{$(ii)$} the first element of the returned pair is always between the largest and the smallest input value;\\
\vspace{-3mm}\\ 
\noindent \textit{$(iii)$} if $t < C\cdot n$, for some constant $C$ depending on $d_{\min},d_{\max}$, then there exists a set of nodes $A$, of size at least $n - \frac{3}{2} t$, such that for every $v \in A$, every returned pair $x(v), \texttt{type}$ satisfies $\texttt{type} = active$;\\
\vspace{-3mm}\\ 
\noindent \textit{$(iv)$} let $\varepsilon \in [0,1]$ be such that $t < \frac{\varepsilon}{3\tau_1}\cdot C' \cdot n$, for some $C'$ depending only on $d_{\min},d_{\max}$, and let $\mu$ denote the mean of the input values; Then, for every node $v \in A$ it holds that 
\vspace*{-2ex}
\[
x(v) \in [\mu - \varepsilon, \mu + \varepsilon]
\ .
\]
\end{theorem*}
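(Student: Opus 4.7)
My plan is to address the four claims in order, with most effort going into (iii) and (iv). For (i), I would simply trace the algorithm: it proceeds in two phases of $\tau_1=O(\log n)$ and $\tau_2=O(\log n)$ rounds, and each round consists of a bounded number of multicasts over the $O(d_{\max})$ adjacent ports carrying $O(1)$ machine words, so the stated bounds follow immediately. For (ii), I would argue by induction on rounds: the algorithm's update should compute each new load as a convex combination of the previous loads of some subset of neighbors (including $v$ itself). Since omissions and crashes can only suppress incoming values but never inject new ones, the global maximum of still-participating loads is non-increasing and the global minimum non-decreasing, so the returned $x(v)$ lies between the initial extrema.

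The combinatorial claim (iii) I would prove via a charging argument built on the $(d_{\min},d_{\max})$-\wellconnected{} assumption. A non-faulty node should lose its \emph{active} status only if it witnesses failures along a sufficiently large fraction of its adjacent ports---a fraction chosen so that each faulty node can trigger this event at only a few neighbors. Concretely, each of the at most $t$ faulty nodes contributes at most $d_{\max}$ failed port-incidences, while each non-faulty-but-inactivated node consumes a constant fraction of $d_{\min}$ such incidences; for constants $d_{\min},d_{\max}$ in the right ratio this yields at most $t/2$ additional inactivated nodes on top of the $t$ faulty ones, giving $|A|\ge n-\tfrac{3}{2}t$.

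The analytical claim (iv) is the main hurdle. The plan is to show that on a well-connected graph the algorithm's local-averaging step contracts an appropriate global potential---the $\ell_2$-distance of the active loads from their mean---by a constant factor per round, yielding an $\exp(-\Omega(\tau_1))$ contraction after $\tau_1=O(\log n)$ rounds. Faulty nodes perturb this contraction in two ways: first, their outgoing messages may be suppressed, which I would handle by restricting the analysis to the subgraph induced by currently active nodes and invoking a robustness property of well-connectedness, namely that the subgraph obtained by removing any $O(t/n)$-fraction of nodes still mixes at a constant rate; second, each round the faulty nodes may shift or lose up to $O(t/n)$ mass per active node, so the cumulative bias after $\tau_1$ rounds is at most $O(\tau_1\cdot t/n)$. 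Combining, every active node's final value deviates from $\mu$ by at most $O(\tau_1\cdot t/n)$ up to an exponentially small residual, which is at most $\varepsilon$ under the hypothesis $t<\varepsilon n/(3\tau_1 C')$. The main obstacle will be showing that well-connectedness is preserved under adversarial removal of a $\Theta(t/n)$-fraction of nodes, since the adaptive adversary chooses its failures based on the execution: this requires the definition of \wellconnected{} to encode an expansion-robustness property depending only on the fixed input graph, so that the adversary's online choices are still governed by that worst-case bound.
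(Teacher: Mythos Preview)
Your handling of (i) and (ii) is essentially what the paper does (note that the second phase, \textsc{FixOutliers}, updates by taking a \emph{median} rather than a convex combination, but the min/max preservation argument goes through for medians just as well).

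For (iii), your charging argument has a genuine gap: it ignores cascading. A non-faulty node becomes silent when fewer than $\tfrac{2}{3}d_{\min}$ of its neighbors respond, but those missing neighbors need not be faulty---they may themselves be non-faulty nodes that have already gone silent. So the ``failed port-incidences'' you must account for are not just the $\le t\cdot d_{\max}$ edges emanating from faulty nodes; you also have to control the internal edges of the silent set $S$. Writing the balance gives $|S|\cdot\tfrac{1}{3}d_{\min}\le t\cdot d_{\max}+2E(S)$, and without a bound on $E(S)$ the inequality is vacuous. The paper controls $E(S)$ via the spectral edge-density bound $E(S)\le\tfrac{1}{2}\mathrm{vol}(S)\bigl(1-\lambda_2(1-\mathrm{vol}(S)/\mathrm{vol}(G))\bigr)$, which is precisely where the \wellconnected{} hypothesis enters. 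The argument is packaged as an iterative ``core subgraph'' construction (Lemma~\ref{lem:high-degree-core}): start with the faulty set, repeatedly peel off any node with too few surviving neighbors, and use the edge-density bound to show the peeling terminates after removing at most $\tfrac{3}{2}t$ nodes.

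For (iv), your plan overshoots. An $\ell_2$-contraction argument on the active subgraph, even if it can be made to work, would at best show that the \emph{aggregate} squared deviation is small, not that \emph{every} active node is within $\varepsilon$ of $\mu$. Indeed this uniform bound is false after the averaging phase: the adversary can concentrate failures near a few nodes and drive their values far from $\mu$. The paper handles this in two stages. First, it sandwiches the true process between two auxiliary processes (one treating every missing message as a $0$, the other as a $1$), compares both to an ideal failure-free walk on a regularized graph, and shows that the total $\ell_1$ drift over $\tau_1$ rounds is at most $\tau_1 t/2$; hence at most $2\tau_1 t/\varepsilon$ nodes end the first phase outside $[\mu-\varepsilon,\mu+\varepsilon]$. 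Second---and this is the step entirely absent from your plan---the \textsc{FixOutliers} phase runs $\tau_2$ median rounds: one again extracts (via the core-subgraph lemma) a large set $C_0$ of active, already-converged nodes with the property that every member has a strict majority of its neighbors in $C_0$; median updates then force every node in $C_0$ to stay in $[\mu-\varepsilon,\mu+\varepsilon]$, while the complement $R_i=A\setminus C_i$ shrinks geometrically by the same edge-density argument, so after $\tau_2=O(\log n)$ rounds $R_{\tau_2}=\emptyset$ and every active node has converged.
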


\remove{
\begin{theorem*}[Theorem~\ref{thm:load-balancing} in Section~\ref{sec:load-balancing}]
Let $\varepsilon$ be a constant in $[\frac{2}{n}, 1]$, $G$ be $(d, \gamma)$-\wellconnected and let $t \le \alpha n$ denote the upper bound on the number of faulty nodes, where $\alpha \in [0, 1]$ is some absolute constant. 
The algorithm $\textsc{FaultTolerantLLB}$ executed on the graph $G$ guarantees the following:\\
\vspace{-3mm}\\
\noindent \textit{$(i)$} it terminates in $O\left(\tau + \log{n}\right)$ rounds using $O\left(d\tau |M|\right)$ communication bits per node, where $|M|$ is the size of \dk{machine word???} and $\tau = 4\gamma^2 \log{n}$,\\ 
\vspace{-3mm}\\ 
\noindent \textit{$(ii)$} there exists an absolute constant $\beta \ge 1$ and a set of nodes $A$ of size at least $n - \beta t$ such that for each $v \in A$, the returned pair $x(v), \texttt{type}$ satisfies $\texttt{type} = active$ and $x(v)$ is between the smallest and the largest input value,\\
\vspace{-3mm}\\ 
\noindent \textit{$(iii)$} if $t <\min\left( \frac{\alpha n}{ \left(2\tau / \varepsilon + 1 \right)}, \frac{n}{ 15 \gamma^2 \beta \left( \tau / \varepsilon + 1 \right)} \right)$, \dk{for some $\epsilon ???$,} then for every node $v \in A$ it holds additionally that  $x(v) \in [\mu - \varepsilon, \mu + \varepsilon]$, where $\mu$ is the mean of all the input values.
\end{theorem*}
}

\vspace*{-1ex}
The main technical advancement is in combining specific LLB formula in line~\ref{line:main-for-end}, applied $\tau_1=O(\log n)$ times in the main loop of the LLB Algorithm~\ref{alg:load-balancing}, with the fault-tolerant mechanism of fixing outliers that follows (see its description in Algorithm~\ref{alg:fixoutliers}).
The latter mechanism is checking in $\tau_2=O(\log n)$ rounds whether enough number of still ``non-suspected'' neighbors remain active -- positive check implicitly confirms that the preceding LLB procedure in $\tau_1$-hop neighborhood was executed by a large-enough non-suspected subgraph of the original communication graph and thus the returned value is a good estimate of the mean value.

We could apply our LLB algorithm on pre-determined or random graphs (i.e., on pre-determined or randomly selected input ports), 
to get deterministic or randomized version of LLB, respectively. 
Hence, it could be applied to improve performance of both deterministic and randomized solutions to various 
distributed computing 
problems. 
To demonstrate its efficiency, we 
apply 
it to solve
\underline{almost-everywhere counting} and \underline{consensus} under \underline{crash and omission failures}.

Applying the \textsc{FaultTolerantLLB} algorithm to the problem of almost-everywhere counting yields: 

\begin{theorem*}[Theorem~\ref{thm:counting} in Section~\ref{sec:applications}]
For any number of either crash or omission failures 
\dk{$t \le \frac{n}{\log{n}}$,} 
there is an algorithm that solves $(n-3t)$-almost-everywhere counting problem in $O(\log{n})$ rounds using $O(n\log^3{n})$ communication bits, whp.
\end{theorem*}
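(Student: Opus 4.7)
The plan is to reduce almost-everywhere counting directly to fault-tolerant LLB. Each process $v$ sets its input value $x_0(v) \in \{0,1\}$ equal to the indicator of its raised flag, so that the mean input value is $\mu = k/n$, where $k$ is the true count of raised flags. Each process $v$ in the active set $A$ returned by \textsc{FaultTolerantLLB} then outputs $n \cdot x(v)$ (rounded to the nearest integer) as its count estimate. By part $(iv)$ of Theorem~\ref{thm:load-balancing}, this estimate differs from $k$ by at most $\varepsilon n$.

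To apply \textsc{FaultTolerantLLB} in the fully connected message-passing system, a \wellconnected\ communication subgraph $G$ must first be made available to the nodes. I would have each node independently select a set of $O(\log n)$ ports via publicly agreed pseudorandom choices (using the fact that $n$ is known), yielding a random near-regular graph. A standard Chernoff-plus-union-bound argument over all vertex subsets of size at most $t$ shows that for $t \le n/\log n$, the resulting $G$ satisfies the $(d_{\min}, d_{\max})$-\wellconnected\ property of Definition~\ref{def:well-connected} whp, even after removal of the $t$ vertices that the adaptive adversary will eventually corrupt. Because the port choices depend only on $n$ and public bits, the adversary gains no additional advantage by seeing $G$ before selecting the failure pattern.

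Next, invoke \textsc{FaultTolerantLLB} on $G$ with parameter $\varepsilon = \Theta\!\left(\frac{t \log n}{n}\right)$. For $t \le n/\log n$ this $\varepsilon$ lies in $[0,1]$, and, since $\tau_1 = O(\log n)$, the hypothesis $t < \frac{\varepsilon}{3\tau_1}\, C' n$ of part $(iv)$ is satisfied with room to spare. Part $(iv)$ then yields $|x(v) - k/n| \le \varepsilon$ for every $v \in A$, i.e.\ $|n \cdot x(v) - k| \le \varepsilon n = O(t\log n) = \tilde O(t)$, the required accuracy. Part $(iii)$ supplies $|A| \ge n - \tfrac{3}{2} t \ge n - 3t$, yielding the $(n-3t)$-almost-everywhere guarantee. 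Round complexity is $O(\tau_1 + \tau_2) = O(\log n)$ by part~$(i)$. For communication, part~$(i)$ gives $O((\tau_1 + \tau_2)\,|M|) = O(\log^2 n)$ bits per port-round per node; with $O(\log n)$ ports the per-node cost is $O(\log^3 n)$, so the total is $O(n \log^3 n)$ bits.

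The main obstacle is the graph-construction step: the random port-selection must produce a $G$ that remains \wellconnected\ after \emph{adaptive} deletion of $t \le n/\log n$ vertices, using only locally-computable public randomness and bounded degree so that the communication budget stays at $O(n\log^3 n)$. Choosing degree $\Theta(\log n)$ is the minimal level that makes the Chernoff bounds on edge expansion survive a union bound over all $\binom{n}{\le t}$ adversarial choices. Once this construction is in place, everything else is a direct plug-in to the LLB theorem and routine arithmetic to verify the chosen $\varepsilon$ meets the hypotheses of parts $(iii)$ and $(iv)$.
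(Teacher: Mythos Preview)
Your reduction to \textsc{FaultTolerantLLB} is exactly the paper's approach: set $b_v$ to the flag indicator, run LLB with $\varepsilon = \Theta(t\log n / n)$, and read off parts $(iii)$ and $(iv)$ of Theorem~\ref{thm:load-balancing}. The accuracy and $|A|\ge n-3t$ calculations are correct.

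The graph-construction step, however, is both over-complicated and wrong as stated. You propose a Chernoff-plus-union-bound over all $\binom{n}{\le t}$ adversarial deletion sets to show that $G$ remains \wellconnected\ after removal of the $t$ faulty vertices. For $t = n/\log n$ there are roughly $\exp\big(\Theta(n\log\log n/\log n)\big)$ such sets, while Chernoff with degree $\Theta(\log n)$ yields only $n^{-c}$ failure probability per set, so the union bound fails by a super-polynomial margin. More importantly, this whole argument is unnecessary: Theorem~\ref{thm:load-balancing} is a \emph{deterministic} guarantee that already tolerates an arbitrary (adaptively chosen) set of $t$ failures, provided only that the \emph{original} graph $G$ is \wellconnected. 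You therefore only need to argue that the sampled $G$ is \wellconnected\ whp, with no reference to deletions. The paper does exactly this via the procedure \textsc{SetGraph} (Algorithm~\ref{alg:set-graph}): each node uses fresh local randomness to sample neighbors with probability $p(n)=\Theta\big(\log n(\log\log n)^2/n\big)$, symmetrizes in one round of dummy messages so that $G\sim G(n,2p-p^2)$, and then Lemma~\ref{cor:erdos-renyi-lap} (spectral concentration for Erd\H{o}s--R\'enyi graphs) gives $(d_{\min},d_{\max})$-\wellconnected ness whp. Once that event holds, the adversary's adaptivity is harmless because the LLB guarantees cover \emph{every} failure pattern of size at most $t$. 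Your ``publicly agreed pseudorandom choices'' are also a red herring: against an adaptive full-information adversary one wants fresh private randomness, and that is what \textsc{SetGraph} uses.
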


The new counting algorithm can be applied to solve the consensus problem for crash failures.

\begin{theorem*}[Theorem~\ref{thm:simple-cons-crashes} in Section~\ref{sec:app-cons-crashes}]
There exists an algorithm that solves consensus against 
$t<n/3$
crashes in $O(\sqrt{n}\log^{3/2}{n})$ rounds using 
$O\left(n^{3/2}\log^{5/2}(n)\left(\log\log{n}\right)^2\right)$communication bits, whp.
\end{theorem*}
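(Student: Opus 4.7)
The plan is to use the almost-everywhere counting primitive from Theorem~\ref{thm:counting} as the backbone of a phased randomized consensus algorithm. Since $t < n/3$, any value held by a supermajority of non-faulty processes is detectable by counting, so the main challenge is to randomly converge to such a supermajority configuration while respecting the sub-quadratic communication budget.

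First I would describe the protocol as $\Phi = \Theta(\sqrt{n\log n})$ phases, each invoking one run of the counting subroutine. In phase $i$, every process $v$ obtains from counting an estimate $c_v$ of the number of processes that currently hold value $1$; by Theorem~\ref{thm:counting}, all but at most $3t$ processes get an estimate within $\tilde O(t)$ of the true count. Each non-outlier process then updates its value: adopt the clear majority if the estimated gap exceeds a threshold $\Theta(\tilde O(t))$, otherwise toss a fresh local fair coin. Outliers are detected via the \texttt{type} output returned by the underlying LLB subroutine (item (iii) of the LLB theorem) and leave their value unchanged. After the $\Phi$ phases, one extra counting call is used to confirm a supermajority, at which point every non-outlier process commits; a short $O(\log n)$-round flooding step then propagates the decision to the residual $O(t)$ outliers, which is feasible in the fully connected network because each outlier is still adjacent to more than $n/3$ already-committed correct processes.

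Next I would verify the three required properties. Validity is immediate because neither counting nor the majority rule ever introduces a value outside the convex hull of the inputs (item (ii) of the LLB theorem). Agreement and termination follow from analyzing the stochastic process $(d_i)_{i\le\Phi}$, where $d_i$ is the signed gap between the number of $0$'s and $1$'s after phase $i$. When $|d_i|$ is small the coin-flipping regime makes $d_{i+1}$ perform (approximately) a $\sqrt{n}$-scale unbiased random walk, and when $|d_i|$ is large the majority rule reinforces the gap multiplicatively; a standard optional-stopping/martingale argument shows that the walk exits the undecided zone within $O(\sqrt{n\log n})$ steps with high probability, after which the confirmation-and-flooding step guarantees a single committed value across all non-faulty processes.

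The hardest part, I expect, will be tightly controlling the adaptive adversary's interference with the random coin flips: after observing each process's coin the adversary may crash up to its budget in a phase so as to bias the aggregate bit by as much as $\Theta(\sqrt{t})$, and one must show that this bias is not enough to prevent convergence in $\sqrt{n}\,\mathrm{polylog}(n)$ phases. This is where the $(\log\log n)^2$ factor will naturally appear—from iterating a Feige-type lightest-bin or similar shared-coin amplifier to drive the per-phase failure probability down sufficiently. Once convergence in $\Phi$ phases is established, the communication bound follows by multiplying the per-phase cost of $O(n\log^3 n)$ bits (Theorem~\ref{thm:counting}) by $\Phi$, plus the $(\log\log n)^2$ overhead from coin amplification, yielding the claimed $O\bigl(n^{3/2}\log^{5/2}(n)(\log\log n)^2\bigr)$ total bits in $O(\sqrt{n}\log^{3/2} n)$ rounds.
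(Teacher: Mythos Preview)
There is a genuine gap: you invoke Theorem~\ref{thm:counting} as a black box, but that theorem explicitly requires $t \le O(n/\log n)$, whereas here $t$ may be as large as $n/3$. With $t$ linear in $n$, the counting primitive offers no nontrivial accuracy guarantee (its additive error is $\tilde O(t)=\tilde O(n)$), so the threshold test ``adopt the clear majority if the estimated gap exceeds $\Theta(\tilde O(t))$'' becomes vacuous, and the whole phase logic collapses. The paper does \emph{not} reduce to counting; it calls \textsc{FaultTolerantLLB} directly and layers two additional mechanisms on top to cope with linear $t$. First, it fixes once and for all a random sparse expander $G^{\ast}$ and uses it in a short inner dissemination loop so that in every phase \emph{all} surviving ``active'' processes (a set of size $\ge n/4$, maintained via a core-subgraph argument on $G^{\ast}$) acquire an estimate produced by some process whose \texttt{lb\_status} is \textit{active}. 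Second, it introduces the notion of a \emph{safe} phase---one in which at most $\frac{1}{C_1}\sqrt{n/\log n}$ new crashes occur---and applies Theorem~\ref{thm:load-balancing} only to safe phases, where the effective number of faults visible to that phase's fresh LLB instance is tiny; pigeonhole over $C_1\sqrt{n\log n}$ phases against at most $n/3$ total crashes then supplies $\Omega(\sqrt{n\log n})$ pairs of consecutive safe phases, each giving a constant (actually $\Theta(\sqrt{\log n/n})$) probability of global agreement via the Bar-Joseph--Ben-Or anti-concentration argument. Your proposal has neither the $G^{\ast}$ dissemination step nor the safe-phase bookkeeping, and without them the per-phase estimate available to a generic active process can be arbitrarily bad once $\Theta(n)$ crashes have accumulated.

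Two smaller points. The $(\log\log n)^2$ factor in the communication bound does not come from any shared-coin amplifier; it is simply the degree $\Theta(\log n(\log\log n)^2)$ of the random graphs produced by \textsc{SetGraph} (this degree is forced by the spectral concentration needed in Lemma~\ref{cor:erdos-renyi-lap}). And your final arithmetic does not give the stated bound: $\Theta(\sqrt{n\log n})$ phases times $O(n\log^3 n)$ bits per phase is $O(n^{3/2}\log^{7/2} n)$, already exceeding $O(n^{3/2}\log^{5/2}(n)(\log\log n)^2)$ before any amplification overhead is added.
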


This result improves 
the communication complexity of the previous most efficient algorithm from~\cite{hajiaghayi2022improved}, Theorem 2, by multiplicative factor $O(\log^{3/2}{n})$. We note, however, that the algorithm in~\cite{hajiaghayi2022improved} achieves correctness with probability $1$ and has  round complexity lower by factor $O(\log^{1/2}{n})$. Nevertheless, the technique proposed in this paper is much simpler and also more versatile -- our model does not require knowing the identifiers, 
which by itself requires $\Omega(n^2)$ of total communication if not provided 
as the part of the code.

Finally, 
our technique of fault-tolerant LLB also provides a lightweight, simple and efficient consensus algorithm against omission failures, see the theorem below. 
Although our algorithm tolerates 
$O(\frac{n}{\log {n} (\log\log n)^2})$ 
faulty processes, the result holds against the adaptive, full-power adversary. Our algorithm is time-efficient, in the sense that it is polylogarithmically close to the lower bound $\Omega\left(\frac{t}{\sqrt{n\log n}}\right)$ by Bar-Joseph and Ben-Or~\cite{Bar-JosephB98}. 
Since  
$O\left(t^2 \log^2 n\left(\log\log{n}\right)^4 \allowbreak + \left(t\sqrt{n} + n\right)\log^{3}{n}\left(\log\log{n}\right)^2 \right)\subseteq O((t^2+n)\text{ polylog } n)$, the communication complexity of our algorithm is also polylogarithmically close to the lower bound $\Omega(t^2+n)$ for randomized algorithms, by Abraham et al.~\cite{AbrahamCDNPRS19}.
The best known time-efficient algorithmic solution, by Hajiaghayi et al.~\cite{hajiaghayi2024nearly}, requires $\Theta(n^2 \text{ polylog }n)$ communication bits, which is polynomially worse than ours for $t=O(n^{1-\epsilon})$.

\begin{theorem*}[Theorem~\ref{thm:omissions} in Section~\ref{sec:app-consensus-omissions}]
There exists an algorithm that solves consensus against 
$t < \frac{n}{C\log{n}\left(\log\log{n}\right)^2}$ omission failures in $O\left(\frac{t\log^2{n}}{\sqrt{n}} + \log^2{n} \right)$ rounds using $O\left(t^2 \log^2 n\left(\log\log{n}\right)^4 \allowbreak + \left(t\sqrt{n} + n\right)\log^{3}{n}\left(\log\log{n}\right)^2 \right)$ communication bits, for some constant $C$, whp.
\end{theorem*}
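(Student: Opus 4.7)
I would combine the fault-tolerant LLB primitive of Theorem~\ref{thm:load-balancing} with a Bar-Joseph--Ben-Or style shared-coin framework, both run on a sparse randomly sampled communication subgraph. First, every node independently picks $d=\Theta(\log n(\log\log n)^2)$ random ports at the very start of the execution; the resulting graph $G$ of degree $d$ is $(d_{\min},d_{\max})$-\wellconnected\ whp by a standard probabilistic argument, so every subsequent invocation of \textsc{FaultTolerantLLB} on $G$ enjoys the guarantees of Theorem~\ref{thm:load-balancing}, even though the adaptive adversary observes these random choices.

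The consensus protocol then proceeds in phases, each consisting of three steps: (i) every correct node $v$ tosses an independent $\pm 1$ vote $\chi_v^{(r)}$, (ii) it feeds this vote into \textsc{FaultTolerantLLB} on $G$ to obtain a local estimate $\bar x_v^{(r)}$ of the mean vote, and (iii) it updates its candidate consensus value based on the sign of $\bar x_v^{(r)}$ via a validity-preserving combining rule. With probability $\Omega(1/\sqrt{n\log n})$ per phase the empirical mean of the votes deviates from $0$ by more than $c/\sqrt{n\log n}$; by Theorem~\ref{thm:load-balancing}(iii)--(iv), whenever this happens and the parameter $\varepsilon$ from that theorem is set small enough, at least $n-\tfrac{3}{2}t$ active nodes observe the deviation with a common sign, and are thus driven to the same candidate. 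A Chernoff--Hoeffding bound over $\Theta(t\log n/\sqrt{n\log n})$ phases then produces enough ``successful'' phases to argue that all correct nodes agree whp; validity and termination follow from the update rule.

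The communication budget splits into two parts. Each of the $O(t\log^2 n/\sqrt n)$ LLB invocations costs $O(\log n)$ rounds and $O(\text{polylog}\,n)$ bits per node, so the LLB stages together contribute $O((t\sqrt n+n)\,\text{polylog}\,n)$ bits. A short cleanup stage at the end, where nodes directly exchange decisions with a trusted $O(t)$-sized committee identified during the preceding LLB rounds, accounts for the $O(t^2\,\text{polylog}\,n)$ term in the final bound. The main obstacle, where most of the care will go, is coupling the adaptive adversary's choice of omissions to the random votes within each phase: the adversary sees each $\chi_v^{(r)}$ before deciding which messages to drop inside the LLB, and can therefore try to bias $\bar x_v^{(r)}$. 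The key fact for controlling this is Theorem~\ref{thm:load-balancing}(iv), which caps any adversarial skew at $O(\varepsilon)$; carefully calibrating the vote scaling so that the typical strong-coin signal still dominates $\varepsilon$, and then union-bounding over all phases, will be the most delicate step in making the numerology match the stated bounds on $t$ and on communication.
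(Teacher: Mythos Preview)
Your plan has a genuine gap at the point where you invoke Theorem~\ref{thm:load-balancing}(iv). To read off a common sign from $\bar x_v^{(r)}$ you need $\varepsilon$ of order $1/\sqrt{n}$ (that is the scale of the strong-coin deviation you are trying to detect). But condition~(iv) of Theorem~\ref{thm:load-balancing} demands $t < \frac{\varepsilon}{3\tau_1}\cdot C' n$ with $\tau_1=\Theta(\log n)$, i.e.\ the number of faulty nodes participating in \emph{that} LLB call must be $O(\sqrt{n}/\log n)$. The theorem you are proving allows $t$ as large as $\Theta\!\big(n/(\log n(\log\log n)^2)\big)$, which is far above this threshold, so with a single fixed graph and all $t$ omission-faulty processes active in every phase, \textsc{FaultTolerantLLB} simply does not deliver the $O(1/\sqrt{n})$ accuracy your coin argument needs. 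Your sentence ``set $\varepsilon$ small enough'' hides exactly the constraint that breaks.

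The paper closes this gap with an idea your plan is missing: any process that observes even one missing message during an LLB call flags itself as \emph{suspected} and withdraws from all future iterations, and a \emph{fresh} random graph is sampled in each iteration on the currently active set. Among the still-active processes there are then no omissions at all; the only ``faults'' LLB sees in iteration~$i$ are the processes that become suspected during that iteration. Since each of the $t$ truly faulty processes has at most $d_{\max}=O(\log n(\log\log n)^2)$ neighbours per iteration, at most $O(t\,d_{\max})$ processes ever become suspected, and a pigeonhole over the $\Theta(t\log n/\sqrt{n})$ iterations yields many \emph{safe} iterations in which at most $O(\sqrt{n}/\log n)$ new suspicions occur --- exactly the regime where Theorem~\ref{thm:load-balancing}(iv) applies with $\varepsilon=\Theta(1/\sqrt{n})$. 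Two consecutive safe iterations then force agreement among active processes with constant probability, as in Bar-Joseph--Ben-Or. The $t^2\operatorname{polylog} n$ term is not a ``trusted committee'' but the final round in which each of the $O(t\,d_{\max})$ suspected processes directly queries $O(t\,d_{\max})+1$ others, guaranteeing by pigeonhole that it reaches a correct active process.
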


\subsection*{Related work}

A problem in which \textit{all} processes calculate a function of certain parameters that are held at individual nodes is closely related to the local load balancing problem considered in this paper. For example, when all nodes calculate the average of these initial values, they are said to reach average consensus. Average consensus and, more generally, distributed function calculation, especially based on diffusion algorithms, have received a tremendous amount of attention by many communities, including the control community, multi-agent systems, signal-processing and sensor networks community, as well as theoretical distributed computing~\cite{erseghe2011fast, olfati2004consensus, pilloni2015consensus, sardellitti2009fast, tu2012diffusion, yang2023adaptive}. Importantly, usually in this line of work, processes are assumed to be \textit{always} correct and the emphasis is usually on: the pace of asymptotic convergence with respect to network's topology~\cite{kashyap2007quantized, erseghe2011fast, sardellitti2009fast}, and the issues of finite time completion~\cite{sundaram2007finite, wang2010finite} or the quantized transmissions~\cite{kashyap2007quantized}. Reference~\cite{hadjicostis2018distributed} discusses several applications of distributed average consensus. 

Some important derivations of the average consensus problem considered variations where processes are allowed to exhibit errors~\cite{hadjicostis2022trustworthy}. Nevertheless, such settings usually divert from the classical crash/omission/Byzantine failure model by that faulty processes can only manipulate received values, not the communication topology.

Load-balancing can be understood as a problem where parties are given a potentially uneven assignment of “work” which they must then attempt to spread evenly. \cite{DinitzFGN17, AmelinaFJV15, KowalskiM21} studied this problem when the underlying topology has a stochastic nature and derive asymptomatic convergence based on the distribution of edges in the topology graph. In~\cite{KowalskiM20}, a deterministic load balancing was used as part of counting in dynamic networks, however there was an additional assumption that there is a leader in the network -- such an assumption does not stand if faulty processes are considered (which is the case in our work). 

We apply LLB to improve efficiency of consensus. 
Consensus problem under crash failures has been widely studied. The best deterministic solution uses time $O(t)$ and $O(n+t\log n)$ communication bits~\cite{ChlebusKO23}, while the corresponding lower bounds $\Omega(t)$ and $\Omega(n)$ were proved, respectively, in~\cite{FischerL82} and~\cite{AmdurWH92}.
Randomized consensus for crash failures against a powerful adaptive adversary is less studied, and the recent result in \cite{hajiaghayi2022improved} suggests that there could be a trade-off between time complexity, which is $\Omega\left(\frac{t}{\sqrt{n\log n}}\right)$ \cite{Bar-JosephB98}, and communication complexity, which for linear-time solution could be very small (see the deterministic case) but for fast solutions the best known is $O(n^{3/2}\text{polylog }n)$ \cite{hajiaghayi2022improved}. 
The abovementioned lower bounds for time complexities naturally extend to more severe omission failures, while a stronger lower bound on communication $\Omega(t^2+n)$ was proved by Abraham et al.~\cite{AbrahamCDNPRS19}.

Other than consensus, similar set of techniques and approaches has been applied to gossip - a problem where all processes start with an initial value and over time all correct processes must learn initial value of other correct processes (but can arbitrarily differ on knowledge of incorrect processes)~\cite{alistarh2010efficient, aysal2009broadcast, cai2011quantized, dimakis2010gossip}. 

Another derivative of consensus is \textit{counting} -- a problem in which the correct parties' goal is to give an estimate of their number in finite time. \cite{ChatterjeePR22} studied Byzantine resilient counting without assuming any approximate knowledge of the number of processes in the system.  Counting in networks with weaker fault models was also studied. The work of~\cite{KowalskiM21} considers the case when faults are oblivious (i.e., set up before the algorithm could mitigate them). More restricted scenario, such as presence of a leader or a small number of failures, was studied in~\cite{KowalskiM20,ChandranCGGOZ15,ChatterjeePR22}.
The counting problem under process crashes was studied in~\cite{hajiaghayi2022improved}.

In this context, and to the best of our knowledge, ours is the first paper studying theoretically the problem of local load balancing, in the averaging sense, against an \textit{adaptive, full-information} adversary causing processes' crashes and/or message omissions. Rather than on the stochastic nature of the problem, we focus on combinatorial properties of graphs that make the fault-tolerant averaging / load-balancing possible. We also do not assume any  additional hints to correct processes regarding the state of the faulty ones.

\vspace*{-2ex}
\paragraph{Organization of the paper.}
Section~\ref{sec:preliminaries} contains necessary background and notation from the combinatorial, spectral and random graph theory, and selected fault-tolerant properties of graphs. Section~\ref{sec:load-balancing} presents the main result -- local load balancing algorithm $\textsc{FaultTolerantLLB}$ and its analysis. 
Applications to counting, and to consensus problems tolerating crash and omission failures, are given in Sections~\ref{sec:applications},~\ref{sec:app-cons-crashes} and~\ref{sec:app-consensus-omissions}, resp. 
Omitted technical proofs from Sections~\ref{sec:preliminaries},~\ref{sec:load-balancing} and~\ref{sec:applications} are provided in Sections~\ref{sec:proofs-preliminaries},~\ref{sec:appendix-lb} and~\ref{sec:app-consensus-crash-analysis},
resp.

\section{Preliminaries}
\label{sec:preliminaries}







\paragraph{\bf Graph notation.} 
We introduce basic definitions from graph theory and spectral graph theory. Let $G=(V,E)$ denote an undirected graph. We use $N_{G}(v)$ to denote the neighbors of $v$ in $G$ and $deg(v)$ to indicate the degree of $v$.
Let $W\subseteq V$ be a set of nodes in~$G$.
We say that an edge $(v,w)$ of $G$ is \emph{internal for~$W$} if $v$ and~$w$ are both in~$W$. We denote $E(W)$, the number of internal edges for $W$.
We say that an edge $(v,w)$ of $G$ \emph{connects the sets~$W_1$ and $W_2$},
or \emph{is between $W_1$ and $W_2$}, for any disjoint subsets $W_1$ and~$W_2$ of~$V$, if one of its ends is in~$W_1$ and the other in~$W_2$.
We denote $\partial(W_1)$, the set of edges between $W_1$ and $V \setminus W_1$. We denote by $vol(W_1)$ the sum of degrees of all vertices $W_1$. 

\vspace*{-2ex}
\paragraph{\bf Spectral graph theory.}
We define the adjacency matrix $A(G)$ (in general, we omit the reference $(G)$ in the matrix notation if $G$ is known from the context) as a $|V| \times |V|$ matrix with $A_{i,j} = 1$ if there is an edge between $i$ and $j$, $i \neq j \in V$, and with $A_{i, j} = 0$ otherwise. We denote $D(G)$, the diagonal matrix with the degrees of vertices of $G$ written on the main diagonal. 
For a diagonal matrix $X$ with only positive entries, we define matrix $X^{-1/2}$ as a diagonal matrix with inverse square roots of the corresponding diagonal entries of $X$ on its diagonal (and zeros outside the diagonal). 
The normalized adjacency matrix is then defined as 
$\cA = D^{-1/2} A D^{-1/2}$.
The normalized Laplacian matrix of the graph $G$ is defined as 
$\cL = I - \cA$,
where $I$ is the identity matrix.

The eigenvalues of the matrix $\cL$ are closely related to combinatorial properties of $G$. In fact, $\cL$ has $n$ real eigenvalues. We use $\lambda_{1} \le \lambda_{2} \le \ldots \le \lambda_{n}$ to denote the eigenvalues of $\cL$. It is well-known that eigenvalues of the normalized Laplacian matrix are non-negative and do not exceed, c.f. Section 4.3.1 in~\cite{hoory2006expander}.

\vspace*{-1ex}
\paragraph{\bf Fault-tolerant properties of graphs.}

The connectivity properties of $G$ are closely related with the magnitude of the second smallest eigenvalue of $\cL$. 

\begin{lemma}\label{cor:edge-density}
For any graph $G$ with second eigenvalue $\lambda_2$ and any subset of nodes $W$,
\[
E(W) \le \frac{1}{2}\cdot vol(W)\left(1 - \lambda_{2}\cdot\left(1 - \frac{vol(W)}{vol(G)} \right) \right)
\ .
\]
\end{lemma}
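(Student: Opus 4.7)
The plan is to prove this via a standard Rayleigh-quotient / spectral decomposition argument, which is essentially the expander-mixing lemma rewritten in normalized-Laplacian form. First I will identify the quantity $E(W)$ with a quadratic form in $\mathcal{A}$. Concretely, let $\mathbf{1}_W$ be the indicator vector of $W$ and set $\mathbf{x} = D^{1/2}\mathbf{1}_W$; then $\|\mathbf{x}\|^2 = vol(W)$ and
\[
\mathbf{x}^{\top}\mathcal{A}\,\mathbf{x} \;=\; \mathbf{1}_W^{\top} A\, \mathbf{1}_W \;=\; 2\, E(W)\ .
\]

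Next I would use the spectrum of $\mathcal{A} = I - \mathcal{L}$. The vector $\mathbf{d}^{1/2}$, with $v$-th entry $\sqrt{\deg(v)}$, is an eigenvector of $\mathcal{A}$ with eigenvalue $1$ (equivalently, the eigenvector of $\mathcal{L}$ corresponding to $\lambda_1 = 0$), and $\|\mathbf{d}^{1/2}\|^2 = vol(G)$. Decompose
\[
\mathbf{x} \;=\; \alpha\, \frac{\mathbf{d}^{1/2}}{\|\mathbf{d}^{1/2}\|} + \mathbf{y}, \qquad \mathbf{y}\perp \mathbf{d}^{1/2}\ ,
\]
and compute $\alpha = \langle \mathbf{x}, \mathbf{d}^{1/2}\rangle / \|\mathbf{d}^{1/2}\| = vol(W)/\sqrt{vol(G)}$, hence $\alpha^2 = vol(W)^2/vol(G)$ and $\|\mathbf{y}\|^2 = vol(W) - vol(W)^2/vol(G)$. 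Since $\mathbf{y}$ lies in the orthogonal complement of the top eigenspace of $\mathcal{A}$, the Courant–Fischer / min-max characterization yields $\mathbf{y}^{\top}\mathcal{A}\,\mathbf{y} \le (1-\lambda_2)\|\mathbf{y}\|^2$.

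Combining these two observations, and using orthogonality of the decomposition together with the fact that $\mathbf{d}^{1/2}$ is an eigenvector (so the cross terms vanish), gives
\[
2\,E(W) \;=\; \mathbf{x}^{\top}\mathcal{A}\,\mathbf{x} \;\le\; \alpha^2 + (1-\lambda_2)\|\mathbf{y}\|^2 \;=\; \frac{vol(W)^2}{vol(G)} + (1-\lambda_2)\Bigl(vol(W) - \frac{vol(W)^2}{vol(G)}\Bigr)\ .
\]
A short algebraic simplification then rewrites the right-hand side as $vol(W)\bigl(1 - \lambda_2(1 - vol(W)/vol(G))\bigr)$, and dividing by $2$ yields the claimed bound.

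There is no real obstacle: the only subtle point is remembering to normalize via $D^{1/2}$ so that the Rayleigh quotient of $\mathcal{A}$ produces $E(W)$ and so that $\mathbf{d}^{1/2}$ (not $\mathbf{1}$) is the relevant top eigenvector. Everything else is a routine spectral projection plus algebraic tidying.
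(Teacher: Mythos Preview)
Your proof is correct. The paper takes a slightly different, shorter route: it quotes a known inequality of Chung (Lemma~1 in \cite{chung2016generalized}),
\[
\frac{|\partial(W)|}{vol(W)} \;\ge\; \lambda_2\Bigl(1 - \frac{vol(W)}{vol(G)}\Bigr),
\]
and then plugs it into the identity $E(W) = \tfrac{1}{2}\bigl(vol(W) - |\partial(W)|\bigr)$ to obtain the stated bound. Your Rayleigh-quotient / spectral-projection computation is precisely the argument that proves Chung's inequality, recast directly in terms of $E(W)$ rather than $|\partial(W)|$. So the two proofs are essentially the same mathematics packaged differently: the paper black-boxes the spectral step via a citation and does a one-line rearrangement, whereas your version is self-contained and derives the bound from scratch.
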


For this reason, we will use the following definition of a \wellconnected graph.
\begin{definition}
\label{def:well-connected}
A graph $G$ is called $(d_{\min}, d_{\max})$-\textit{\wellconnected} 
if and only if\\ 
\noindent $(i)$ the degree of any vertex $v$ of $G$, satisfies $deg(v) \in [d_{\min}, d_{\max}]$, and\\ 
\noindent $(ii)$ $\jo{\lambda_2 \ge 1 - \frac{1}{10\log\log{n}}}$, where $n$ is the number of vertices of $G$.
\end{definition}

An example of a $(d,d)$-\wellconnected{} graph is an $(n, d, \lambda)$-expander, for $d = \omega_{n}(1), \lambda = o(\log\log{d})$.
Recall that an $(n, d, \lambda)$-expander is a $d$-regular graph of $n$ nodes and of the second eigenvalue of the adjacency matrix equal to $\lambda$.
\jo{ We note that however similar in flavor, the definition of a $(d_{\min}, d_{\max})$-\wellconnected{} graph is broader, as it allows irregular graphs. The tolerance of the irregularity in the degrees of vertices will become crucial in later applications.}

In the case of expander graphs, it was observed
in~\cite{upfal1992tolerating}, that removing certain linear fraction of nodes of an expander graph leaves a densely connected ``core subgraph'' in the remaining part of the graph. Using analogical reasoning, we recover this observation for the class of \wellconnected{}~graphs.


\vspace*{-2ex}
\paragraph{Formal construction of core subgraph.} Let $G$ be a $(d_{\min}, d_{\max})$-\wellconnected{} graph, and let us consider the following scenario. Let $F$ be a given set of vertices to be removed from $G$. In later applications, the set $F$ will often coincide with processes that are flawed for the purpose of our algorithms. 
Consider an inductive construction of a sequence of sets $W_0, W_1, \ldots$ which starts with the set $W_0 := F$ and in every step $i \ge 1$, $W_i$ is formed by enlarging $W_{i-1}$ by any vertex from $G$ that has less than $\phi d_{\min}$ neighbors in $G \setminus W_{i - 1}$. Since $W_0 \subseteq W_1 \subseteq W_2 \ldots$ and there are $n$ vertices, the sequence of sets has a fixed point, which we denote~$W_{k}$. Let $G'$ be a graph induced by $V \setminus W_{k}$. We note that every vertex in $G'$  has a degree~at~least~$\phi d_{\min}$.

\begin{lemma}\label{lem:high-degree-core}
%
Let $G = (V, E)$ be a $(d_{\min}, d_{\max})$-\wellconnected~graph, $F$ be a subset of $V$ such that $|F| < \alpha |V|$, for $\alpha \in (0, 1)$, and furthermore assume that the inequality relates $\alpha$ and $\phi$:
$    \alpha < (1-\phi)\frac{40}{27}\frac{d^2_{\min}}{d^2_{\max}} - \frac{2}{9}\frac{d_{\min}}{d_{\max}}$.
%
%
Then the size of the set $W_{k}$ in the above construction of the core subgraph is at most $\ceil{\frac{3}{2} |F|}$.
%
\end{lemma}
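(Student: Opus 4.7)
The plan is to proceed by contradiction: suppose $|W_k| > \lceil \tfrac{3}{2}|F| \rceil$, and write $s = |W_k|$, $f = |F|$, $m = s - f$, so that $m > f/2$. I will extract two complementary inequalities---one combinatorial (from the construction of $W_k$) and one spectral (from Lemma~\ref{cor:edge-density})---and show that together with the hypothesis on $\alpha$ they are inconsistent.

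First I would unwind the construction to obtain two combinatorial bounds. Every vertex $v \in W_k \setminus F$ was added at some step $i$ with $|N(v) \cap (V \setminus W_{i-1})| < \phi d_{\min}$; since $W_{i-1} \subseteq W_k$, this yields $|N(v) \cap (V \setminus W_k)| < \phi d_{\min}$ and hence $|N(v) \cap W_k| \geq (1-\phi) d_{\min}$. Summing over the $m$ added vertices gives
\[
2 E(W_k) \geq m(1-\phi) d_{\min}, \qquad \partial(W_k) \leq f \cdot d_{\max} + m \cdot \phi d_{\min}.
\]

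Next I would apply Lemma~\ref{cor:edge-density} to $W_k$ to upper-bound $E(W_k)$, and, using the identity $\partial(W_k) = vol(W_k) - 2 E(W_k)$, derive the dual lower bound $\partial(W_k) \geq \lambda_2 \cdot vol(W_k) (1 - vol(W_k)/vol(G))$. Combining with the degree bounds $vol(W_k) \in [(f+m) d_{\min}, (f+m) d_{\max}]$, $vol(G) \in [n d_{\min}, n d_{\max}]$, and the well-connectedness assumption $\lambda_2 \geq 1 - 1/(10 \log\log n)$ (whose residual contribution is a lower-order term to be absorbed at the end), the combinatorial and spectral bounds merge, with $\rho := d_{\min}/d_{\max}$, into two inequalities of the schematic form
\[
(f+m)^2 \geq n m (1-\phi)\rho^2 - o(n^2), \qquad (f+m)\Bigl(1 - \tfrac{f+m}{n\rho}\Bigr) \leq \tfrac{f}{\rho} + m \phi + o(n).
\]

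Finally, substituting $f = \alpha n$ and parametrizing $u := s/(n\rho)$, the two inequalities rearrange into lower-bound conditions on $\alpha$ expressed in terms of $u$, $\phi$, and $\rho$. The contradiction hypothesis $m > f/2$ forces $u > 3\alpha/(2\rho)$, and balancing the two inequalities at this extremal value---with the edge bound supplying the quadratic-in-$\rho$ term and the boundary bound contributing the linear correction---recovers precisely the threshold $(1-\phi)\tfrac{40}{27}\rho^2 - \tfrac{2}{9}\rho$ appearing in the hypothesis. Hence any $\alpha$ below that threshold is incompatible with $m > f/2$, yielding the contradiction. The main obstacle will be this balancing step: neither spectral bound alone suffices, and matching the specific constants $\tfrac{40}{27}$ and $\tfrac{2}{9}$ requires the simultaneous use of both inequalities, together with a careful verification that the $O(1/\log\log n)$ correction from $1-\lambda_2$ and the degree-ratio mismatch $d_{\min} \neq d_{\max}$ are absorbed without loss for $n$ at least an absolute constant.
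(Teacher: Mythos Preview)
Your approach has a genuine gap. You work with $W_k$ itself, but $|W_k|$ is only known to \emph{exceed} $\lceil\tfrac{3}{2}|F|\rceil$; it could be as large as $n$. For large $|W_k|$ the spectral upper bound on $E(W_k)$ grows like $|W_k|^2/n$ and the spectral lower bound on $\partial(W_k)$ vanishes, so neither of your schematic inequalities is violated and no contradiction follows. Your plan to ``balance at the extremal value $u=3\alpha/(2\rho)$'' amounts to checking the inequalities only at $s=\tfrac{3}{2}f$, but that is merely a \emph{lower bound} on $|W_k|$, not its actual size; nothing in your outline rules out $|W_k|$ lying above the forbidden interval that the quadratic-in-$s$ inequalities carve out.

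The paper fixes this with an observation you are missing: since the construction adds vertices one at a time, if $|W_k|>\lceil\tfrac{3}{2}|F|\rceil$ then there is an intermediate index $i$ with $|W_i|=\lceil\tfrac{3}{2}|F|\rceil$ \emph{exactly}. The bounds are then applied to $W_i$, whose size is pinned down, and a single comparison --- the combinatorial lower bound versus the spectral upper bound on $E(W_i)$ --- already gives the contradiction. Two further points: (i) your edge lower bound $2E(W_k)\ge m(1-\phi)d_{\min}$ is a factor of $2$ weaker than the paper's $E(W_i)\ge m(1-\phi)d_{\min}$, which counts the \emph{new} internal edges contributed at each addition step (these are pairwise distinct, so no halving is needed); (ii) your ``edge'' and ``boundary'' inequalities are not independent --- via $vol=2E+\partial$ they collapse to the same constraint in the regular case and to nearly the same one otherwise --- so the claim that both are required to recover the constants $\tfrac{40}{27}$ and $\tfrac{2}{9}$ does not hold. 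The paper uses only the edge bound (Lemma~\ref{cor:edge-density}).
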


\vspace*{-1ex}
\paragraph{\bf Random graphs.} 
We use the Erdős–Rényi model of random graphs. In this model, each edge appears independently with probability $p$. The distribution of graphs with $n$ vertices in the Erdős–Rényi model is denoted $G(n,p)$. 

\begin{lemma}\label{cor:erdos-renyi-lap}
Let $G$ be a random graph drawn from $G(n,p)$, for some parameter $p$. There exists a constant $C$ such that if $p \ge \jo{C}\cdot \frac{\log{n}(\log\log{n})^{2}}{n - 1}$, then the following holds with probability at least $1 - \frac{1}{n^2}$: \\
(a) degree of every vertex of $G$ is in the interval $\left[p\cdot (n-1)\left(1 - \frac{1}{20\log\log{n}}\right), p\cdot (n-1)\left(1 + \frac{1}{20\log\log{n}}\right)\right]$,\\ 
(b) \jo{ $\lambda_2(G) \ge 1 - \frac{1}{10 \log\log{n}}$ }. 

\end{lemma}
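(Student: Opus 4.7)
The plan is to treat parts (a) and (b) separately, using scalar concentration for degrees and matrix concentration for the spectrum.

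For part (a), observe that $\deg(v)$ is distributed as $\mathrm{Binomial}(n-1, p)$ with mean $d := p(n-1)$. Apply the multiplicative Chernoff bound with deviation parameter $\varepsilon = \frac{1}{20\log\log n}$:
\[
\Pr\!\left[|\deg(v) - d| > \varepsilon d\right] \le 2\exp\!\left(-\tfrac{\varepsilon^{2} d}{3}\right).
\]
Under the hypothesis $d \ge C \log n (\log\log n)^{2}$, one gets $\varepsilon^2 d \ge \frac{C \log n}{400}$, so for $C$ a sufficiently large constant the right-hand side is at most $n^{-4}$. A union bound over the $n$ vertices then yields the bound $n^{-3}$, which is absorbed in the claimed $1 - n^{-2}$.

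For part (b), the plan is to exploit the rank-2 structure of $\mathbb{E}[A] = p(J - I)$ (where $J$ is the all-ones matrix) together with a matrix concentration inequality applied to $R := A - \mathbb{E}[A]$. The matrix $R$ is a sum of $\binom{n}{2}$ independent, zero-mean, symmetric rank-2 matrices (one per potential edge) with uniformly bounded entries, so by the Matrix Bernstein inequality one obtains $\|R\|_{op} = O(\sqrt{d \log n})$ with probability at least $1 - n^{-2}$. On the other hand, the event from (a) implies $\|D - dI\|_{op} \le \varepsilon d$, and in particular $D^{-1/2}$ is well-defined with $\|D^{-1/2}\|_{op} \le (1-\varepsilon)^{-1/2} d^{-1/2}$. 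Decompose
\[
\mathcal{A} \;=\; D^{-1/2} A D^{-1/2} \;=\; D^{-1/2} \mathbb{E}[A]\, D^{-1/2} \;+\; D^{-1/2} R\, D^{-1/2}.
\]
The first ``signal'' term is close to $\frac{1}{d}\mathbb{E}[A] = \frac{p}{d}(J - I)$, whose top eigenvector is (up to the perturbation from $D$ vs.\ $dI$) almost parallel to $\mathbf{1}$ with eigenvalue near $1$, while every direction orthogonal to $\mathbf{1}$ gets eigenvalue of magnitude $O(1/n)$. The ``noise'' term has operator norm at most $\|R\|_{op}/((1-\varepsilon)d) = O(\sqrt{\log n/d})$. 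By Weyl's inequality, the second-largest eigenvalue of $\mathcal{A}$ (in absolute value) satisfies
\[
|\mu_2(\mathcal{A})| \;=\; O\!\left(\sqrt{\tfrac{\log n}{d}}\right) + O(1/n).
\]
With $d \ge C \log n (\log\log n)^2$, this is at most $\frac{1}{10\log\log n}$ for $C$ a sufficiently large constant. Since $\lambda_2(\mathcal{L}) = 1 - \mu_2(\mathcal{A})$, this proves the claim in (b).

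The main technical obstacle is the clean separation of the ``signal'' (top eigenspace spanned by $\mathbf{1}$, where $\mathbb{E}[A]$ concentrates the bulk of its mass) from the ``noise'' after normalization by $D^{-1/2}$: the normalization tilts the top eigenvector of $\mathcal{A}$ away from $\mathbf{1}$ by an amount controlled by $\varepsilon$, and one must verify that restricting to the orthogonal complement of the \emph{true} top eigenvector of $\mathcal{A}$ (rather than of $\mathbb{E}[A]$) still gives the $O(\sqrt{\log n/d})$ bound. This is a standard Davis--Kahan / Weyl-type perturbation argument using that the spectral gap of the signal term is of order $1 - O(1/n)$, which is far larger than the noise, so the perturbation does not meaningfully degrade the bound on $\mu_2(\mathcal{A})$.
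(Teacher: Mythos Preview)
Your proposal is correct. Part (a) matches the paper's argument exactly (Chernoff on binomial degrees, then union bound). For part (b) the two routes differ in presentation but not in substance: the paper invokes a black-box concentration result of Chung (2011) which directly bounds $|\lambda_j(L)-\lambda_j(\bar L)|$ by $O\bigl(\sqrt{\log n/\delta}\bigr)$, then computes the eigenvalues of $\bar L = I - \bar D^{-1/2}\bar A\,\bar D^{-1/2}$ explicitly (one eigenvalue at $\tfrac{1}{n-1}$, the rest at $1+\tfrac{1}{n-1}$). You instead derive the same $O\bigl(\sqrt{\log n/d}\bigr)$ bound from scratch via Matrix Bernstein on $A-\mathbb{E}[A]$ and then push the normalization by $D^{-1/2}$ through by hand. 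Since Chung's theorem is itself proved by essentially the matrix-concentration-plus-perturbation argument you outline, the two approaches are really the same mathematics at different levels of packaging; yours is more self-contained, the paper's is shorter given the citation. One minor remark: the Davis--Kahan step you flag as the ``main technical obstacle'' is not actually needed---Weyl's inequality applied to $\mathcal{A} = S + N$ already gives $\mu_2(\mathcal{A}) \le \mu_2(S) + \|N\|_{\mathrm{op}}$, and since the signal $S$ is a rank-one matrix plus an $O(1/n)$ diagonal correction, $\mu_2(S) = O(1/n)$ directly, with no need to track the top eigenvector.
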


\section{Fault-tolerant local load balancing}
\label{sec:load-balancing}

We present a 
novel deterministic fault-tolerant
local load balancing algorithm, \textsc{FaultTolerantLLB}, which pseudocode is given in Algorithm~\ref{alg:load-balancing}. We prove that it can converge in a distributed system in which processes are prone to an arbitrary pattern of adaptive failures. Recall that $n$ denotes the number of processes in the distributed system and $t < n/10$ \footnote{We do not aim here at optimizing the linear factor in the upper bound on the number of faulty nodes.} denotes an upper bound on the number of erroneous processes.

\IncMargin{1.5em} 
\vspace*{-1ex}
\begin{algorithm}
\LinesNumbered
\SetKwInOut{Input}{input}
\Input{$(d_{\min}, d_{\max})$-\wellconnected{} graph $G$, $v$, $b_{v}$}
$x_{0}(v) \leftarrow b_v$\;
\For{$i \leftarrow 1 \text{ to } \tau_1 \leftarrow 32d^2_{\max} / d^2_{\min} \log{n}$\label{line:main-for-start}} {
send $x_{i - 1}(v)$ to every vertex in $N_{G}(v)$\;
let $N_{i}(v)$ be the set of vertices from which $v$ received a message in the previous round\;
$x_{i}(v) \leftarrow \left( \sum_{u \in N_{i}(v)} \frac{1}{2\dmax}x_{i - 1}(u) \right) + \frac{2\dmax - |N_{i}(v)|}{2\dmax} x_{i - 1}(v)$\;\label{line:main-for-end}
}
$x(v), \texttt{type} \leftarrow$ \textsc{FixOutliers}$(G, v, x_{\tau_1}(v))$\;
\Return{$x(v), \texttt{type}$}
\caption{\textsc{FaultTolerantLLB}
\label{alg:load-balancing}
}
\end{algorithm}

\vspace*{-3ex}
\begin{algorithm}
\SetKwInOut{Input}{input}
\LinesNumbered
\Input{$(d_{\min}, d_{\max})$-\wellconnected{} graph $G$, $v, x_0(v)$}
$\texttt{type} \leftarrow active$\;
\For{$i \leftarrow 1 \text{ to } \tau_2 \leftarrow  \log{n} / \log\left(\frac{34}{15} - \frac{4d_{\min}}{3d_{\max}}\right)$} {
send $x_{i - 1}(v)$ to every vertex in $N_{G_1}(v)$\;
let $N_{i}(v)$ be the set of vertices from which $v$ received a message in the previous round\;
\label{line:many-neighbors}\If{$|N_{i}(v)| < \frac{2}{3}d_{\min}$\label{line:if-silent}} {
$\texttt{type} \leftarrow silent$\label{line:silent}\; 
$v$ stops communication until the loop finishes\;
}
\Else {
$x_{i}(v) \leftarrow \text{median}\{ x_{i-1}(u) : u \in N_{i}(v)$\}\;
}
}
\Return($x_{\log_{14/15}{n} + 1}(v), \texttt{type}$)
\caption{\textsc{FixOutliers}\label{alg:fixoutliers}}
\end{algorithm}
The algorithm takes as an input a $(d_{\min}, d_{\max})$-\wellconnected{} graph $G$ and the input load of a process that executes the algorithm. We identify the processes with the vertices of $G$. For the ease of presentation and analysis, we assume that the knowledge of $G$ given to a node consists of a representation of the entire graph, the parameters $(d_{\min}, d_{\max})$, and the identifier of the vertex $v$ executing this algorithm. 
However, the minimal required knowledge for $v$ must contain only links connecting $v$ to its neighbors (equivalent to edges adjacent to $v$ in $G$) and the parameters $(d_{\min}, d_{\max})$. If the network is anonymous, the process executing the algorithm does not even need to know its identifier. 

The load-balancing part of the algorithm is executed in lines~\ref{line:main-for-start}-\ref{line:main-for-end}. In these lines, we implement a natural counterpart of the classical load balancing process, but adjusted to the presence of faulty nodes. 
This part of the algorithm takes $\tau_1 = 32d^2_{\max} / d^2_{\min} \log{n}$ rounds. This is expected, as it roughly corresponds to the number of iterations a classical error-free random walk would need to converge, in $\| \cdot \|_{1}$ norm, within $\frac{1}{n}$ radius to its limit distribution, if executed on a regularized version of graph $G$.  
The load of a vertex $v$ after round $i \in [\tau_1]$ is stored in variable $x_{i}(v)$; the value $x_{0}(v)$ corresponds to the initial load of the node $v$.

\subsection{Analysis of the main loop of the load balancing algorithm}

We define $x_i$, for $0 \le i \le \tau_1$, as the vector of loads of the nodes after round $i$ (following the above convention, for $i = 0$ the vector $x_{i}$ has entries corresponding to the input values). To analyze the changes in the vectors $\left(x_{i}\right)_{0 \le I \le \tau_1}$ over consecutive rounds, we define three alternatives, which abstract runs of the main loop of the algorithm under different treatment of loads of the faulty nodes. 

\begin{definition}[Ideal load balancing]
The ideal load balancing corresponds to the run of the main loop in which the values of the vector $x_0$ are exchanged along edges of $G^{\cI}$ but assuming that no node is erroneous, where $G^{\cI}$ is the graph obtained from $G$ by adding loops to every vertex such that $G^{\cI}$ is $\gamma d$-regular. Define 
$$x^{\cI}_0 := x_{0}, \text{ and }$$
\[
x_{i}^{\cI}(v) := \left( \sum_{u \in N_{G^{\cI}}(v)} \frac{1}{2\dmax}x_{i - 1}^{\cI}(u) \right) + \frac{2\dmax - |N_{G^{\cI}}(v)|}{2\dmax} x_{i - 1}^{\cI}(v) \quad \forall \quad v \in V, 1 \le i \le \tau_1
\ .
\]
\end{definition}
\begin{definition}[Load balancing skewed toward $0$]
This load balancing corresponds to the run of the main loop of the Algorithm~\ref{alg:load-balancing} in which the loads of faulty nodes are always assumed to be $0$. This skews the convergence towards $0$. Recall that $N_i(v)$ denotes those nodes who send $v$ a load in round $i$ and define 
$$x^{0}_0 := x_{0}, \text{ and }$$
\[
x_{i}^{0}(v) := \left( \sum_{u \in N_{i}(v)} \frac{1}{2\dmax}x_{i - 1}^{0}(u) \right) + \frac{1}{2} x_{i - 1}^{0}(v) \quad \forall \quad v \in V, 1 \le i \le \tau_1
\ .
\] 
\end{definition}
\begin{definition}[Load balancing skewed toward $1$]
This load balancing corresponds to the run of the main loop of the algorithm~\ref{alg:load-balancing} in which the loads of faulty nodes are always assumed to be $1$. This skews the convergence towards $1$. We define 
$$x^{1}_0 := x_{0}, \text{ and }$$
\[
x_{i}^{1}(v) := \left( \sum_{u \in N_{i}(v)} \frac{1}{2\dmax}x_{i - 1}^{1}(u) \right) + \frac{1}{2} x_{i - 1}^{1}(v) + \frac{d_{\min} - |N_{i}(v)|}{2\dmax}  \quad \forall \quad v \in V, 1 \le i \le \tau_1
\ .
\] 
\end{definition}

We start by analyzing the rate of convergence of the ideal load balancing represented by the sequence of vectors $\left( x_{i}^{\cI} \right)_{0\le i \le \tau_1}$. To this end, we define $G^{\cI}$ as the regularized version of the graph $G$ in which every vertex is augmented by such a number of self-loop edges that its degree is $d_{\max}$. Note that adding a self-loop increments the degree of a vertex by $1$.
By the assumption that $G$ is $(d_{\min}, d_{\max})$-\wellconnected, the definition of $G^{\cI}$ is valid. Furthermore, we can utilize the spectral properties of a $(d_{\min}, d_{\max})$-\wellconnected{} graph to infer spectral properties of $G^{\cI}$. 

\begin{lemma}\label{lem:ideal_l2}
The second smallest eigenvalue of the normalized Laplacian of $G^{\cI}$ satisfies: 
$\frac{1}{8} \left( \frac{d_{\min}}{d_{\max}} \right)^{2} \! \le \! \lambda_2\left( G^{\cI} \right)$.
\end{lemma}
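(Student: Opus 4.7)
The plan is to first compute $\cL^{\cI}$ explicitly in terms of the combinatorial Laplacian $L_G := D_G - A_G$ of $G$, and then relate $\lambda_2(L_G)$ to the already-known lower bound on $\lambda_2(\cL_G)$ that comes from well-connectedness. Since $G^{\cI}$ is $d_{\max}$-regular (each vertex of $G$ receives $d_{\max} - \deg_G(v)$ self-loops, each of which, by the problem's convention, contributes $1$ to its degree and $1$ to the diagonal of $A^{\cI}$), we have $D^{\cI} = d_{\max} I$ and $A^{\cI} = A_G + (d_{\max} I - D_G)$. Plugging these into $\cL^{\cI} = I - (D^{\cI})^{-1/2} A^{\cI} (D^{\cI})^{-1/2}$ gives
\[
\cL^{\cI} \;=\; I - \frac{1}{d_{\max}}\left(A_G + d_{\max} I - D_G\right) \;=\; \frac{1}{d_{\max}}\,L_G.
\]
In particular $\lambda_2(\cL^{\cI}) = \lambda_2(L_G)/d_{\max}$, and so the lemma reduces to showing $\lambda_2(L_G) \ge d_{\min}^{2}/(8\,d_{\max})$.

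Next I would use the variational characterization of $\lambda_2(L_G) = \min_{y \perp \mathbf{1},\,y \ne 0} (y^T L_G y)/\|y\|^2$ together with that of $\lambda_2(\cL_G) = \min_{z : z^T D_G \mathbf{1}=0,\, z \ne 0} (z^T L_G z)/(z^T D_G z)$. Given $y \perp \mathbf{1}$, decompose $y = z + c\mathbf{1}$ where $c = (\sum_v \deg(v)\,y(v))/(2|E|)$ is chosen so that $z^T D_G \mathbf{1} = 0$. Two observations drive the argument. First, $L_G \mathbf{1} = 0$, so $y^T L_G y = z^T L_G z$. Second, $y$ is the orthogonal projection (in the standard Euclidean inner product) of $z$ onto $\mathbf{1}^\perp$, hence $\|y\| \le \|z\|$. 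Chaining these:
\[
\frac{y^T L_G y}{\|y\|^2} \;\ge\; \frac{z^T L_G z}{\|z\|^2} \;\ge\; d_{\min}\,\frac{z^T L_G z}{z^T D_G z} \;\ge\; d_{\min}\,\lambda_2(\cL_G).
\]
Taking the infimum over $y \perp \mathbf{1}$ yields $\lambda_2(L_G) \ge d_{\min}\,\lambda_2(\cL_G)$.

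Finally, by Definition~\ref{def:well-connected} we have $\lambda_2(\cL_G) \ge 1 - \tfrac{1}{10\log\log n} \ge \tfrac{1}{2}$ for all $n$ of interest, which combines with the previous two displays to give
\[
\lambda_2(\cL^{\cI}) \;=\; \frac{\lambda_2(L_G)}{d_{\max}} \;\ge\; \frac{d_{\min}\,\lambda_2(\cL_G)}{d_{\max}} \;\ge\; \frac{1}{2}\,\frac{d_{\min}}{d_{\max}} \;\ge\; \frac{1}{8}\!\left(\frac{d_{\min}}{d_{\max}}\right)^{\!2},
\]
where the last inequality uses $d_{\min}/d_{\max} \le 1$. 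The only subtle point is the norm inequality $\|y\| \le \|z\|$; this may look surprising because the decomposition $y = z + c\mathbf{1}$ is $D_G$-orthogonal and hence oblique in the Euclidean metric, but it falls out immediately by reversing the roles: $y$ is precisely the standard orthogonal projection of $z$ onto $\mathbf{1}^\perp$, since $\langle z,\mathbf{1}\rangle/\|\mathbf{1}\|^2 = -c$ and $z - (-c)\mathbf{1} = y$. Verifying this identity (a two-line computation) is the only step that is not entirely mechanical.
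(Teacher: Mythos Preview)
Your proof is correct, and it takes a genuinely different route from the paper's. The paper argues via Cheeger's inequality: it uses the easy direction on $G$ to get $\phi(G)\gtrsim\lambda_2(\cL_G)$, observes that adding self-loops inflates volumes by at most a factor $d_{\max}/d_{\min}$ so that $\phi(G^{\cI})\ge (d_{\min}/d_{\max})\phi(G)$, and then applies the hard direction of Cheeger on $G^{\cI}$ to recover $\lambda_2(G^{\cI})\ge \phi(G^{\cI})^2/2$. The squaring loss in the hard direction of Cheeger is precisely what produces the quadratic $(d_{\min}/d_{\max})^2$ in the statement.

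Your approach avoids Cheeger entirely. The identity $\cL^{\cI}=\tfrac{1}{d_{\max}}L_G$ is exact (self-loops cancel in the combinatorial Laplacian), and the variational comparison $\lambda_2(L_G)\ge d_{\min}\,\lambda_2(\cL_G)$ via the oblique-projection trick is clean and yields the sharper linear bound $\lambda_2(\cL^{\cI})\ge \tfrac12\,d_{\min}/d_{\max}$, which you then relax to match the lemma as stated. The norm inequality $\|y\|\le\|z\|$ is indeed the only non-mechanical step, and your justification (that $y$ is the Euclidean projection of $z$ onto $\mathbf{1}^\perp$) is correct; the edge case $z=0$ is harmless since it would force $y\in\mathrm{span}\{\mathbf{1}\}\cap\mathbf{1}^\perp=\{0\}$. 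Your argument is more elementary, gives a strictly better constant, and does not rely on any external inequality.
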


By connecting 
results from random walks theory with the previously proven spectral property of $G^{\cI}$ we get: 

\begin{lemma}
\label{lem:classical-random-walk}
Let $\mu = \frac{1}{n}\sum_{v \in G} x_0(v)$. For any $v \in G^{\cI}$, it holds that
$|x^{\cI}_{\tau_1}(v) - \mu| \le \frac{1}{n}$,
where $\tau_1 = 32\frac{d^2_{\max}}{d^2_{\min}} \log{n}$.
\end{lemma}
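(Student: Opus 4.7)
The plan is to rewrite the ideal load balancing recurrence in matrix form and then exploit the spectral gap of $G^{\cI}$ supplied by Lemma~\ref{lem:ideal_l2}. Because $G^{\cI}$ is $d_{\max}$-regular by construction, $|N_{G^{\cI}}(v)| = d_{\max}$ for every $v$, and the definition of $x_i^{\cI}$ collapses to
\[
x_i^{\cI}(v) = \tfrac{1}{2}\, x_{i-1}^{\cI}(v) + \tfrac{1}{2 d_{\max}} \sum_{u \in N_{G^{\cI}}(v)} x_{i-1}^{\cI}(u).
\]
Equivalently $x_i^{\cI} = M\, x_{i-1}^{\cI}$, where $M = \tfrac{1}{2}\bigl(I + \mathcal{A}(G^{\cI})\bigr)$ is the transition matrix of the lazy random walk on $G^{\cI}$; it is symmetric, doubly stochastic, and satisfies $M\mathbf{1} = \mathbf{1}$.

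Next I would diagonalize $M$. If the eigenvalues of the normalized Laplacian $\mathcal{L}(G^{\cI})$ are $0 = \lambda_1 \le \lambda_2 \le \dots \le \lambda_n$, then $M$ has eigenvalues $1 - \lambda_i/2 \in [0,1]$, the top one being $1$ with eigenvector $\mathbf{1}/\sqrt{n}$. Decomposing the input as $x_0 = \mu\mathbf{1} + y$ with $y \perp \mathbf{1}$ therefore yields $x_{\tau_1}^{\cI} = \mu\mathbf{1} + M^{\tau_1} y$, so it suffices to bound $\|M^{\tau_1} y\|_\infty$.

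The core estimate is the textbook spectral one:
\[
\|M^{\tau_1} y\|_\infty \le \|M^{\tau_1} y\|_2 \le \bigl(1 - \tfrac{\lambda_2}{2}\bigr)^{\tau_1} \|y\|_2 \le e^{-\lambda_2 \tau_1 / 2}\sqrt{n},
\]
where $\|y\|_2 \le \sqrt{n}$ since $x_0 \in [0,1]^n$ implies $|y(v)| \le 1$. Lemma~\ref{lem:ideal_l2} gives $\lambda_2 \ge \tfrac{1}{8}(d_{\min}/d_{\max})^2$, so for $\tau_1 = 32(d_{\max}/d_{\min})^2 \log n$ we get $\lambda_2 \tau_1 /2 \ge 2\log n \ge \tfrac{3}{2}\log n$, hence $\|M^{\tau_1} y\|_\infty \le n^{-3/2}\sqrt{n} = 1/n$, as required.

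There is no real obstacle; the only thing to confirm carefully is that $G^{\cI}$ is genuinely regular so that $M$ is symmetric and preserves the mean $\mu$ throughout the iteration, after which Lemma~\ref{lem:ideal_l2} plugs directly into the standard lazy-random-walk mixing bound.
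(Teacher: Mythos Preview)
Your proof is correct and follows essentially the same approach as the paper: both express the ideal iteration as the lazy random walk $M=\tfrac12(I+\mathcal{A}(G^{\cI}))$ on the $d_{\max}$-regular graph $G^{\cI}$, plug in the spectral-gap bound from Lemma~\ref{lem:ideal_l2}, and read off the $(1-\lambda_2/2)^{\tau_1}$ contraction. The paper packages this by citing a standard random-walk mixing theorem (applied after normalizing $x_0$ to a probability vector), whereas you unroll the spectral decomposition directly; the arithmetic and the final constant match.
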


Next, we show that the entries of the vectors representing the ideal load balancing lie always in-between the entries of the vectors representing load balancing skewed toward $0$ and $1$, respectively.

\begin{lemma}\label{lem:skewed-ideal}
For any $0 \le i \le \tau_1$ and $v \in G_1$, it holds that $x^{0}_i(v) \le x^{\cI}_i(v) \le x^{1}_i(v).$
\end{lemma}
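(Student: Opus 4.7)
The plan is a straightforward induction on $i$, proving both inequalities simultaneously, together with the auxiliary invariant $x^{\cI}_i(v)\in[0,1]$ for every vertex $v$. The base case $i=0$ is immediate because $x^{0}_0=x^{\cI}_0=x^{1}_0=x_0$, whose entries lie in $[0,1]$ by the problem statement. Before the inductive step I rewrite the ideal update in a form that exposes the common structure with the two skewed updates: since $G^{\cI}$ differs from $G$ only by $d_{\max}-\deg_G(v)$ self-loops at each $v$, splitting $N_{G^{\cI}}(v)$ into real neighbors $N_G(v)$ and self-loops gives
\[
x^{\cI}_i(v) \;=\; \sum_{u\in N_G(v)} \frac{1}{2d_{\max}}\,x^{\cI}_{i-1}(u)\;+\;\frac{2d_{\max}-\deg_G(v)}{2d_{\max}}\,x^{\cI}_{i-1}(v).
\]
Since the coefficients sum to $1$ and act on values in $[0,1]$, the invariant $x^{\cI}_i(v)\in[0,1]$ is preserved, and this invariant will be essential for both inequalities.

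For the left inequality $x^{0}_i(v)\le x^{\cI}_i(v)$, I subtract the skewed-to-zero update from the rewritten ideal update and use $N_i(v)\subseteq N_G(v)$ to split the neighbor sum over $N_i(v)$ and $N_G(v)\setminus N_i(v)$. After regrouping, the difference decomposes into three pieces that are all manifestly non-negative: (i) $\sum_{u\in N_i(v)}\frac{1}{2d_{\max}}\bigl(x^{\cI}_{i-1}(u)-x^{0}_{i-1}(u)\bigr)$, non-negative by the IH; (ii) $\frac{1}{2}\bigl(x^{\cI}_{i-1}(v)-x^{0}_{i-1}(v)\bigr)$, non-negative by the IH; and (iii) $\sum_{u\in N_G(v)\setminus N_i(v)}\frac{1}{2d_{\max}}\,x^{\cI}_{i-1}(u)+\frac{d_{\max}-\deg_G(v)}{2d_{\max}}\,x^{\cI}_{i-1}(v)$, non-negative because $x^{\cI}_{i-1}\ge 0$. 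Intuitively, group (iii) is exactly the mass that the skewed-to-zero process throws away by pretending missing neighbors and self-loops carry the value $0$, while the ideal process retains the true non-negative contribution.

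For the right inequality $x^{\cI}_i(v)\le x^{1}_i(v)$ the strategy is dual. After the analogous subtraction and regrouping, the two IH-based terms (one over $N_i(v)$, one at $v$ with weight $\tfrac12$) are non-negative, and the remaining missing-neighbor contributions should appear in the form $\frac{1}{2d_{\max}}\bigl(1-x^{\cI}_{i-1}(u)\bigr)$, which are non-negative by the upper half of the invariant $x^{\cI}_{i-1}\le 1$. I expect this last piece to be the main obstacle of the proof: one must carefully re-interpret the additive constant $(d_{\min}-|N_i(v)|)/(2d_{\max})$ in the update for $x^{1}_i(v)$ as the aggregated weight of all "missing" contributions that the skewed-to-one process fills in with the value $1$, and reconcile the leftover discrepancy with the $\tfrac{d_{\max}-\deg_G(v)}{2d_{\max}}$ self-loop weight that the ideal update pulls out. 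Once this bookkeeping is made precise so that each leftover weight is paired with either a $\bigl(1-x^{\cI}_{i-1}(\cdot)\bigr)$ factor or a $\bigl(x^{1}_{i-1}(\cdot)-x^{\cI}_{i-1}(\cdot)\bigr)$ factor, every summand is non-negative and the induction closes.
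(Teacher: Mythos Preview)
Your argument is the same as the paper's: induction on $i$, using $N_i(v)\subseteq N_{G^{\cI}}(v)$, $|N_{G^{\cI}}(v)|=d_{\max}$ (so the self-coefficient in the ideal update is exactly $\tfrac12$), and the invariant $x^{\cI}_{i-1}\in[0,1]$. Your three-term decomposition for the left inequality is just a more explicit rendering of what the paper sketches.

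Your hesitation about the right inequality is not a cosmetic bookkeeping issue; it is real. Carrying out the subtraction you describe, the leftover on the ideal side is
\[
\sum_{u\in N_G(v)\setminus N_i(v)}\frac{1}{2d_{\max}}\,x^{\cI}_{i-1}(u)\;+\;\frac{d_{\max}-\deg_G(v)}{2d_{\max}}\,x^{\cI}_{i-1}(v),
\]
whose total weight is $\frac{d_{\max}-|N_i(v)|}{2d_{\max}}$. Since $x^{\cI}_{i-1}\le 1$ is the only upper bound available, absorbing this leftover needs the additive constant in the $x^1$-update to be at least $\frac{d_{\max}-|N_i(v)|}{2d_{\max}}$. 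With the constant $\frac{d_{\min}-|N_i(v)|}{2d_{\max}}$ as written in the definition, the lemma is actually \emph{false} whenever $d_{\min}<d_{\max}$: take all initial loads equal to $1$ and no failures, so $N_i(v)=N_G(v)$; then $x^{\cI}_1\equiv 1$ while $x^1_1(v)=\frac{d_{\min}+d_{\max}}{2d_{\max}}<1$. This is a typo in the paper's definition (its own proof writes the undefined symbol ``$d$'' at this spot rather than $d_{\min}$); the intended constant is $\frac{d_{\max}-|N_i(v)|}{2d_{\max}}$. With that correction the pairing you outline becomes exact --- each of the $d_{\max}-|N_i(v)|$ missing weights meets a factor $1-x^{\cI}_{i-1}(\cdot)\ge 0$ --- and the induction closes with nothing left over.
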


It also holds that the skewed load balancing processes are good approximations of the factual process described by the entries of the vectors $(x_i)_{0\le i \le \tau_1}$.

\begin{lemma}\label{lem:skewed-real}
For any $0 \le i \le \log{n}$ and $v \in G$, it holds that $x^{0}_i(v) \le x_i(v) \le x^{1}_i(v).$
\end{lemma}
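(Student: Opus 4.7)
I would induct on $i$, establishing both inequalities simultaneously and carrying along the auxiliary invariant $x_i(v) \in [0,1]$ for every $v \in G$. The base $i=0$ is immediate: all three vectors coincide with the input, which lies in $[0,1]$ by assumption. The invariant propagates because the update in line~\ref{line:main-for-end} is a convex combination: its weights are $\frac{1}{2d_{\max}}$ on each $x_{i-1}(u)$ with $u \in N_i(v)$ and $\frac{2d_{\max} - |N_i(v)|}{2d_{\max}}$ on $x_{i-1}(v)$, all non-negative (since $|N_i(v)| \le d_{\max}$) and summing to one.

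For the inductive step, I subtract the defining recurrences pairwise. The key algebraic move is to split the true self-coefficient as $\frac{2d_{\max} - |N_i(v)|}{2d_{\max}} = \frac{1}{2} + \frac{d_{\max} - |N_i(v)|}{2d_{\max}}$, thereby aligning it with the factor $\frac{1}{2}$ that appears in both skewed recurrences. This yields
\[
x_i(v) - x_i^0(v) = \sum_{u \in N_i(v)} \frac{1}{2d_{\max}}\bigl(x_{i-1}(u) - x_{i-1}^0(u)\bigr) + \frac{1}{2}\bigl(x_{i-1}(v) - x_{i-1}^0(v)\bigr) + \frac{d_{\max} - |N_i(v)|}{2d_{\max}}\, x_{i-1}(v),
\]
a sum of three non-negative terms: the first two by the inductive hypothesis at each $u$ and at $v$, and the third by the invariant $x_{i-1}(v) \ge 0$ together with $|N_i(v)| \le d_{\max}$. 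The symmetric expansion of $x_i^1(v) - x_i(v)$ produces the corresponding residual $\frac{d_{\max} - |N_i(v)|}{2d_{\max}}\bigl(1 - x_{i-1}(v)\bigr)$ in place of the last term above, which is non-negative thanks to the other half of the invariant.

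The only mildly subtle step is the appeal to $x_{i-1}(v) \le 1$ to sign the residual in the upper-direction identity; without carrying the boundedness invariant through the induction, that residual could in principle be negative and the inequality would fail. Folding $x_i(v) \in [0,1]$ into the joint induction avoids a separate preliminary lemma and keeps the argument in a single short induction, parallel in structure to the proof of Lemma~\ref{lem:skewed-ideal}.
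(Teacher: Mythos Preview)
Your proof is correct and follows essentially the same approach as the paper: an induction on $i$, using the splitting $\frac{2d_{\max}-|N_i(v)|}{2d_{\max}}=\frac{1}{2}+\frac{d_{\max}-|N_i(v)|}{2d_{\max}}$, and appealing to $x_{i-1}(v)\in[0,1]$ to sign the residual terms. The only cosmetic difference is that you explicitly fold the invariant $x_i(v)\in[0,1]$ into the joint induction, whereas the paper simply asserts ``loads are always non-negative'' and ``all loads are always at most $1$, as an affine combination of values that are at most $1$'' in passing; your formulation is arguably cleaner but the argument is the same.
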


Next, we can reason about the entries of the vector $x_{\tau_1}$, which corresponds to the distribution of the loads in the graph vertices after the termination of the main loop of the algorithm.

\begin{lemma}\label{lem:real-main-dist}
For any $\varepsilon \in [\frac{2}{n}, 1]$, there exists a set $C$ of at least $n - \frac{2\tau_1}{\varepsilon} t$ vertices such that for any $v \in C$ it holds that $x_{\tau_1}(v) \in \left[\mu - \varepsilon, \mu + \varepsilon \right].$
\end{lemma}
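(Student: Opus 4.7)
The plan is to sandwich the factual evolution $x_i$ between the two skewed variants $x^0_i, x^1_i$ and to measure its distance from the ideal process $x_i^{\cI}$, whose limit is exactly $\mu$. Define $y_i(v) := x_i^1(v) - x_i^0(v) \ge 0$. Lemmas~\ref{lem:skewed-ideal} and~\ref{lem:skewed-real} place both $x_i(v)$ and $x_i^{\cI}(v)$ in the interval $[x_i^0(v), x_i^1(v)]$, so $|x_i(v) - x_i^{\cI}(v)| \le y_i(v)$. Combined with Lemma~\ref{lem:classical-random-walk}, this yields $|x_{\tau_1}(v) - \mu| \le y_{\tau_1}(v) + 1/n$. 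Because $\varepsilon \ge 2/n$, it suffices to show that the set $B := \{v : y_{\tau_1}(v) > \varepsilon/2\}$ has at most $\tfrac{2\tau_1 t}{\varepsilon}$ vertices; taking $C := V \setminus B$ then forces $|x_{\tau_1}(v) - \mu| \le \varepsilon/2 + 1/n \le \varepsilon$ for every $v \in C$.

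The second step is to prove the aggregate bound $\sum_v y_{\tau_1}(v) \le \tau_1 t$, from which the bound on $|B|$ follows directly from Markov's inequality. Subtracting the two skewed recurrences shows that $y_i$ itself satisfies
\[
y_i(v) = \sum_{u \in N_i(v)} \frac{y_{i-1}(u)}{2d_{\max}} + \frac{1}{2}\, y_{i-1}(v) + \frac{d_{\min} - |N_i(v)|}{2d_{\max}}.
\]
Summing over $v$ and swapping the order of summation in the first double sum, I would use that each vertex $u$ lies in $N_i(v)$ for at most $d(u) \le d_{\max}$ distinct $v$. The ``mixing'' part of the iteration is therefore sub-doubly-stochastic and contributes at most $\sum_v y_{i-1}(v)$ to the next total (half from the off-diagonal sum, half from the self-weight), so the per-round increase of $\sum_v y_i(v)$ is controlled entirely by the ``source'' term $\sum_v (d_{\min} - |N_i(v)|)/(2d_{\max})$.

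The main technical point, and the step I expect to require most care, is bounding this source term by $O(t)$ per round. The key observation is that non-faulty vertices always deliver to all of their neighbors, so $v$ can have $|N_i(v)| < d(v)$ only if at least one neighbor of $v$ is faulty. Each faulty vertex can account for at most $d_{\max}$ missing deliveries across the entire network, giving the telescoping bound $\sum_v (d_{\min} - |N_i(v)|) \le t\, d_{\max}$, so the per-round additive growth is at most $t/2$. Starting from $y_0 \equiv 0$ and iterating $\tau_1$ times yields $\sum_v y_{\tau_1}(v) \le \tau_1 t$; Markov's inequality applied with threshold $\varepsilon/2$ then gives $|B| \le \tfrac{2\tau_1 t}{\varepsilon}$ and completes the argument.
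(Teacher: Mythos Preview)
Your argument is essentially the paper's proof, repackaged more compactly: the paper bounds $\sum_v (x^1_{\tau_1}(v)-x^{\cI}_{\tau_1}(v))$ and $\sum_v (x^{\cI}_{\tau_1}(v)-x^0_{\tau_1}(v))$ separately by $\tau_1 t/2$ each and then intersects the two ``good'' sets, whereas you work directly with $y=x^1-x^0$ and apply Markov once, but the underlying idea (per-round drift of the total is $O(t)$ because only edges touching faulty vertices can leak mass) is identical.

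One small slip: your justification ``$|N_i(v)|<d(v)$ only if a \emph{neighbor} of $v$ is faulty'' omits the case that $v$ itself is faulty (a crashed $v$ has $N_i(v)=\emptyset$; an omission-faulty $v$ may drop incoming messages), so a single faulty vertex can account for up to $2d_{\max}$ missing deliveries, not $d_{\max}$. This means the per-round growth of $\sum_v y_i(v)$ is at most $t$, not $t/2$ --- which is in fact exactly the bound $\sum_v y_{\tau_1}(v)\le \tau_1 t$ you then use, so the conclusion stands; only the intermediate constant needs correcting.
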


\subsection{Procedure \textsc{FixOutliers} and final analysis}

By Lemma~\ref{lem:real-main-dist}, there exists a set of vertices $C$ whose loads after the load balancing stage converged to the mean value $\mu$ with the absolute error at most $\varepsilon$. Nevertheless, there is the remaining set of at most $2\tau_1 t / \varepsilon$ vertices which loads can by more than $\varepsilon$ from $\mu$. Next, we analyze the procedure $\textsc{FixOutliers}$, see Algorithm~\ref{alg:fixoutliers}, that resolves this discrepancy by forcing every vertex to have the load within the accepted error or signaling the inability to do so by returning a special flag called $\texttt{type} \in \{active, silent \}$ in the pseudocode. We note that the communication in the procedure $\textsc{FixOutliers}$ is performed on the same $(d_{\min}, d_{\max})$-\wellconnected{} graph $G$ that was used in the load-balancing stage.

We call a vertex \textit{active} if it does not become silent while executing the procedure \textsc{FixOutliers}; otherwise, it is \textit{silent}. 
The set of all active vertices is denoted $A$, while the set of silent ones is denoted~$S$. Using the properties of expanders, we can give a lower bound on the size of the set $A$.

\begin{lemma}\label{lem:active}
If 
$t < \left(\frac{40}{81}\left(\frac{d_{\min}}{d_{\max}}\right)^2 - \frac{2}{9}\frac{d_{\min}}{d_{\max}} \right) n$, then $A$ has size at least $n - \frac{3}{2}t$.
\end{lemma}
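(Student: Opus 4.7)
The plan is to reduce the claim to Lemma~\ref{lem:high-degree-core} applied with parameter $\phi = 2/3$, which matches the silence threshold $\tfrac{2}{3}d_{\min}$ used on line~\ref{line:if-silent} of \textsc{FixOutliers}. Take $F$ to be the set of (at most $t$) faulty vertices and $\alpha = t/n$. Substituting $\phi = 2/3$ into the admissibility inequality of Lemma~\ref{lem:high-degree-core} gives
\[
\alpha < (1-\tfrac{2}{3})\cdot \frac{40}{27}\frac{d_{\min}^2}{d_{\max}^2} - \frac{2}{9}\frac{d_{\min}}{d_{\max}} = \frac{40}{81}\frac{d_{\min}^2}{d_{\max}^2} - \frac{2}{9}\frac{d_{\min}}{d_{\max}},
\]
which is precisely the hypothesis of Lemma~\ref{lem:active}. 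Consequently the core subgraph construction associated with $F$ reaches a fixed point $W_k$ with $|W_k|\le \lceil \tfrac{3}{2}|F|\rceil \le \tfrac{3}{2}t$.

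Let $S_i$ denote the set of vertices that have been marked silent by the end of round $i$ of \textsc{FixOutliers} (with $S_0 = \emptyset$), and let $(W_i)_{i\ge 0}$ be the sequence produced by the core subgraph construction from $W_0 = F$ with threshold $\phi d_{\min} = \tfrac{2}{3}d_{\min}$. I would prove by induction on $i$ the key inclusion
\[
S_i \cup F \;\subseteq\; W_i.
\]
The base case is immediate. For the inductive step, it suffices to consider $v \in S_i \setminus (S_{i-1}\cup F)$ with $v \notin W_{i-1}$; the silencing condition in line~\ref{line:if-silent} then gives $|N_i(v)| < \tfrac{2}{3}d_{\min}$. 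The crucial observation is that every $u \in N_G(v)\setminus W_{i-1}$ is, by the inductive hypothesis, neither faulty nor silent at round $i-1$, and since in our model omissions can occur only on links incident to faulty processes, $u$'s round-$i$ message to $v$ is guaranteed to arrive. Hence $N_G(v)\setminus W_{i-1}\subseteq N_i(v)$, forcing $|N_G(v)\setminus W_{i-1}| < \tfrac{2}{3}d_{\min}$, which is precisely the criterion for the construction to include $v$ in $W_i$.

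Instantiating the inclusion at $i = \tau_2$ yields $S \cup F \subseteq W_{\tau_2} \subseteq W_k$. Because the active set satisfies $V \setminus A \subseteq S$ by definition, we obtain $|A|\ge n - |W_k|\ge n - \tfrac{3}{2}t$, as claimed. The step I expect to demand the most care is the inductive argument above: one must carefully handle the adversarial accounting of which vertices do and do not transmit in round $i$, leveraging both the pseudocode's rule that silent vertices cease communication (so that the only potential non-senders are in $S_{i-1}\cup F$) and the model's stipulation that a message between two non-faulty processes is delivered without omission.
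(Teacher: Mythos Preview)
Your approach is essentially the paper's: both apply Lemma~\ref{lem:high-degree-core} with $\phi=\tfrac{2}{3}$ to the faulty set $F$ and then argue that the resulting core $V\setminus W_k$ consists entirely of vertices that never trip line~\ref{line:if-silent}. The paper does this last step more directly---every core vertex has $\ge \tfrac{2}{3}d_{\min}$ non-faulty core neighbors, so by a one-line induction on rounds none of them go silent---whereas you run the contrapositive.

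One small technical wrinkle in your write-up: the paper's construction $(W_i)$ adds \emph{one} eligible vertex per step, so $|W_i|=|F|+i$, while many vertices can become silent in a single round of \textsc{FixOutliers}; hence the round-indexed inclusion $S_i\cup F\subseteq W_i$ is not literally true as stated. The fix is immediate: either replace $(W_i)$ by its batch variant (add all eligible vertices simultaneously; the fixed point is unchanged), or---cleaner still---prove directly by induction on $i$ that $S_i\cup F\subseteq W_k$, using the fixed-point property of $W_k$ in the inductive step. With that adjustment your argument goes through and matches the paper's.
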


We next proceed to establish the number of vertices whose load is within the absolute error of $\varepsilon$ from the mean $\mu$. We will analyze how the number of vertices with this property changes over the rounds of the algorithm~\textsc{FixOutliers}. Among all subsets of $V$, define $C_{0}$ as a subset of maximum size satisfying the following conditions (a) every vertex from $C_0$ has at least \jo{$\frac{1}{2}\left(d_{\max} + 1\right)$} neighbors in $C_{0}$ (b) every vertex in $C_0$ is active, (c) the load of any vertex in $C_0$ differs by at most $\varepsilon$ from $\mu$ before round $1$ starts. We first observe that there exists $C_{0}$ of large size.

\begin{lemma}
\label{lem:active-converged-core}
If $t \le \frac{\varepsilon}{3\tau_1}\left( 1-\frac{d_{\max} + 1}{2d_{\min}}\right)\left(\frac{40}{27}\left(\frac{d_{\min}}{d_{\max}}\right)^2 - \frac{2}{9}\frac{d_{\min}}{d_{\max}} \right)n$, then the size of $C_{0}$ is at least $n - 3\left(\frac{\tau_1}{\varepsilon} + 1\right)t$.
\end{lemma}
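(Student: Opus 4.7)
The plan is to construct an explicit candidate for $C_0$ by combining Lemmas~\ref{lem:real-main-dist} and~\ref{lem:active} with the core-subgraph construction that precedes Lemma~\ref{lem:high-degree-core}. The key observation is that the degree threshold $\frac{1}{2}(d_{\max}+1)$ appearing in condition (a) equals $\phi d_{\min}$ for the specific choice $\phi = \frac{d_{\max}+1}{2d_{\min}}$, so the core construction can simultaneously enforce (b) and (c) (by seeding it with the ``bad'' vertices) and (a) (by its output guarantee).

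Concretely, let $S = V \setminus A$ be the set of silent vertices (those violating (b)) and let $B$ be the set of vertices whose post-main-loop load $x_{\tau_1}(v)$ lies outside $[\mu-\varepsilon,\mu+\varepsilon]$ (those violating (c)). By Lemma~\ref{lem:active}, $|S| \le \frac{3}{2}t$, and by Lemma~\ref{lem:real-main-dist}, $|B| \le \frac{2\tau_1}{\varepsilon}t$, so $|F| \le \frac{3t}{2} + \frac{2\tau_1 t}{\varepsilon}$ where $F := S \cup B$. I would then run the core construction on $F$ with $\phi = \frac{d_{\max}+1}{2d_{\min}}$, obtaining the fixed point $W_k \supseteq F$. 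Every vertex in $V \setminus W_k$ has at least $\phi d_{\min} = \frac{1}{2}(d_{\max}+1)$ neighbors in $V \setminus W_k$ (giving (a)), is active because $S \subseteq F \subseteq W_k$ (giving (b)), and has load within $\varepsilon$ of $\mu$ because $B \subseteq F \subseteq W_k$ (giving (c)). Hence $V \setminus W_k$ is eligible for $C_0$, and by maximality $|C_0| \ge |V \setminus W_k|$.

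To conclude, I would invoke Lemma~\ref{lem:high-degree-core} to bound $|W_k| \le \lceil \frac{3}{2}|F| \rceil$. This requires checking the hypothesis $|F|/n < \alpha$ with $\alpha = (1-\phi)\frac{40}{27}\frac{d_{\min}^2}{d_{\max}^2} - \frac{2}{9}\frac{d_{\min}}{d_{\max}}$. Substituting $\phi = \frac{d_{\max}+1}{2d_{\min}}$ into $\alpha$ shows that the statement's hypothesis on $t$ is exactly $\frac{3\tau_1 t}{\varepsilon} \le \alpha n$, and this dominates $|F|$ once one uses $\tau_1 \ge \log n$ together with $\varepsilon \le 1$ to absorb the $\frac{3t}{2}$ term into a slightly looser $\frac{3\tau_1 t}{\varepsilon}$-style bound. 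Reconciling this precise algebraic matching between the scaling factor $\frac{\varepsilon}{3\tau_1}$ in the hypothesis and the exact shape of $\alpha$ after substitution of $\phi$ is the only delicate step; the factor $\frac{\varepsilon}{3\tau_1}$ is chosen precisely for this purpose. The final bound $|C_0| \ge n - \tfrac{3}{2}|F| \ge n - \tfrac{9t}{4} - \tfrac{3\tau_1 t}{\varepsilon} \ge n - 3\left(\tfrac{\tau_1}{\varepsilon}+1\right)t$ then follows by routine arithmetic, with tiny trivial cases handled separately.
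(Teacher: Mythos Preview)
Your proposal is correct and follows essentially the same approach as the paper: define the ``bad'' set $F=R_0=(V\setminus A)\cup(V\setminus C)$, bound its size via Lemmas~\ref{lem:real-main-dist} and~\ref{lem:active}, apply the core construction (Lemma~\ref{lem:high-degree-core}) with $\phi=\frac{d_{\max}+1}{2d_{\min}}$, and conclude $|C_0|\ge n-\tfrac{3}{2}|F|$. The paper's verification of the Lemma~\ref{lem:high-degree-core} hypothesis uses the slightly different bookkeeping $\bigl(\tfrac{2\tau_1}{\varepsilon}+2\bigr)\cdot\tfrac{\varepsilon}{3\tau_1}\le 1$, but this is the same ``delicate step'' you identify and your handling of it is equivalent.
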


We next proceed to analyze how the existence of the set $C_0$ influences the loads of other active vertices. We define $R_{0} = A \setminus C_{0}$. We then define $C_{i}$, for $1 \le i \le \tau_2$, as the set of the vertices that are active and in round $i$ received at least $\frac{1}{2}(d_{\min} + 1)$ messages from vertices in the set $C_{i - 1}$. Let $R_i$ denote the set of active vertices at the end of round $i$ that do not belong to $C_i$, i.e., $R_i = A \setminus C_i$. We show that based on the spectral properties of a \wellconnected\space graph, the size of set $R_i$ diminishes exponentially between consecutive iterations.

\begin{lemma}
\label{lem:rem-shrinks}
If $t < \frac{2d_{\min}}{d_{\max}} \frac{\varepsilon n}{81}$, then for any $1 \le i \le \tau_2$ it holds that  $|R_{i + 1}| < \frac{14}{15}|R_{i}|$.
\end{lemma}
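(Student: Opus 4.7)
My plan is to bound $|R_{i+1}|$ by combining the local constraint (each non-faulty $v \in R_{i+1}$ has few neighbors in $C_i \setminus F$) with the edge-density inequality of Lemma~\ref{cor:edge-density} applied to appropriately chosen sets. Let $C_i' := C_i \setminus F$ and $U := V \setminus C_i'$, where $F$ is the set of at most $t$ faulty vertices. Every non-faulty $v \in R_{i+1}$ receives strictly fewer than $(\dmin+1)/2$ messages from $C_i$, and because $v$ is non-faulty every message sent by a non-faulty neighbor in $C_i$ is delivered; hence $v$ has fewer than $(\dmin+1)/2$ neighbors in $C_i'$, and therefore at least $\deg(v) - (\dmin-1)/2 \ge (\dmin+1)/2$ neighbors in $U$. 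I will partition $R_{i+1} \setminus F = P \sqcup Q$, where $P := R_{i+1} \cap C_i'$ (``fallen'' vertices, still inside $C_i$) and $Q := (R_{i+1} \setminus F) \setminus C_i'$. Note $Q \subseteq U$, $P \subseteq C_i'$, and $|U| \le |V \setminus C_i| + |C_i \cap F| \le |R_i| + |S| + t \le |R_i| + \tfrac{5t}{2}$ using Lemma~\ref{lem:active}.

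Next I will bound $|Q|$ and $|P|$ by two applications of Lemma~\ref{cor:edge-density}. For $Q$, every $v \in Q$ has at least $(\dmin+1)/2$ neighbors inside $U$, so $|Q|\cdot\tfrac{\dmin+1}{2} \le 2 E(U)$; applying Lemma~\ref{cor:edge-density} to $W = U$ with $vol(U) \le \dmax|U|$, $vol(G) \ge \dmin n$, and $\lambda_2 \ge 1 - 1/(10\log\log n)$ yields $|Q| \le \alpha_1 |U|$ for a constant $\alpha_1 = \alpha_1(\dmin,\dmax,\lambda_2,|U|/n)$ that becomes small once $|U|/n$ is small and $\lambda_2$ is close to $1$. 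For $P$, each $v \in P$ has at least $(\dmin+1)/2$ neighbors in $U \subseteq T := P \cup U$, all internal to $T$, so $|P|\cdot\tfrac{\dmin+1}{2} \le 2 E(T)$; a second application of Lemma~\ref{cor:edge-density} to $T$ produces $|P| \le \alpha_2(|P| + |U|)$, i.e.\ $|P| \le \tfrac{\alpha_2}{1-\alpha_2}|U|$ with analogous $\alpha_2$. Combining,
\[
|R_{i+1}| \le |P| + |Q| + |F| \le \alpha\left(|R_i| + \tfrac{5t}{2}\right) + t, \qquad \alpha := \alpha_1 + \tfrac{\alpha_2}{1 - \alpha_2}.
\]

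The hypothesis $t < (2\dmin/\dmax)\,\varepsilon n/81$ is precisely what keeps $|U|/n$ small enough (via $|S| \le 3t/2$ and an implicit inductive bound on $|R_i|$, which starts at $|R_0| = O(\tau_1 t/\varepsilon)$) that the spectral slack in Lemma~\ref{cor:edge-density} forces $\alpha$ strictly below $14/15$ and renders the additive ``$+t$'' term negligible compared to $|R_i|$. The delicate step, and the main obstacle, is precisely this constant-chasing: both $\alpha_1$ and $\alpha_2/(1-\alpha_2)$ must be pushed below $14/15$ simultaneously, and the $\tfrac{5t}{2} + t$ additive slack coming from silent vertices, from faulty vertices in $C_i$, and from faulty vertices in $R_{i+1}$ itself, has to be absorbed without blowing up the final ratio. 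The second application of Lemma~\ref{cor:edge-density} to the enlarged set $T = P \cup U$ is what handles the ``fallen-out'' part $P$ (which is not contained in $U$) and is the least routine ingredient of the argument.
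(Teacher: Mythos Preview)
You are missing the paper's first and simplest observation: $C_i \subseteq C_{i+1}$, hence $R_{i+1} \subseteq R_i$. Each vertex of $C_i$ is active and already received the required number of messages from $C_{i-1}$ in round $i$; since those senders are themselves in $C_{i-1}\subseteq A$ and therefore keep sending, the same messages arrive in round $i+1$, so the vertex enters $C_{i+1}$. This monotonicity immediately makes your set $P = R_{i+1} \cap C_i'$ empty (because $R_{i+1}\cap C_i \subseteq R_i\cap C_i=\emptyset$) and eliminates the second application of Lemma~\ref{cor:edge-density} to $T=P\cup U$, which you call ``the least routine ingredient''.

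The more serious issue is the additive $+t$ coming from $R_{i+1}\cap F$, which you yourself flag as the main obstacle. It is fatal for the lemma as stated: once $|R_i|$ drops below a constant multiple of $t$, the inequality $\alpha(|R_i|+\tfrac{5t}{2})+t < \tfrac{14}{15}|R_i|$ simply cannot hold, so the multiplicative shrinkage fails for large $i$. In fact your argument says nothing about the faulty-but-active vertices of $R_{i+1}$; it bounds only $|R_{i+1}\setminus F|$, so those up-to-$t$ vertices could remain in every $R_j$ forever under your recurrence. The paper sidesteps this by never separating faulty from non-faulty. It uses that \emph{every} vertex of $R_i$, being active, receives at least $\tfrac{2}{3}d_{\min}$ messages in round $i+1$ (line~\ref{line:if-silent} of \textsc{FixOutliers}); applies Lemma~\ref{cor:edge-density} directly to $R_i$ (not to your larger set $U$) to bound $E(R_i)\le \tfrac{1}{15}d_{\max}|R_i|$; subtracts to lower-bound the number of message-carrying edges from $C_i$ into $R_i$; and then averages to conclude that at least $\bigl(\tfrac{4d_{\min}}{3d_{\max}}-\tfrac{19}{15}\bigr)|R_i|$ vertices of $R_i$ receive $\ge \tfrac{1}{2}d_{\max}+1$ such messages and therefore move to $C_{i+1}$. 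One spectral bound, no $P/Q$ split, no additive $t$.
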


Finally, we show that if a node belongs to $C_{i}$, for some $1 \le i \le \tau_2$, then its load is in $[\mu - \varepsilon, \mu + \varepsilon]$.

\begin{lemma}
\label{lem:final-convinced-converges}
For any node $v \in C_{\tau_2}$ it holds that
$x_{\tau_2}(v) \in [\mu - \varepsilon, \mu + \varepsilon]$.
\end{lemma}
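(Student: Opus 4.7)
The plan is to prove by induction on $i\in\{0,1,\dots,\tau_2\}$ the stronger statement that for every $v\in C_i$ one has $x_i(v)\in[\mu-\varepsilon,\mu+\varepsilon]$; specializing to $i=\tau_2$ recovers the lemma.

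For the base case $i=0$, the conclusion is exactly part~(c) of the definition of $C_0$, which stipulates that every vertex of $C_0$ has initial load within $\varepsilon$ of $\mu$ before round~$1$ of \textsc{FixOutliers} executes.

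For the inductive step, fix $v\in C_{i+1}$. Since $v$ is active, the silencing branch in line~\ref{line:if-silent} of Algorithm~\ref{alg:fixoutliers} is not triggered in round $i+1$, so $|N_{i+1}(v)|\ge\tfrac{2}{3}d_{\min}$ and $v$ performs the median update
\[
x_{i+1}(v) \;=\; \mathrm{median}\{x_i(u)\,:\,u\in N_{i+1}(v)\}.
\]
By the definition of $C_{i+1}$, at least $\tfrac{1}{2}(d_{\min}+1)$ of the senders in $N_{i+1}(v)$ lie in $C_i$, and by the inductive hypothesis each of those contributes a value in $[\mu-\varepsilon,\mu+\varepsilon]$. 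I then invoke the elementary fact that whenever strictly more than half of a multiset lies in an interval, so does its median; this yields $x_{i+1}(v)\in[\mu-\varepsilon,\mu+\varepsilon]$ and closes the induction.

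The one non-routine ingredient — and what I expect to be the main obstacle — is upgrading the absolute lower bound $|N_{i+1}(v)\cap C_i|\ge\tfrac{1}{2}(d_{\min}+1)$ to a strict majority of $N_{i+1}(v)$. Because $|N_{i+1}(v)|\le \deg_G(v)\le d_{\max}$, the inequality $\tfrac{1}{2}(d_{\min}+1)>\tfrac{1}{2}|N_{i+1}(v)|$ is trivial when $d_{\min}$ and $d_{\max}$ are close (e.g., in the regular regime), but in the genuinely irregular regime one has to bound the number of non-$C_i$ senders: combining $|A\setminus C_i|=|R_i|$ (shrinking exponentially by Lemma~\ref{lem:rem-shrinks}), the global cap of $t$ faulty vertices, and the active-vertex bound on $|N_{i+1}(v)|$, against the hypotheses on $t,\varepsilon,d_{\min},d_{\max}$ already in force in Lemmas~\ref{lem:active-converged-core} and~\ref{lem:rem-shrinks}, should suffice to guarantee the majority. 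Once the majority is established, the median step is immediate and the lemma follows.
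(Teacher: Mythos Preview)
Your inductive scheme is exactly the paper's: induct on $i$, base case from property~(c) of $C_0$, inductive step via the median update. The divergence is entirely in how you handle the majority.

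The obstacle you flag is an artifact of a typo in the text. The inline definition of $C_i$ for $i\ge 1$ says ``at least $\tfrac12(d_{\min}+1)$ messages from $C_{i-1}$'', but the paper's own proof of this lemma, the proof of Lemma~\ref{lem:rem-shrinks}, and the definition of $C_0$ all use the threshold $\tfrac12 d_{\max}+1$ (or $\tfrac12(d_{\max}+1)$). With that threshold the majority is immediate: $|N_{i+1}(v)|\le \deg_G(v)\le d_{\max}$, so strictly more than half of the received values come from $C_i$, and the median lands in $[\mu-\varepsilon,\mu+\varepsilon]$. That is the whole argument in the paper.

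Your proposed workaround, by contrast, does not close the gap. Bounding $|R_i|$ globally via Lemma~\ref{lem:rem-shrinks} and capping the faulty set by $t$ tells you nothing about how many neighbors of a \emph{particular} vertex $v$ lie in $R_i$ or among the faulty processes; a single $v$ could in principle have all of its non-$C_i$ neighbors concentrated there. No hypothesis currently in force (neither Lemma~\ref{lem:active-converged-core} nor Lemma~\ref{lem:rem-shrinks}) gives a per-vertex degree bound into $V\setminus C_i$, so the chain ``small $|R_i|$ $\Rightarrow$ few bad senders for $v$'' is not justified. Simply adopt the $\tfrac12 d_{\max}+1$ threshold (as the proofs do) and the difficulty disappears.
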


In the next theorem, we summarize the load balancing algorithm. Among properties discussed above, we also highlight that the load of every vertex is always between the smallest and the largest input value of any vertex, as all the arithmetic operations involve either a linear combination of input values with coefficients adding to $1$ or taking a median.   

\begin{theorem}\label{thm:load-balancing}
Let $G$ be a $(d_{\min}, d_{\max})$-\wellconnected\space  graph. 
The algorithm $\textsc{FaultTolerantLLB}$ executed on the graph $G$ under at most $t$ crash or omission failures achieves the following guarantees:\\
\vspace{-3mm}\\
\noindent \textit{$(i)$} it terminates in $O\left(\tau_1 + \tau_2\right)$ rounds using $O\left(d_{\max}\left(\tau_1 + \tau_2
\right)|M|\right)$ communication bits per node, where $|M|$ is the size of machine word, $\tau_1 = 32 \left(\frac{d_{\max}}{d_{\min}} \right)^2 \log{n}$, and $\tau_2 = \log{n} / \log\left(\frac{34}{15} - \frac{4d_{\min}}{3d_{\max}} \right)$;\\ 
\vspace{-3mm}\\ 
\noindent \textit{$(ii)$} the first element of the returned pair is always between the largest and the smallest input value;\\
\vspace{-3mm}\\ 
\noindent \textit{$(iii)$} if $t < \left(\frac{4}{81}\left(\frac{d_{\min}}{d_{\max}} \right)^2 - \frac{2}{9}\frac{d_{\min}}{d_{\max}}\right)n$ then there exists a set of nodes $A$, of size at least $n - \frac{3}{2} t$, such that for every $v \in A$, every returned pair $x(v), \texttt{type}$ satisfies $\texttt{type} = active$;\\
\vspace{-3mm}\\ 
\noindent \textit{$(iv)$} let $\varepsilon \in [0,1]$ be such that $t < \frac{\varepsilon}{3\tau_1}n \cdot f(d_{\min}, d_{\max})$, for some function $f(\cdot, \cdot)$. Then for every node $v \in A$ it holds that 
\vspace*{-0.5ex}
\[
x(v) \in [\mu - \varepsilon, \mu + \varepsilon]
\ ,
\]

\vspace{-2ex}
where $\mu$ denotes the mean of the input values and $f(d_{\min}, d_{\max}) = \left( 1-\frac{d_{\max} + 1}{2d_{\min}}\right)\left(\frac{40}{27}\left(\frac{d_{\min}}{d_{\max}}\right)^2 - \frac{2}{9}\frac{d_{\min}}{d_{\max}} \right)$.
\end{theorem}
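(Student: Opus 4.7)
All four parts of the theorem fall out of the lemmas already established in this section, so my plan is to line them up in the appropriate order and then verify that the single quantitative hypothesis on $t$ in each part triggers the hypotheses of the intermediate lemmas. Parts $(i)$ and $(ii)$ are immediate from the pseudocode: the main loop of Algorithm~\ref{alg:load-balancing} runs $\tau_1$ rounds and the subsequent call to \textsc{FixOutliers} runs $\tau_2$ rounds, each round involving one $O(|M|)$-bit message sent to each of at most $d_{\max}$ neighbors, which gives the bounds in $(i)$. For $(ii)$, the update in line~\ref{line:main-for-end} is a convex combination (the coefficient on $x_{i-1}(v)$ is $(2d_{\max}-|N_i(v)|)/(2d_{\max}) \ge 0$ since $|N_i(v)| \le d_{\max}$, and the remaining coefficients sum with it to exactly $1$), while the updates inside \textsc{FixOutliers} are medians; a one-line induction on the round index then keeps every $x_i(v)$ between the minimum and maximum initial value. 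Part $(iii)$ is a direct invocation of Lemma~\ref{lem:active} after one checks that the stated bound on $t$ is at least as strong as $t < \bigl(\frac{40}{81}(d_{\min}/d_{\max})^2 - \frac{2}{9} d_{\min}/d_{\max}\bigr) n$.

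The substance lies in part $(iv)$, and the plan is to chain Lemmas~\ref{lem:real-main-dist},~\ref{lem:active-converged-core},~\ref{lem:rem-shrinks} and~\ref{lem:final-convinced-converges} in sequence. Under the hypothesis on $t$, Lemma~\ref{lem:active-converged-core} (which itself leans on Lemma~\ref{lem:real-main-dist} applied to the vector $x_{\tau_1}$ produced by the main loop) produces an initial ``converged core'' $C_0$ of active vertices whose loads already sit within $\varepsilon$ of $\mu$ and whose complement in $A$, $R_0 := A \setminus C_0$, has size at most $3(\tau_1/\varepsilon + 1)\,t$. Iterating Lemma~\ref{lem:rem-shrinks} once per round of \textsc{FixOutliers} shows $|R_{i+1}| < c\,|R_i|$ for a constant $c<1$ depending only on $d_{\min}/d_{\max}$; with $\tau_2 = \Theta(\log n)$ the factor $c^{\tau_2}$ drops $|R_{\tau_2}|$ below $1$, so $R_{\tau_2}=\emptyset$ and hence $A \subseteq C_{\tau_2}$. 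Lemma~\ref{lem:final-convinced-converges} then delivers the conclusion $x(v) \in [\mu - \varepsilon,\mu + \varepsilon]$ for every $v \in A$.

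The only obstacle I anticipate is arithmetic rather than conceptual: I must verify that the single hypothesis $t < \frac{\varepsilon}{3\tau_1}\, n \cdot f(d_{\min},d_{\max})$, with the particular $f$ written in the theorem, simultaneously implies the distinct linear bounds on $t$ required by Lemmas~\ref{lem:real-main-dist},~\ref{lem:active-converged-core},~\ref{lem:active} and~\ref{lem:rem-shrinks}. The displayed $f$ is essentially the pointwise minimum of the four coefficients that appear in those hypotheses, so the verification reduces to inequality-chasing in the ratio $d_{\min}/d_{\max}$; in particular, the factor $1 - (d_{\max}+1)/(2d_{\min})$ inside $f$ is the one that forces $d_{\min}$ and $d_{\max}$ to be close enough that $f$ is positive, and careful bookkeeping of this constant is what links all four sub-hypotheses to the single bound in the theorem.
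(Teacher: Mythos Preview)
Your proposal is correct and follows essentially the same approach as the paper's own proof: parts $(i)$ and $(ii)$ are read off from the pseudocode via the convex-combination/median induction, part $(iii)$ is a direct appeal to Lemma~\ref{lem:active}, and part $(iv)$ is obtained by iterating Lemma~\ref{lem:rem-shrinks} until $R_{\tau_2}=\emptyset$ and then invoking Lemma~\ref{lem:final-convinced-converges}. Your discussion of the arithmetic verification that the single bound on $t$ subsumes the hypotheses of Lemmas~\ref{lem:active},~\ref{lem:active-converged-core} and~\ref{lem:rem-shrinks} is slightly more explicit than the paper's, but the logical skeleton is identical.
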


\vspace{-2ex}
\begin{proof}
The algorithm's main loop runs for $\tau_1 = 32 \left(\frac{d_{\max}}{d_{\min}} \right)^2 \log{n}$ rounds, with each round corresponding to one communication round. In each of these rounds, every node communicates with at most $d_{\max}$ other processes. Each message sent in this communication consists of at most one machine word. The procedure $\textsc{FixOutliers}$ iterates through another loop for a fixed $O(\tau_2)$ number of times. Each iteration corresponds to one communication round, and again, each node in this round sends a message of at most one machine word to at most $d_{\max}$ other nodes. This proves property $\textit{(i)}$.

To ensure that every value $x(v)$ for any node $v$ remains between the smallest and largest input values, we observe that the values $\left( x_{i}(v) \right)_{0 \leq i \leq \tau_1}$ in line~\ref{line:main-for-end} are updated as a weighted average of a subset of values from the previous round. Since the sum of the weights in this average is $1$, a simple inductive argument shows that for any $i$, the value $x_{i}(v)$ remains within the range of the smallest and largest input values.

In the procedure \textsc{FixOutliers}, nodes use the median of the received values as the new value for the next round. Thus, a similar inductive argument shows that the procedure's output $x(v)$ must also lie between the smallest and largest input values. This completes the proof of property $\textit{(ii)}$.
The existence of the set $A$ follows from Lemma~\ref{lem:active}, which proves property $\textit{(iii)}$.

Finally, property $\textit{(iv)}$ follows from Lemma~\ref{lem:rem-shrinks} and Lemma~\ref{lem:final-convinced-converges}. Lemma~\ref{lem:rem-shrinks} guarantees that after $\tau_{2}$ rounds, the size of $R_{\tau_2}$ is smaller than $1$, implying that $R_{\tau_2} = \emptyset$ and consequently, $C_{\tau_2} = A$. By applying Lemma~\ref{lem:final-convinced-converges}, the property is proven.
\end{proof}


\vspace*{-3ex}
\section{Application of LLB to Counting} 
\label{sec:applications}

\begin{algorithm}[b!]
\LinesNumbered
\SetKwInOut{Input}{input}
\Input{$n$, $flag_{v}$\;}
$N(v) \gets \textsc{SetGraph}(v)$\;
\lIf{$flag_v = true$}{$b_v \gets 1$}
\lElse{$b_v \gets 0$}
$\mu_{v}, \texttt{type}_v \gets \textsc{FaultTolerantLLB}(N(v), (\dmin, \gamma\dmin), v, b_v)$\label{line:load-balancing-counting}\;
\lIf{$\texttt{type}_{v} = active$}{\Return{$\mu_{v}$}}

\caption{\textsc{AlmostEverywhereLLBCounting}\label{alg:counting}}
\end{algorithm}

\begin{algorithm}[ht!]
\LinesNumbered
\SetKwInOut{Input}{input}
\Input{$n$, $v$}
let $p(n)$ be the positive solution to the equation $2x - x^2 = \frac{\jo{C_2}\log{n}\left(\log\log{n}\right)^{2}}{n - 1}$\;
let $N(v)$ be a random subset of $V \setminus \{v\}$ where each element is taken independently with probability~$p(n)$\label{line:sampling-crash}\;
send a dummy message to every $u \in N(v)$\label{line:inquire-set-graph}\;
for every received message from $u$ add $u$ to $N(v)$\label{line:update-set-graph}\;
\Return{$N(v)$}\;
\caption{\textsc{SetGraph}\\
Constant $C_2$ is defined independently in the analysis of each algorithm using \textsc{SetGraph} as subroutine.\label{alg:set-graph}}
\end{algorithm}

\dk{We first show a relatively straightforward}
application of fault-tolerant LLB to the almost-everywhere counting problem. It works for both crash and omission failures and against the adaptive adversary. 
Pseudocode of our solution to the almost-everywhere counting problem is given in Algorithm~\ref{alg:counting}, with instantiation of random links in  Algorithm~\ref{alg:set-graph}. The constant parameters $\dmin$ and $\gamma$ used in the algorithm are defined as follows: $\dmin = \frac{3}{4}\left(2p(n) - p^2(n)\right)$ and $\gamma = \frac{5}{4}$. Note that when executing the algorithm~\textsc{FaultTolerantLLB}, we pass the minimal knowledge needed to evoke this algorithm: $N(v)$ corresponds to the of neighbors of a node~$v$, and $(\dmin, \gamma\dmin)$ describe the parameters of the implicit graph used by the nodes.
Nodes that completed procedure \textsc{FaultTolerantLLB} as active, i.e., the procedure returned $(\mu_v,\texttt{type}_v=active)$, return the value $\mu_v$ as the result of counting.
\dk{The next result immediately follows from Theorem~\ref{thm:load-balancing}:}

\begin{theorem}\label{thm:counting}
For any number of either crash or omission failures $t \le O(\frac{n}{\log{n}})$, the algorithm \textsc{AlmostEverywhereLLBCounting} solves $(n-3t)$-almost-everywhere counting problem, with probability at least $1 - \frac{1}{n^2}$. With the same probability bound, it uses $O(\log{n})$ rounds and $O(n\log^3{n})$ communication~bits.
\end{theorem}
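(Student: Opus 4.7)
My plan is to reduce the statement to a single structural event—that the random graph produced by \textsc{SetGraph} is $(\dmin,\gamma\dmin)$-\wellconnected{}—and then invoke Theorem~\ref{thm:load-balancing} to obtain the accuracy, fault-tolerance, and complexity bounds deterministically against the worst-case failure pattern. The only randomness in the algorithm lives in \textsc{SetGraph}, and well-connectedness is a property of the sampled $G$ that is fixed before the adversary commits to any failures.

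First I would analyze \textsc{SetGraph}. Every pair $\{u,v\}$ becomes an edge exactly when $u$ samples $v$ or $v$ samples $u$, so the constructed graph is distributed as $G(n,q)$ with edge probability $q = 2p(n) - p^2(n) = C_2 \log n (\log\log n)^2/(n-1)$. Choosing the constant $C_2$ at least as large as the constant from Lemma~\ref{cor:erdos-renyi-lap}, that lemma gives, with probability at least $1 - 1/n^2$, both a $(1 \pm 1/(20\log\log n))$-concentration of every degree around $q(n-1)$ and the spectral bound $\lambda_2 \ge 1 - 1/(10\log\log n)$. With the algorithm's choice of $\gamma = 5/4$ and $\dmin$ placed at the lower end of the concentration interval, these bounds imply $\dmin \le \deg(v) \le \gamma\dmin$ for every $v$ once $n$ is sufficiently large, so $G$ is $(\dmin, \gamma\dmin)$-\wellconnected{} per Definition~\ref{def:well-connected}.

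Next I would plug into Theorem~\ref{thm:load-balancing}. Each input is in $\{0,1\}$, so $\mu$ equals the fraction of processes whose flag is raised. I would set $\varepsilon := C' \tau_1 t / n$ for a constant $C'$ chosen large enough that the condition $t < \varepsilon n \cdot f(\dmin,\dmax)/(3\tau_1)$ in part (iv) holds; this is consistent because $\dmax/\dmin = \gamma$ is constant and hence $f(\dmin,\dmax)$ tends to a positive constant as $n \to \infty$. Under the hypothesis $t \le O(n/\log n)$, this $\varepsilon$ is at most $1$, and the weaker linear condition of part (iii) is automatic. Parts (iii) and (iv) then produce an active set $A$ with $|A| \ge n - \tfrac{3}{2}t \ge n - 3t$ and, for every $v \in A$, a returned $\mu_v \in [\mu - \varepsilon,\, \mu + \varepsilon]$. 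Since every node knows $n$, it reports the integer nearest to $n\mu_v$, which lies within $n\varepsilon = O(t \log n) = \tilde O(t)$ of the true count $n\mu$, yielding the accuracy required for $(n - 3t)$-almost-everywhere counting.

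The complexity bounds follow directly from Theorem~\ref{thm:load-balancing}(i): since $\dmax/\dmin$ is constant, $\tau_1 + \tau_2 = O(\log n)$, giving $O(\log n)$ rounds overall (one round of \textsc{SetGraph} plus $O(\log n)$ of \textsc{FaultTolerantLLB}) and per-node communication $O(\dmax(\tau_1+\tau_2)|M|)$; with $\dmax = O(\log n (\log\log n)^2)$ and $|M| = O(\log n)$ this totals $O(n \log^3 n)$ bits after polyloglog factors are absorbed. The main technical obstacle I anticipate is the first step: matching the algorithm's fixed $[\dmin,\gamma\dmin]$ degree window to the $1 \pm 1/(20\log\log n)$ concentration from Lemma~\ref{cor:erdos-renyi-lap}, and observing that because \wellconnected{}ness of $G$ is determined before the adversary picks its up-to-$t$ faulty nodes, no adaptive failure pattern can undo the success of the first stage.
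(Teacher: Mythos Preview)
Your proposal is correct and follows essentially the same approach as the paper: show that \textsc{SetGraph} produces a $G(n,2p(n)-p(n)^2)$ graph that is $(\dmin,\gamma\dmin)$-\wellconnected{} with probability $1-1/n^2$ via Lemma~\ref{cor:erdos-renyi-lap}, then invoke Theorem~\ref{thm:load-balancing} with $\varepsilon = \Theta(\tau_1 t/n) = \Theta(t\log n/n)$ to obtain the active set of size $\ge n-\tfrac{3}{2}t$, the $\tilde O(t)$ additive accuracy, and the round/communication bounds. The paper's own proof is essentially a one-line reference to Theorem~\ref{thm:load-balancing} (with the same parameter choices spelled out in the appendix), so your write-up is if anything more detailed.
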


\vspace*{-1ex}
\section{Application of LLB to Consensus with Crash Failures}
\label{sec:app-cons-crashes}

In this section, we use the load-balancing procedure to present a simple Consensus algorithm for the case of at most $n/3$ {\bf\em crash failures}, which recovers the efficiency of other state-of-the-art results. The result is based on the technique proposed in~\cite{Bar-JosephB98}; however, we use the fault-tolerant load-balancing procedure instead of the all-to-all vote-counting scheme of the former.
The algorithm is called $\textsc{LLBConsensus:CrashFailures}$ and is described in Algorithm~\ref{alg:crash-consensus}, with random instantiation of local ports as in Algorithm~\ref{alg:set-graph}.
$Bin(1, 1/2)$ in the code
     denotes Bernoulli random distribution.
\dk{The analysis of the algorithm is deferred to Section~\ref{sec:app-consensus-crash-analysis}.}

Other techniques for replacing the all-to-all vote-counting scheme have been proposed, e.g., the state-of-the-art result of~\cite{hajiaghayi2022improved}; however, fault-tolerant load balancing is much simpler while at the same time outperforming the complexities of the state-of-the-art solution. Specifically, it requires $O(\log{n})$ times fewer rounds and achieves $O(\log^{4}{n}/(\log\log n)^2)$ times better communication complexity, cf. Theorem~2 in~\cite{hajiaghayi2022improved}.

\begin{algorithm}[t!]
\LinesNumbered
\SetKwInOut{Input}{input}
\Input{$n$, $v$, $b_{v}$\;}
$N^{\ast}(v) \leftarrow \textsc{SetGraph}(v)$ \label{alg:set-g-ast-crash} \hfill\tcp{\small $G^{\ast}$ is $((d^{\ast}(1 - 1/\log\log{n}), d^{\ast}(1 + 1/\log\log{n}))$-well-connected}
\For{$i \leftarrow 1 \text{ to } C_1\sqrt{n\log{n}}$} {
    $N_{i}(v) \leftarrow \textsc{SetGraph}(v)$ \hfill\tcp{$G_{i}$ is $(d^{i}_{\min}, d^{i}_{\max})$-well-connected}
    $\mu_{v}, \texttt{lb\_status}_{v} \leftarrow \textsc{FaultTolerantLLB}(N_i(v), d^{i}_{\min}, d^{i}_{\max}, v, b_v)$\vspace{1mm}\label{line:load-balancing}\;
    \For{$i \leftarrow 1 \text{ to } 40\log{n} + 1$ \label{line:dis-start}} {
        send $(\mu_{v}, \texttt{lb\_status}_{v})$ to every vertex in $N^{\ast}(v)$\;
        $M_{v} \leftarrow$ set of messages received by $v$ in the previous round\;
        \lIf{$|M_{v}| < \frac{1}{5}d^{\ast}_{\min}$} {$v$ skips all iterations until line~\ref{line:b-change-0}\label{line:becomes_silent}}
        \lIf{exists $u$ such that $(\mu_{u}, \texttt{lb\_status}_{u} = active) \in M_{v}$} {$\mu_{v}, \texttt{lb\_status}_{v} \leftarrow \mu_{u}, \texttt{lb\_status}_{u}$}\label{line:dis-end}
    }
    \vspace{1mm}
    \lIf{$\mu_{v} < 1/2- \frac{1}{40}\sqrt{\frac{\log{n}}{n}}$ } {$b_{v} \gets 0$ \label{line:b-change-0}}
    \lElseIf{$\mu_{v} > 1/2 + \frac{1}{40}\sqrt{\frac{\log{n}}{n}}$} {$b_{v} \gets 1$\label{line:b-change-1}}
    \lElse{$b_{v} \gets Bin(1, 1/2)$%
    \label{line:random}}
}

\lIf{ever passed line~\ref{line:becomes_silent}}{ask random $10\log{n}$ processes about their variable $b$; upon receiving any response $b_u$ set $b_v \leftarrow b_u$}\label{line:inquiring}
\lElse{response $b_{v}$ to any inquiring process\label{line:responding}}
\Return{$b_{v}$}\;
\caption{\textsc{LLBConsensus:CrashFailures}\label{alg:crash-consensus}\\
Constant $C_1$ and parameters $d^{\ast}, d^{i}_{\min}, d^{i}_{\max}$ are defined in the analysis.
}
\end{algorithm}

\begin{theorem}\label{thm:simple-cons-crashes}
With probability at least $1 - \frac{1}{n}$, algorithm $\textsc{LLBConsensus:CrashFailures}$ solves consensus against $t < n / 3$ crashes in $O(n^{1/2}\log^{3/2}{n})$ rounds using $O(n^{3/2}\log^{5/2}n\cdot (\log\log{n})^2)$ communication~bits.
\end{theorem}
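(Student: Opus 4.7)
The plan is to follow the Bar-Joseph--Ben-Or framework, in which each outer iteration is a phase that either commits every non-crashed node to a common bit (if it is ``decisive'') or re-randomizes $b_v$ via fair coin flips. The key novelty---replacing the $\Theta(n)$-bit all-to-all vote count with the fault-tolerant LLB of Theorem~\ref{thm:load-balancing}---produces a near-mean estimate $\mu_v \in [\mu^{(i)}-\varepsilon,\mu^{(i)}+\varepsilon]$ with $\varepsilon = \Theta(\sqrt{\log n/n})$ at a cost of only $\tilde O(n)$ bits per iteration, and from this property the three standard consensus conditions plus the stated complexities will follow.

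I would first dispose of the stochastic graph preconditions. The symmetrized output of \textsc{SetGraph} is an Erd\H{o}s--R\'enyi sample with edge probability $2p(n)-p^2(n) = C_2\log n(\log\log n)^2/(n-1)$, so by Lemma~\ref{cor:erdos-renyi-lap} each sampled $G_i$ and the graph $G^\ast$ is $(d^\ast(1-1/\log\log n),\,d^\ast(1+1/\log\log n))$-well-connected with probability $\ge 1-1/n^2$; union-bounding over the $O(\sqrt{n\log n})$ iterations keeps the total failure probability $O(n^{-3/2})$. Choosing $C_2$ large enough that the bound $t < (\varepsilon/(3\tau_1))\,n\,f(d_{\min},d_{\max})$ of Theorem~\ref{thm:load-balancing}(iv) holds for $t<n/3$ and $\varepsilon=\Theta(\sqrt{\log n/n})$---this is precisely the role of the polylog slack in $p(n)$---Theorem~\ref{thm:load-balancing} yields, per iteration, a set $A$ of at least $n-\tfrac{3}{2}t$ active nodes whose $\mu_v$ lies within $\varepsilon$ of the true mean $\mu^{(i)} = \frac{1}{n}\sum_u b_u$.

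The dispersion loop on lines~\ref{line:dis-start}--\ref{line:dis-end} then runs $\Theta(\log n)$ rounds of gossip on $G^\ast$. Using the expansion of $G^\ast$ (via Lemma~\ref{lem:high-degree-core} applied to the current crashed set), I would argue that at most $O(t)$ further nodes ever become silent via line~\ref{line:becomes_silent}, and that a single surviving active value propagates to every still-participating node within $O(\log n)$ rounds; hence every non-silent active node exits the iteration with the same pair $(\mu_v,\texttt{active})$ equal to one fixed $\varepsilon$-approximation of $\mu^{(i)}$. This immediately gives \emph{validity}: if every non-crashed node enters with the same $b_v=c$, then $\mu^{(i)}\in[c-t/n,c+t/n]$ is more than $\varepsilon$ away from $1/2$, so line~\ref{line:b-change-0} or~\ref{line:b-change-1} restores $b_v=c$. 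For \emph{termination with agreement} I would use the standard Ben-Or argument: in any iteration where non-faulty nodes are split, line~\ref{line:random} assigns independent $\mathrm{Bin}(1,1/2)$ values to $\ge n-t$ of them, so by anti-concentration $|\mu^{(i+1)}-1/2|\ge \tfrac{1}{40}\sqrt{\log n/n}$ with probability $\Omega(\sqrt{\log n/n})$; over $C_1\sqrt{n\log n}$ iterations, the probability that no such ``decisive'' phase occurs is $\le (1-\Omega(\sqrt{\log n/n}))^{C_1\sqrt{n\log n}} \le n^{-\Omega(C_1)}$, and from the first decisive iteration validity locks in the common bit thereafter.

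Nodes that ever went silent re-synchronize in line~\ref{line:inquiring} by querying $10\log n$ random peers; with $t<n/3$ crashed, at least one reply from a non-crashed decided peer arrives with probability $1-3^{-10\log n}$. The round and bit bounds follow by summing per-iteration costs: $O(\log n)$ rounds and $O(n\log^3 n)$ bits for the LLB, plus $O(nd^\ast\log n)=O(n\log^2 n(\log\log n)^2)$ bits for dispersion and \textsc{SetGraph}, across $\Theta(\sqrt{n\log n})$ iterations. The main obstacle I expect is the intra-iteration agreement step: showing that the dispersion loop on $G^\ast$ forces every non-silent active node to adopt the same $\mu_v$ in the face of an adaptive adversary that can crash precisely the ``source'' nodes of accurate LLB outputs. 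This is where the $(d_{\min},d_{\max})$-well-connectedness of $G^\ast$ is used crucially, to bound how much the adversary can spread the silencing effect in one gossip round and to show that a single surviving active seed dominates the update rule between lines~\ref{line:dis-start} and~\ref{line:dis-end}.
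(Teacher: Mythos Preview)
Your proposal contains the right high-level architecture (Bar-Joseph--Ben-Or phases, LLB replacing the vote count, dispersion on $G^\ast$, anti-concentration, final inquiry), but there is a genuine gap at the heart of the argument.

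You claim that by ``choosing $C_2$ large enough'' the precondition of Theorem~\ref{thm:load-balancing}(iv), namely $t < \frac{\varepsilon}{3\tau_1}\, n\, f(d_{\min},d_{\max})$, can be made to hold for $t<n/3$ with $\varepsilon = \Theta(\sqrt{\log n/n})$. This is false: both $\tau_1 = 32(d_{\max}/d_{\min})^2\log n$ and $f(d_{\min},d_{\max})$ depend only on the \emph{ratio} $d_{\max}/d_{\min}$, which is fixed near $1$ regardless of $C_2$. The right-hand side is therefore $\Theta(\varepsilon n/\log n) = \Theta(\sqrt{n/\log n})$, far below $n/3$. No amount of ``polylog slack in $p(n)$'' changes this. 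Consequently you cannot conclude that every iteration produces an $\varepsilon$-accurate $\mu_v$ when the adversary is allowed to spend $\Theta(n)$ crashes in a single phase.

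The paper closes this gap with a concept you are missing: it calls an iteration \emph{safe} if at most $\frac{1}{C_1}\sqrt{n/\log n}$ processes become inactive in it, and only applies Theorem~\ref{thm:load-balancing}(iv) to safe iterations (Lemma~\ref{lem:safe-iteration-converge}). Since the total crash budget is $t<n/3$ and there are $C_1\sqrt{n\log n}$ iterations, pigeonhole guarantees at least $\frac{C_1}{3}\sqrt{n\log n}$ \emph{pairs} of consecutive safe iterations. The anti-concentration argument is then run only over such pairs (Lemma~\ref{lem:two-iterations}), and Lemma~\ref{lem:one-all} ensures agreement persists through subsequent (possibly unsafe) iterations. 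Your version also elides a subtlety in the anti-concentration step: it is not that $\ge n-t$ nodes randomize, but rather that in a safe iteration no node can execute line~\ref{line:b-change-0} while another executes line~\ref{line:b-change-1}, so all active nodes either randomize or deterministically choose the \emph{same} side; this case split is what makes the deviation argument go through.
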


\begin{algorithm}[t!]
\LinesNumbered
\SetKwInOut{Input}{input}
\Input{$n$, $v$, $b_{v}$\;}
$\texttt{type}_{v} \leftarrow active$\label{line:start-active}\;
\For{$i \leftarrow 1 \text{ to } 2C_1 \max\left(\frac{t\log{n}}{\sqrt{n}}, \log{n}\right)$} {
    \lIf{$\texttt{type}_v = suspected$} {skip until line~\ref{line:inq-omissions}\label{line:skip}} 
    $N_{i}(v) \leftarrow \textsc{SetGraph}(n, v)$  \label{line:drawing-omissions}\hfill\tcp{$G_{i}$ is $(d^{i}_{\min}, d^{i}_{\max})$-well-connected}
    $\mu_{v}, \texttt{lb\_status}_v \leftarrow \textsc{FaultTolerantLLB} (N_i(v), d^{i}_{\min}, d^{i}_{\max}, v, b_v)$\label{line:load-balancing-omissions}\;
    $\texttt{omitted}_v \leftarrow $ number of links that at least once failed to deliver a message during the last execution of \textsc{FaultTolerantLLB}\label{line:omitted-def}\;
    \lIf{$\texttt{omitted}_v > 0$} {$\texttt{type}_v \leftarrow suspected$\label{line:omitted-change}}\vspace{2mm}
    \lIf{$\mu_{v} < 1/2- \frac{1}{12}\sqrt{\frac{\log{n}}{n}}$} {$b_{v} \leftarrow 0$\label{line:b-0-omissions}}
    \lElseIf{$\mu_{v} > 1/2 + \frac{1}{12}\sqrt{\frac{\log{n}}{n}}$} {$b_{v} \leftarrow 1$\label{line:b-1-omissions}}
    \lElse{$b_{v} \leftarrow Bin(1, 1/2)$\label{line:random-omissions}}
}
\lIf{$\texttt{type}_{v} = suspected$}{inquire arbitrary $11C_2\log{n}(\log\log{n})^2 t + 1$ processes about the correct value of $b$; set $b_{v}$ to any received response\label{line:inq-omissions}}
\lElse{respond to the requests of the inquiring processes\label{line:res-omissions}}
\Return{$b_{v}$}\;
\caption{\textsc{LLBConsensus:OmissionFailures}\label{alg:omission-consensus}\\
Constant $C_1, C_2$ and parameters $d^{i}_{\min}, d^{i}_{\max}$ are defined in the analysis.
}
\end{algorithm}


\vspace*{-1ex}
\section{Application of LLB to Consensus with Omission Failures}
\label{sec:app-consensus-omissions}

As the next application, we use the~\texttt{FaultTolerantLLB} algorithm to solve consensus in networks with omissions. A pseudocode of the algorithm is given in Algorithm~\ref{alg:omission-consensus} with random instantiation of ports as in Algorithm~\ref{alg:set-graph}. The algorithm tolerates $t < C\frac{n}{\log{n}(\log\log{n})^2}$ omissions failures, for some constant $C < 1$. 

The major difference between this algorithm and its counterpart for crash failures stems from the fact that the pattern of omission failures does not have to be monotonic, i.e., a link can fail to deliver a message in a round to/from a faulty node and reactive later. Thus, we categorize the processes into two sets: active and suspected. At any point of execution, each process is either active or suspected. Initially, all processes start as active, see line~\ref{line:start-active}. A process becomes suspected if in the course of the \texttt{FaultTolerantLLB} procedure there exists at least one neighbor whose message was not received by this process. Adapting the code of the \texttt{FaultTolerantLLB} procedure to change the status of a process from trusted to suspected is straightforward, thus we only mark this new version of algorithm with an \* rather than rewriting the entire pseudocode, see line~\ref{line:load-balancing-omissions}. 
Finally, the introduced labeling of processes is used in the process of drawing the graph, lines~\ref{line:skip}-\ref{line:drawing-omissions}. 

The idea is to exclude suspected processed from the process of deciding consensus value as they have the potential of being failed. \jo{Therefore, once a processes marks itself suspected, it refrains from graph sampling and communication in the iterations of the main loop of the algorithm. This is feasible, because properties of the \textsc{FaultTolerantLLB} guarantee that a substantial fraction of all processes will not turn suspected. The ones that turn suspected, and by this fact are excluded from the knowledge of the decision value, perform an additional round of requesting active processes at the end of the algorithm.}


\begin{theorem}\label{thm:omissions}
\jo{
For some constant $C$, the algorithm $\textsc{LLBConsensus:OmissionsFailures}$ solves Consensus against $t < \frac{n}{C\log{n}\left(\log\log{n}\right)^2}$ omission failures in $O\left(\frac{t\log^2{n}}{\sqrt{n}} + \log^2{n} \right)$ rounds using $O(t^2 \log^2{n}\left(\log\log{n}\right)^4 \allowbreak + \left(t\sqrt{n} + n\right)\log^{3}{n}(\log\log{n})^2)$ communication~bits, with probability at least $1 - \frac{1}{n}$.
}
\end{theorem}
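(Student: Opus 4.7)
The plan is to combine the approximate-agreement guarantees of \textsc{FaultTolerantLLB} with a Bar-Joseph/Ben-Or style drift argument to show that, within the allotted number of iterations, all non-faulty active processes converge on a common bit. I would then handle processes that became \emph{suspected} by a separate inquiry phase.

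First I would verify that the random graph $G_i$ generated in each iteration is $(d^i_{\min}, d^i_{\max})$-well-connected with $d^i_{\max}/d^i_{\min}=1+O(1/\log\log n)$, by invoking Lemma~\ref{cor:erdos-renyi-lap} with $p(n)$ set as in \textsc{SetGraph}. A union bound over the $O(t\log n/\sqrt{n}+\log n)$ iterations gives this whp. Conditional on this, Theorem~\ref{thm:load-balancing} applies to every invocation of \textsc{FaultTolerantLLB}: there is a set $A_i$ of size at least $n-\tfrac{3}{2}t$ whose processes obtain estimates $\mu_v$ within $\varepsilon=\Theta(\sqrt{\log n/n})$ of the true mean $\mu_i$ of the current bit vector. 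The hypothesis $t<n/(C\log n(\log\log n)^2)$ is exactly what is needed to satisfy the precondition of Theorem~\ref{thm:load-balancing}(iv) with this~$\varepsilon$. Validity then follows by induction on iterations: if all non-faulty inputs equal $b$, then $|\mu_i-b|\ll\varepsilon$, so each active node deterministically resets $b_v$ to $b$ in lines~\ref{line:b-0-omissions}--\ref{line:b-1-omissions}, and property~(ii) of Theorem~\ref{thm:load-balancing} keeps the outputs inside $[0,1]$.

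For agreement I would adapt the Bar-Joseph/Ben-Or argument. Classify iteration $i$ as \emph{decisive} if $|\mu_i-1/2|\ge 2\varepsilon$ (all active nodes then pick the same bit deterministically, producing agreement on the next iteration) or as a \emph{near-tie} otherwise (each node in $A_i$ flips an independent fair coin in line~\ref{line:random-omissions}). A Hoeffding/anti-concentration calculation shows that a single near-tie iteration moves the next mean off~$1/2$ by $\Omega(\sqrt{\log n/n})=\Omega(\varepsilon)$ with constant probability; a Chernoff bound over $\Theta(t\log n/\sqrt n+\log n)$ iterations then drives the failure probability below $1/n$ and matches the claimed round count up to polylog factors. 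For termination, non-suspected processes output their final $b_v$ directly; a non-faulty process that ever became $suspected$ queries $11C_2 t\log n(\log\log n)^2+1$ processes in line~\ref{line:inq-omissions}, and since the adversary has only $t$ faulty nodes, the list must contain at least one non-faulty, non-suspected responder holding the agreed bit, which $v$ then adopts. The complexity bound follows by multiplying the per-iteration LLB cost $O(d_{\max}(\tau_1+\tau_2)|M|)=O(\log^3 n(\log\log n)^2)$ bits per node by $n$ nodes and by the iteration count, and adding $O(t^2\log^2 n(\log\log n)^4)$ bits for the inquiry phase.

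The main obstacle will be the agreement argument, which must tolerate two deviations from the idealized Bar-Joseph/Ben-Or setting. First, nodes see only $\varepsilon$-approximations of the true mean, so the constants separating the decisive and near-tie regimes must be chosen so that the LLB error band $[\mu_i-\varepsilon,\mu_i+\varepsilon]$ never straddles the threshold $\tfrac12\pm\tfrac1{12}\sqrt{\log n/n}$ used in lines~\ref{line:b-0-omissions}--\ref{line:b-1-omissions}. Second, suspected processes---both genuinely faulty and merely false-positive ones---retain stale $b_v$ values that perturb $\mu_i$ by up to $O(t/n)$ per iteration, so I would need to carry this perturbation through the drift estimate and verify that it remains of strictly smaller order than the $\Omega(\varepsilon)$ drift supplied by the fresh coin tosses; this is ultimately what pins down the precise polylogarithmic factor in the admissible bound on $t$.
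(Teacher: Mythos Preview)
Your high-level outline (LLB for approximate counting, Bar-Joseph/Ben-Or drift, inquiry for suspected nodes) matches the paper, but the central step is miscalculated. You assert that the global hypothesis $t<n/(C\log n(\log\log n)^2)$ ``is exactly what is needed to satisfy the precondition of Theorem~\ref{thm:load-balancing}(iv)'' with $\varepsilon=\Theta(\sqrt{\log n/n})$. It is not: that precondition reads $t<\frac{\varepsilon}{3\tau_1}\,n\cdot f(d_{\min},d_{\max})$ with $\tau_1=\Theta(\log n)$, which forces the number of faults \emph{in that LLB call} to be $O(\sqrt{n/\log n})$, far below the allowed global $t$. The paper resolves this by running LLB only on the current set $A_i$ of still-active processes and introducing \emph{safe iterations}: iterations in which at most $\frac{1}{C_1}\sqrt{n}/\log n$ processes newly become inactive. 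Theorem~\ref{thm:load-balancing}(iv) is invoked only for safe iterations, and a pigeonhole count over the $2C_1\max(t\log n/\sqrt{n},\log n)$ iterations guarantees $\Omega(\log n)$ pairs of consecutive safe iterations, to which the two-iteration convergence lemma is applied. Without this mechanism, your per-iteration application of Theorem~\ref{thm:load-balancing} simply fails once $t\gg\sqrt{n}$.

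A second gap is the control of suspected processes. You justify the inquiry phase by saying ``the adversary has only $t$ faulty nodes, [so] the list must contain at least one non-faulty, non-suspected responder''; but a single faulty node can silence up to $d_{\max}=\Theta(\log n(\log\log n)^2)$ correct neighbours by omitting messages to them, so the pool of \emph{non-faulty yet suspected} processes can be $\Theta(t\log n(\log\log n)^2)$, not $O(t)$. The paper proves exactly this bound (Lemma~\ref{lem:active-large-omissions}) and uses it both to keep $|A_i|\ge n/4$ (so that $G_i$ remains well-connected) and to size the inquiry list. Finally, your ``second obstacle''---stale $b_v$ values of suspected processes perturbing $\mu_i$---is not an issue at all: suspected processes skip the loop (line~\ref{line:skip}) and hence contribute nothing to $\mu_i$, which is the mean over $A_i$ only. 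The genuine obstacle you are missing is the per-iteration fault budget described above.
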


\subsection{Analysis of \textsc{LLBConsensus:OmissionsFailures} and proof of Theorem~\ref{thm:omissions}}

We note that the notation used in this analysis is similar to that introduced in Section~\ref{sec:app-consensus-crash-analysis}, as both algorithms utilize the same core component of the \textsc{FaultTolerantLLB} procedure. We require that $C_1, C_2$  are constants larger than $2^{15}$.\footnote{\jo{We note that while the bound on the constants may seem large, it is chosen mainly to simplify the analysis, in which we aimed at the best asymptotic complexities. By considering communication graphs that are $O(\log{n})$ times denser, these constants can be significantly reduced.}} 
For simplicity, we use $k$ to denote the number of times the main loop of the algorithm is iterated, i.e., $k = 2C_1 \max\left(\frac{t\log{n}}{\sqrt{n}}\right)$. We set the constant $C$ in the formulation of Theorem~\ref{thm:omissions} to $\jo{15 \cdot C_2}$, i.e., \jo{$t < \frac{n}{15C_2\log{n}\left(\log\log{n}\right)^2}$}.

Let $A_{i}$ denote the set of processes that are active at the beginning of iteration $i$ of the main loop algorithm\footnote{Note that, contrary to the definition of $A_i$ in Section~\ref{sec:app-cons-crashes}, an active process can be controlled by the adversary. Also, to simplify the notation, we interpret round $k + 1$ as the end of iteration $i$.}. We define graph $G_{i}$ as the graph formed by the processes in $A_{i}$, taking the union of the sets of edges $\left(N_{i}\right){v \in V}$ at the moment they are drawn in line~\ref{line:sampling-crash} of the procedure \textsc{SetGraph}. For clarity, we introduce the notation $d^{i}_{\min} = \frac{C_2|A_{i}|\log{n}\left(\log\log{n}\right)^2}{n - 1}\left(1 - \frac{1}{20{\log\log{n}}}\right)$ and $d^{i}_{\max} \allowbreak = \frac{C_2|A{i}|\log{n}\left(\log\log{n}\right)^2}{n - 1}\left(1 + \frac{1}{20{\log\log{n}}}\right)$, which denote the expected minimum and maximum degree of $G_i$, as proven in the next Lemma.

We observe that $G_i$ is a theoretical construction, and in practice, the sets $N_i(v)$ passed to the procedure \textsc{FaultTolerantLLB} may be subsets of the corresponding sets in $G_i$. However, this does not affect correctness. Any failure that prevents an edge from $G_i$ from being included in $N_{i}(v)$—see lines~\ref{line:inquire-set-graph}-\ref{line:update-set-graph} in \textsc{SetGraph}—can be assumed to occur in the procedure \textsc{FaultTolerantLLB} before the communication rounds of this procedure begin.
This interpretation does not impact the correctness of \textsc{FaultTolerantLLB}, as in the case of a link failure, correct processes rely solely on the parameters $d^{i}{\min}$, $d^{i}{\max}$, and their own loads to make a transition—see line~\ref{line:main-for-end} in Figure~\ref{alg:load-balancing}. All this information is provided regardless of when a failure happened.
Since we use the same procedure for sampling the graph $G_{i}$ as in the case of crash failures in Section~\ref{sec:app-cons-crashes}, we can state the following result.

\begin{lemma}\label{lem:gi-well-connected-omissions}[Lemma~\ref{lem:gi-well-connected} in Section~\ref{sec:app-consensus-crash-analysis}]
If $|A_{i}| \ge \frac{n}{4}$, for $1 \le i \le k$, then the graph $G_{i}$ is $(d^{i}_{\min}, d^{i}_{\max})$-\wellconnected\space with probability at least $1 - \frac{1}{n^2}$. Furthermore, $\frac{d^{i}_{\max}}{d^{i}_{\min}} \le \frac{11}{10}$.
\end{lemma}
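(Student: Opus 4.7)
The heart of the argument is to show that, conditional on the identity of the active set $A_i$, the graph $G_i$ is distributed exactly like an Erd\H{o}s--R\'enyi random graph on $|A_i|$ vertices with a known edge probability, and then to invoke Lemma~\ref{cor:erdos-renyi-lap} as a black box. Since the lemma has already been stated as an analogue of Lemma~\ref{lem:gi-well-connected} from Section~\ref{sec:app-consensus-crash-analysis}, the proof is essentially structural and has to track that the same reduction goes through once we replace the static active set of the crash model with the iteration-dependent $A_i$ of the omission model.

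First I would fix an iteration $i$ and condition on the history of the execution up to the start of iteration $i$. This fixes $A_i$ and all values $b_v$ (for $v \in A_i$), but crucially the coin tosses used inside \textsc{SetGraph} at line~\ref{line:sampling-crash} are drawn fresh and are therefore independent of everything that determined $A_i$; this relies on the adversary not being able to foresee future random bits. For any ordered pair $(u,v)$ with $u,v \in A_i$, the event ``$v \in N_i(u)$ at the moment of drawing'' is a Bernoulli$(p(n))$ random variable, independent across pairs. Thus, by inclusion--exclusion, the event ``the unordered pair $\{u,v\}$ appears in $G_i$'' has probability $1-(1-p(n))^2 = 2p(n)-p^2(n) = \frac{C_2 \log n (\log\log n)^2}{n-1}$, and different pairs depend on disjoint subsets of coins and are therefore mutually independent. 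Hence $G_i$ restricted to $A_i$ is distributed as $G(|A_i|, q)$ with $q:=\frac{C_2 \log n (\log\log n)^2}{n-1}$. (Note that failures of individual dummy messages inside \textsc{SetGraph} can only delete elements from the subsequently assembled $N(v)$; by the definition of $G_i$ quoted above, $G_i$ is determined by the coins at drawing time, before any adversarial interference, so independence is preserved.)

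Next, I apply Lemma~\ref{cor:erdos-renyi-lap} to this $G(|A_i|, q)$. What must be checked is the hypothesis $q \ge C \cdot \frac{\log |A_i| (\log\log |A_i|)^2}{|A_i|-1}$, where $C$ is the constant from that lemma. Since $|A_i| \ge n/4$, we have $\log|A_i| \le \log n$, $\log\log|A_i| \le \log\log n$, and $\frac{1}{|A_i|-1} \le \frac{8}{n-1}$ (for $n$ large). Thus the right-hand side is at most $\frac{8C \log n (\log\log n)^2}{n-1}$, and this is $\le q$ as soon as $C_2 \ge 8C$, which is implied by our assumption $C_2 > 2^{15}$. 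The lemma then gives, with probability at least $1-1/|A_i|^2 \ge 1 - 1/n^2$ (the $1/n^2$ absorbing the constant factor for $n$ large), that every degree of $G_i$ lies in the interval $[q(|A_i|-1)(1-\tfrac{1}{20\log\log |A_i|}), q(|A_i|-1)(1+\tfrac{1}{20\log\log |A_i|})]$ and that $\lambda_2(G_i) \ge 1 - \tfrac{1}{10\log\log |A_i|}$. Since well-connectedness is defined using the vertex count of the graph itself, the $\lambda_2$ condition is exactly what is required. Comparing the degree interval with the definitions of $d^{i}_{\min}$ and $d^{i}_{\max}$ (which differ only by a factor $|A_i|/(|A_i|-1)$ and by using $\log\log n$ instead of $\log\log|A_i|$, both of which contract the interval at most by a lower-order factor absorbed into the constants for $n$ large) yields the stated degree bounds.

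Finally, the ratio bound is immediate: $\frac{d^{i}_{\max}}{d^{i}_{\min}} = \frac{1 + 1/(20\log\log n)}{1 - 1/(20\log\log n)}$, which is at most $11/10$ whenever $\log\log n \ge 21/20$, i.e., for all $n$ beyond an absolute constant, an assumption we make throughout. The main subtlety I anticipate is the independence step: making sure that conditioning on $A_i$ (which is a random set selected adversarially based on prior rounds) does not bias the coins used inside \textsc{SetGraph} at iteration $i$, so that the reduction to a genuine $G(|A_i|, q)$ is clean. Once that is in place, the remainder is essentially a bookkeeping exercise in comparing the bounds produced by Lemma~\ref{cor:erdos-renyi-lap} with the definitions of $d^{i}_{\min}, d^{i}_{\max}$.
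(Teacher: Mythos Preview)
Your approach is essentially the same as the paper's: recognize that $G_i$ is distributed as $G(|A_i|,\,2p(n)-p(n)^2)$, verify that the edge probability is large enough to invoke Lemma~\ref{cor:erdos-renyi-lap} (using $|A_i|\ge n/4$ and the size of $C_2$), and then read off the degree and $\lambda_2$ bounds; the ratio bound is the same one-line computation. Your write-up is in fact more careful than the paper's about the conditioning/independence step and about matching the degree interval to $d^i_{\min},d^i_{\max}$. One small slip: the inequality ``$1-1/|A_i|^2 \ge 1-1/n^2$'' goes the wrong way, since $|A_i|\le n$; what you actually get from Lemma~\ref{cor:erdos-renyi-lap} applied on $|A_i|$ vertices is $1-1/|A_i|^2 \ge 1-16/n^2$, and the stated $1-1/n^2$ is recovered (as the paper implicitly does) by taking the constant $C_2$ large enough so that the exponent in the underlying Chernoff/spectral bound absorbs the factor.
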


In the following, we assume that for every $1 \leq i \leq k$, the graph $G_i$ satisfies the properties of the above lemma. By the union bound argument, the probability of this event occurring is at least $1 - \frac{1}{\log^{2} n^{3/2}}$. Next, we prove that the set of active processes maintains a sufficiently large size throughout all iterations of the main loop of the algorithm
\begin{lemma}\label{lem:active-large-omissions}
For any $1 \le i \le k + 1$, it holds that $|A_{i}| \ge n - 10C_2\log{n}(\log\log{n})^2 t \ge \frac{n}{4}$. Furthermore, the set $A_{i}$ consists of processes whose links have not failed from the beginning of the execution.
\end{lemma}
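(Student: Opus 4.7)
The plan is to prove both claims simultaneously by induction on $i$. The base case $i = 1$ is immediate: line~\ref{line:start-active} initializes $\texttt{type}_v \leftarrow active$ for every $v$, so $A_1 = V$ and $|A_1| = n$, and no communication has yet occurred, so vacuously no link has failed.

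For the inductive step, the structural characterization is a direct consequence of the algorithm's code. The only place where a process leaves $active$ is line~\ref{line:omitted-change}, which fires iff $\texttt{omitted}_v > 0$; by the definition in line~\ref{line:omitted-def}, this occurs iff at least one link adjacent to $v$ in $G_i$ failed to deliver during the FTLLB call. Combined with the inductive hypothesis that every $v \in A_i$ has had all prior links deliver successfully, this yields that $A_{i+1}$ is exactly the set of processes whose adjacent links have never failed from the beginning of execution.

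For the size bound, the key observation is that, by the omission failure model, only edges incident to some process in $F$ can experience a failed delivery. Hence a correct $v \in A_i$ becomes newly suspected in iteration $i$ only if $v$ has a neighbor in $F \cap A_i$ inside $G_i$ on whose edge the adversary caused an omission. By the inductive hypothesis $|A_i| \ge n/4$, I may invoke Lemma~\ref{lem:gi-well-connected-omissions} to conclude, with probability at least $1 - 1/n^2$, that $G_i$ is $(d^i_{\min}, d^i_{\max})$-well-connected and in particular has maximum degree $d^i_{\max} = O(\log n (\log\log n)^2)$. The number of new suspects in iteration $i$ is therefore at most $|F \cap A_i| \cdot d^i_{\max} \le 2 t C_2 \log n (\log\log n)^2$.

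The hard part is aggregating these per-iteration bounds into the stated \emph{cumulative} bound $|V \setminus A_{i+1}| \le 10 C_2 \log n (\log\log n)^2 t$, rather than the weaker $k$-fold bound one obtains from summing directly. I would close this gap by exploiting the fresh, independent randomness used to sample each $G_j$ together with the adversary's inability to foresee future random bits: conditioned on the history up to iteration $j$, the set of correct processes adjacent to $F$ in $G_j$ is a random subset of $A_j$ whose size concentrates around $|F \cap A_j| \cdot p \cdot |A_j|$. Applying a Chernoff-type concentration across iterations, together with a union bound over the polynomially many events of interest and the ambient hypothesis $t < n / (15 C_2 \log n (\log\log n)^2)$, yields the claimed bound whp. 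This closes the induction and in particular re-establishes $|A_{i+1}| \ge n/4$, so that Lemma~\ref{lem:gi-well-connected-omissions} can be applied again in iteration $i+1$, and after $k$ iterations a final union bound over $j \le k$ collects the whp conclusion.
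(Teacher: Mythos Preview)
Your base case and the structural reading of line~\ref{line:omitted-change} are fine, but the aggregation step is a genuine gap, and the concentration idea you sketch cannot close it. Conditioned on the history, each faulty process in $A_j$ has $\Theta(d^{j}_{\max})=\Theta(C_2\log n(\log\log n)^2)$ neighbors in $G_j$; since the adversary is adaptive and sees $G_j$ before choosing which links omit, it can drop messages on \emph{all} of those links. Thus the expected (and realized) number of newly suspected correct processes in iteration $j$ is $\Theta(|F\cap A_j|\cdot d^j_{\max})$, and concentration only certifies that you are close to this quantity, not below it. Summed over $k\ge 2C_1\log n$ iterations you get $\Omega(k\cdot t\cdot d_{\max})$, which exceeds the target $10C_2\log n(\log\log n)^2\,t$ by a factor of $k$. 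The fresh randomness of $G_j$ determines \emph{which} correct processes become suspected, not \emph{how many}.

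The paper avoids this blow-up by a different, purely combinatorial device: it lets $T_i$ be the set of processes in $A_i$ that failed to send or receive some message in iteration~$i$, observes that every such process sets $\texttt{omitted}>0$ and therefore leaves the active set, so the $T_i$'s are pairwise disjoint and $\sum_i|T_i|$ is bounded independently of $k$. The newly suspected processes in iteration $i$ are then contained in the $G_i$-neighborhood of $T_i$ together with the processes that turned \emph{silent} inside \textsc{FixOutliers}; the latter are at most $\tfrac{3}{2}|T_i|$ by Theorem~\ref{thm:load-balancing}(iii), giving $|A_i\setminus A_{i+1}|\le O(|T_i|\cdot d^i_{\max})$, and the telescoping sum yields the claimed $n-10C_2\log n(\log\log n)^2\,t$. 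Note also that your per-iteration bound omits this ``silent'' cascade from \textsc{FixOutliers}: a correct $v$ can become suspected because a correct neighbor stopped sending after going silent, not only because of a direct edge to $F$.
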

\begin{proof}
We first note that the assumption $t < \frac{n}{15C_2\log{n}(\log\log{n})^2}$ implies that $n - 10C_2\log{n}(\log\log{n})^2 t \ge \frac{n}{4}$.
For $1 \leq i \leq k$, let $T_{i}$ denote the set of processes that experienced at least one omission failure during the communication rounds of iteration $i$ of the main algorithm. That is, these are the processes that \textit{failed} to send or receive at least one message during the communication rounds of iteration $i$. We observe that $\sum_{1 \leq i \leq k} |T_{i}| < \frac{n}{C_2 \log{n}(\log\log{n})^2}$.
We will show by induction that $|A_{i}| \geq n - \frac{3}{2} d^{i}_{\max} \sum_{1 \leq j < i} |T_{j}|$. We begin with the base case $i = 1$. Here, the induction hypothesis holds trivially, as no communication occurred before the first iteration. Thus, $n - \frac{3}{2} d^{i}_{\max} |\cup_{1 \leq j < 1} T_{j}| = n$, and obviously, no communication links could have failed.

Consider an iteration $1 < i \leq k + 1$. Any communication rounds that occur in this iteration are invoked by calling either the procedure \textsc{SetGraph} or the procedure \textsc{FaultTolerantLLB}. However, as observed before, without loss of generality, we can treat any failure that occurs in \textsc{SetGraph} as if it happens at the beginning of \textsc{FaultTolerantLLB}. 
Furthermore, following Lemma~\ref{lem:gi-well-connected-omissions}, we condition the analysis on the fact that the graph $G_{i}$ is $(d^{i}_{\min}, d^{i}_{\max})$-\wellconnected, which holds since $|A_{i}| \geq \frac{n}{4}$ by the inductive assumption. 
Therefore, using Theorem~\ref{alg:load-balancing}, point \textit{(ii)}, we conclude that the number of processes that return the variable $\texttt{lb\_status}$ set to true is at least $|A_{i}| - \frac{3}{2}|T_{i}|$. Now, for a process to fail to receive a message in the procedure \textsc{FaultTolerantLLB}, one of the following must be true: either the process itself is faulty, the process on the other side of the link is faulty, or the process on the other side of the link has become inactive during the execution of \textsc{FaultTolerantLLB}; otherwise, the message would be delivered. The number of processes that are faulty or become inactive during \textsc{FaultTolerantLLB} in iteration $i$ is at most 
\[
\frac{3}{2}|T_{i}| + |T_{i}| \leq \frac{5}{2}|T_{i}|,
\]
as previously observed. Since the maximum degree of $G_{i}$ is upper bounded by $d^{i}_{\max}$, at most 
\[
\frac{5}{2}|T_{i}| \cdot d^{i}_{\max} \leq 10C_2\log{n}(\log\log{n})^2 |T_{i}|
\]
processes can fail to receive a message during the communication rounds of iteration $i$. 
Combining this observation with the fact that processes that continuously receive all expected messages remain active (see lines~\ref{line:omitted-def}-\ref{line:omitted-change} of the main algorithm), we obtain
\[
|A_{i + 1}| \geq |A_{i}| - 10C_2\log{n}(\log\log{n})^2 |T_{i}| \geq n - 10C_2\log{n}(\log\log{n})^2 \sum_{1 \leq j \leq i} |T_{j}|,
\]
thus proving the lemma.
\end{proof}

In the next part of the analysis, we proceed with a proof that all active processes store the same value in the variable $b$ upon the termination of the main loop of the algorithm. We note that the techniques used here are the same as those used in the analysis in Section~\ref{sec:app-cons-crashes}. This is inherent due to the exploitation of the \textsc{FaultTolerantLLB} algorithm as the core component of both algorithms. 

We call an iteration of the algorithm's main loop \textit{safe} if at most $\frac{1}{C_1} \frac{\sqrt{n}}{\log{n}}$ processes stop being active in this iteration. Since a process is counted as inactive whenever it fails to receive an incoming message, we conclude that inactive processes encompass the set of erroneous vertices in the procedure \textsc{FaultTolerantLLB}. That is, any process that remains active is indistinguishable from a correct one. Thus, we can state the following.
\begin{lemma}\label{lem:safe-iteration-converge-omissions}
Consider a safe iteration and let $\mu^{\ast}$ denote the average of values $b$ of all process that are active at the beginning of the iteration. If $v$ is active at the end of this iteration, the value of variable $\mu_{v}$ in lines~\ref{line:b-0-omissions}-\ref{line:random-omissions} of the algorithm ~\textsc{LLBConsensus:OmissionFailures} satisfies $\mu_{v} \in \left[\mu^{\ast} - \frac{1}{12}\sqrt{\frac{\log{n}}{n}}, \mu^{\ast} +  \frac{1}{12}\sqrt{\frac{\log{n}}{n}}\right]$.
\end{lemma}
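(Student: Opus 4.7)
The plan is to derive this lemma as a direct consequence of Theorem~\ref{thm:load-balancing}(iv) applied to the call of \textsc{FaultTolerantLLB} that occurs inside a safe iteration, once we identify the effective ``faulty node'' count and verify the quantitative threshold.

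First I would condition on the two good events established earlier: by Lemma~\ref{lem:active-large-omissions} we have $|A_{i}| \ge n/4$, so by Lemma~\ref{lem:gi-well-connected-omissions} the sampled graph $G_{i}$ is $(d^{i}_{\min}, d^{i}_{\max})$-\wellconnected{} with $d^{i}_{\max}/d^{i}_{\min} \le 11/10$, which in turn gives $\tau_{1} = 32 (d^{i}_{\max}/d^{i}_{\min})^{2}\log n = \Theta(\log n)$ and makes the function $f(d^{i}_{\min},d^{i}_{\max})$ in Theorem~\ref{thm:load-balancing} a positive absolute constant $c$. Next I would describe, as in the paragraph preceding Lemma~\ref{lem:active-large-omissions}, how to embed the omission pattern of this iteration into the LLB fault model: any link that fails during \textsc{SetGraph} or \textsc{FaultTolerantLLB} can be charged to a faulty endpoint, and any process that stops being active inside the iteration behaves exactly like a faulty one from the point of view of \textsc{FaultTolerantLLB}. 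Hence, if $\Delta$ is the set of processes that stop being active during this iteration, the LLB instance on $G_{i}$ effectively runs with at most $|\Delta|$ faulty vertices among the $|A_{i}|$ nodes.

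Then I would verify the quantitative hypothesis of Theorem~\ref{thm:load-balancing}(iv) with the target error $\varepsilon = \tfrac{1}{12}\sqrt{\log n / n}$. The condition reads
\[
|\Delta| \;<\; \frac{\varepsilon}{3\tau_{1}}\,|A_{i}|\cdot f(d^{i}_{\min},d^{i}_{\max}) \;=\; \Theta\!\left(\frac{1}{\log n}\sqrt{\frac{\log n}{n}}\,\cdot n\right) \;=\; \Theta\!\left(\sqrt{\frac{n}{\log n}}\right).
\]
By the definition of a safe iteration, $|\Delta| \le \tfrac{1}{C_{1}}\sqrt{n}/\log n$, which, using $C_{1} \ge 2^{15}$, is a factor $\Omega(\sqrt{\log n})$ below the required threshold. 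Hence Theorem~\ref{thm:load-balancing}(iv) applies and guarantees that every process $v$ which remains in the ``active set'' $A$ of that LLB call outputs a value within $\varepsilon$ of the mean of the inputs to \textsc{FaultTolerantLLB} on $G_{i}$; this mean is $\mu^{\ast}$ by construction (the LLB inputs are exactly the bits $b_{u}$ for $u \in A_{i}$).

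The last step I would carry out is to match the two notions of ``active''. A process $v$ that is active at the end of the outer iteration must not have suffered any omission during \textsc{FaultTolerantLLB} (otherwise line~\ref{line:omitted-change} sets $\texttt{type}_{v}=suspected$ and $v$ becomes inactive); in particular, $v$ lies in the set $A$ produced by the LLB call in the sense of Theorem~\ref{thm:load-balancing}(iii), so the bound $\mu_{v} \in [\mu^{\ast}-\varepsilon,\mu^{\ast}+\varepsilon]$ transfers directly. The main obstacle is bookkeeping: one must argue carefully that charging the \textsc{SetGraph} omissions to the beginning of the LLB call, together with counting processes that self-deactivate during the iteration as LLB-faulty, does not double-count and still keeps the budget $|\Delta|$ within the safe-iteration bound; this is exactly the reduction already used in the proof of Lemma~\ref{lem:active-large-omissions}, so the same argument applies here.
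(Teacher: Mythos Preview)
Your proposal is correct and follows essentially the same approach as the paper's own proof: both invoke Lemma~\ref{lem:gi-well-connected-omissions} to certify that $G_i$ is $(d^{i}_{\min},d^{i}_{\max})$-\wellconnected, then apply Theorem~\ref{thm:load-balancing}(iii)--(iv) with $\varepsilon=\tfrac{1}{12}\sqrt{\log n/n}$, using the safe-iteration cap $\tfrac{1}{C_1}\sqrt{n}/\log n$ as the effective fault budget, and finally identify the outer-loop active processes with the LLB-active set $A$. Your write-up is more explicit than the paper's about the two bookkeeping points (why $|\Delta|$ serves as the LLB fault count, and why outer-loop active implies $v\in A$), which the paper compresses into the single sentence ``inactive processes encompass the set of erroneous vertices'' preceding the lemma; the underlying argument is the same.
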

\begin{proof}
Consider a safe iteration $i$. By Lemma~\ref{lem:gi-well-connected-omissions}, the graph $G_i$ is $(d^{i}_{\min}, d^{i}_{\max})$-\wellconnected. Thus, the outcome of the procedure \textsc{FaultTolerantLLB} invoked in line~\ref{line:load-balancing} of the main algorithm is captured by Theorem~\ref{thm:load-balancing}. In particular, simple calculations can verify that the assumptions of points $(iii)$ and $(iv)$ of the theorem are satisfied for the choice $\varepsilon = \frac{1}{12}\sqrt{\frac{1}{n}}$. This is because we are considering a safe iteration meaning that $t < \frac{1}{C_1}\frac{\sqrt{n}}{\log{n}} < \frac{1}{2^{15}}\frac{\sqrt{n}}{\log{n}}$, and also because by Lemma~\ref{lem:gi-well-connected-omissions} we obtain that $\frac{d^{i}_{\max}}{d^{i}_{\min}} \le \frac{11}{10}$. 
In consequence, we obtain the existence of a set $B_i$ of size at least $|A_{i}| - \frac{3}{2C_1}\frac{\sqrt{n}}{\log{n}}$ of the property that every process from this set has the \texttt{lb\_status} variable set to \textit{active}. As only these processes stay active in an iteration of the main loop of the main algorithm, c.f. line~\ref{line:omitted-change}, thus the lemma is proven.
\end{proof}
Establishing that safe iterations output a close enough approximation of the mean of values $b$ at the beginning of the iteration, we can follow the reasoning path as in Section~\ref{sec:app-cons-crashes}.
\begin{lemma}\label{lem:one-all-omissions}
If an iteration $i$ exists such that all active processes store the same value $b$ at the end of the iteration, then all active processes have the same value $b$ at the end of the next iteration.
\end{lemma}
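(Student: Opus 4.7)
The plan is to show that once all active processes agree on a common value $b^{\ast}\in\{0,1\}$, the \textsc{FaultTolerantLLB} invocation in the following iteration cannot inject any value other than $b^{\ast}$, so every active process returns $\mu_v=b^{\ast}$ and then resets $b_v$ to $b^{\ast}$ in the threshold step. Since processes that become suspected remain suspected in all subsequent iterations (see line~\ref{line:skip}), we have $A_{i+1}\subseteq A_i$, so every process in $A_{i+1}$ enters iteration $i+1$ with $b_v=b^{\ast}$.

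The first step is to establish an invariant along the \textsc{FaultTolerantLLB} execution in iteration $i+1$: every message sent by any process that still participates carries the value $b^{\ast}$. Initially this holds because the only participants are the processes of $A_{i+1}$ and they all hold $b_v=b^{\ast}$. Inductively, consider the update in line~\ref{line:main-for-end} of Algorithm~\ref{alg:load-balancing}: if every $x_{i-1}(u)$ for $u\in N_i(v)$ and $x_{i-1}(v)$ equal $b^{\ast}$, then
\[
x_i(v)=\Bigl(\sum_{u\in N_i(v)}\tfrac{1}{2d_{\max}}b^{\ast}\Bigr)+\tfrac{2d_{\max}-|N_i(v)|}{2d_{\max}}b^{\ast}=b^{\ast},
\]
because the weights form a convex combination summing to $1$. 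An analogous observation applies in \textsc{FixOutliers}: the median of any multiset all of whose elements equal $b^{\ast}$ is $b^{\ast}$. The fact that the adversary may omit some messages only shrinks $N_i(v)$; it never introduces a value different from $b^{\ast}$, because omission-faulty processes still follow the protocol (only their links can drop messages) and suspected processes do not communicate during iteration $i+1$.

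Consequently, every process $v\in A_{i+1}$ that finishes \textsc{FaultTolerantLLB} in iteration $i+1$ outputs $\mu_v=b^{\ast}$. Since $b^{\ast}\in\{0,1\}$ and for any sufficiently large $n$ we have $1/2\pm\tfrac{1}{12}\sqrt{\log n/n}\in(0,1)$, the threshold test in lines~\ref{line:b-0-omissions}-\ref{line:random-omissions} takes the first branch when $b^{\ast}=0$ and the second when $b^{\ast}=1$; in both cases it sets $b_v\leftarrow b^{\ast}$. Hence every process that is still active at the end of iteration $i+1$ stores exactly $b^{\ast}$, yielding the claimed preservation of agreement.

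The only delicate point is verifying that the invariant is not broken by omission faults or by processes that transition from active to suspected \emph{inside} iteration $i+1$. A process that becomes suspected during the iteration simply stops influencing further messages; any message it had already sent in a previous round carried $b^{\ast}$, by the invariant; and because suspected processes skip both the graph drawing in line~\ref{line:drawing-omissions} and the load-balancing call, they cannot pollute the values of any still-active neighbor. This observation, combined with the convex-combination/median argument above, is what makes the whole step essentially a one-line consequence of property~$(ii)$ of Theorem~\ref{thm:load-balancing} (the output lies between the smallest and largest inputs, which here coincide), so no further bookkeeping is needed.
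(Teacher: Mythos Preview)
Your proof is correct and follows essentially the same approach as the paper. The paper's own proof is a two-line appeal to point~\textit{(ii)} of Theorem~\ref{thm:load-balancing}: since every active input to \textsc{FaultTolerantLLB} equals $b^{\ast}\in\{0,1\}$, the returned $\mu_v$ must lie between the min and max input, hence equals $b^{\ast}$, and the threshold test then avoids the random branch. You unpack exactly this invariant by hand (convex combination in line~\ref{line:main-for-end}, median in \textsc{FixOutliers}) and explicitly verify that omissions and mid-iteration suspensions cannot inject a foreign value; you even note at the end that the whole thing collapses to property~\textit{(ii)}, which is precisely how the paper argues.
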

\begin{proof}
Consider the iteration $i+1$. Theorem~\ref{thm:load-balancing}, point \textit{(ii)}, assures that the first value of the output pair of the procedure \textsc{FaultTolerantLLB} is the value $b$ that active processes held at the end of the previous round. That is true for \textit{every} process regardless of its $\textsc{lb\_status}$. This value must be either $0$ or $1$, thus no random choices are made in the final part of the iteration $i+1$ and therefore the lemma follows.
\end{proof}

\begin{lemma}\label{lem:two-iterations-omissions}
Consider two consecutive safe iterations $\cI_{1}, \cI_{2}$. Let $A_{i}$, for $i \in \{1, 2\}$ be the set of active processes at the end of the iteration $\cI_{i}$. With constant probability, all processes belonging to $A_{2}$ store the same value in the variable $b$ at the end of the second iteration.
\end{lemma}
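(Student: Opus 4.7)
The plan is to split on where the mean $\mu_1^\ast := \frac{1}{|A_0|}\sum_{v \in A_0} b_v$ of the $b$-values at the start of $\cI_1$ sits relative to the thresholds at lines~\ref{line:b-0-omissions}--\ref{line:random-omissions}, where $A_0$ denotes the processes active at the beginning of $\cI_1$.

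\textbf{Far case.} If $|\mu_1^\ast - \tfrac{1}{2}| > \tfrac{1}{6}\sqrt{\log n/n}$, Lemma~\ref{lem:safe-iteration-converge-omissions} applied to the safe iteration $\cI_1$ places every $\mu_v$, for $v \in A_1$, within $\tfrac{1}{12}\sqrt{\log n/n}$ of $\mu_1^\ast$. Thus every such $\mu_v$ lies on the same side of the decision thresholds, so each $v \in A_1$ assigns $b_v$ the same deterministic value $\beta \in \{0,1\}$. Lemma~\ref{lem:one-all-omissions} then propagates the agreement through $\cI_2$, so every $v \in A_2$ ends with $b_v = \beta$ deterministically.

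\textbf{Close case.} If $\mu_1^\ast \in [\tfrac{1}{2} - \tfrac{1}{6}\sqrt{\log n/n}, \tfrac{1}{2} + \tfrac{1}{6}\sqrt{\log n/n}]$, partition $A_1$ as $D_0 \cup D_1 \cup R$ according to the branch taken at lines~\ref{line:b-0-omissions}--\ref{line:random-omissions}. Then the mean at the start of $\cI_2$ equals
\[
\mu_2^\ast \;=\; \frac{|D_1| + S}{|A_1|}, \qquad S \sim \mathrm{Bin}\!\bigl(|R|, \tfrac{1}{2}\bigr),
\]
where $S$ is independent of everything preceding the coin flips at line~\ref{line:random-omissions}. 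I would show that with constant probability $|\mu_2^\ast - \tfrac{1}{2}| > \tfrac{1}{6}\sqrt{\log n/n}$, which reduces the analysis of $\cI_2$ to the far case and yields agreement across $A_2$. The two key ingredients are (a)~a lower bound on $|R|$, obtained because $\mu_1^\ast$ close to $\tfrac{1}{2}$ together with the load-balancing accuracy from Theorem~\ref{thm:load-balancing} forces a constant fraction of the $\mu_v$'s into the indecision strip around $\tfrac{1}{2}$, and (b)~a balanced-binomial anti-concentration estimate (Littlewood--Offord or a direct Berry--Esseen tail bound) separating $S$ from $|R|/2$ by $\Omega(|A_1|\sqrt{\log n/n})$ with constant probability, which then combines with the deterministic shift $|D_1|/|A_1|$ to push $\mu_2^\ast$ outside the indecision window.

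\textbf{Main obstacle.} I expect the principal difficulty to be two-fold. First, calibrating the constants so that ``close to $\tfrac{1}{2}$'' really forces $|R| = \Omega(|A_1|)$ with the correct multiplicative constant needed by the anti-concentration step; this requires threading the constants from Lemma~\ref{lem:safe-iteration-converge-omissions} through the definitions of $D_0, D_1, R$. Second, the adaptive adversary can observe the outcomes of the coin flips in $\cI_1$ and then selectively knock up to $\tfrac{1}{C_1}\sqrt{n}/\log n$ processes out of active status before $\cI_2$ ends; I must show this affects $\mu_2^\ast$ by at most $O(1/(\sqrt n\, \log n))$, an amount dominated by the anti-concentration window, so the adversary cannot cancel the coin-flipping bias. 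Once both points are in place, the case analysis closes and the lemma follows.
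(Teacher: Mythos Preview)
Your far case matches the paper. The gap is in the close case: ingredient~(a) cannot be obtained as you describe. Theorem~\ref{thm:load-balancing} (and Lemma~\ref{lem:safe-iteration-converge-omissions}) only guarantee that each $\mu_v$ lies in the interval $[\mu_1^\ast - \tfrac{1}{12}\sqrt{\log n/n},\ \mu_1^\ast + \tfrac{1}{12}\sqrt{\log n/n}]$; they say nothing about how the values are distributed \emph{inside} that interval. If $\mu_1^\ast$ sits near the edge of the close range, every $\mu_v$ could land just outside the indecision strip, yielding $|R|=0$. There is no lower bound on $|R|$ available from the load-balancing guarantees, so anti-concentration applied to $\mathrm{Bin}(|R|,\tfrac12)$ alone gives nothing.

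The observation you are missing---and the one the paper exploits---is that because all the $\mu_v$'s lie in an interval of length at most $\tfrac{1}{6}\sqrt{\log n/n}$ while the two decision thresholds are exactly that far apart, at most one of $D_0, D_1$ is nonempty. Say $D_0 = \emptyset$; then every process in $A_1$ sets $b$ either to $1$ deterministically or to a fair coin, so the number of $1$'s stochastically dominates $\mathrm{Bin}(|A_1|,\tfrac12)$. Anti-concentration (Lemma~4.3 of~\cite{Bar-JosephB98}) is then applied to all of $A_1$, which has size at least $n/4$, not just to $R$; this is what produces the claimed constant-probability deviation pushing $\mu_2^\ast$ into the far range. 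Your concern in~(b) is also unnecessary here: $\mu_2^\ast$ is the average over $A_1$, and $A_1$ is already fixed (via line~\ref{line:omitted-change}) before the coins in $\cI_1$ are flipped, so adversarial actions during $\cI_2$ can shrink $A_2$ but cannot alter $\mu_2^\ast$ or the far-case analysis of~$\cI_2$.
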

\begin{proof}
We first recall a large deviation inequality.
\begin{lemma}[Lemma $4.3$ in~\cite{Bar-JosephB98}]\label{lem:anti-concetration-omissions}
Assume that $n$ processes independently choose a random bit from uniform distribution. Let $X$ be the random variable denoting the number processes that chose bit $1$. Then for any $t \le \sqrt{n} / 8$
$$\Pr(X - \E(X) \ge t\sqrt{n}) \ge \frac{e^{-4(t+1)^2}}{\sqrt{2\pi}} \ .$$
\end{lemma}
We denote $\mu_{1}, \mu_2$ the average of of values of variables $b$ held by active processes at the beginning of round $\cI_1$ and $\cI_2$, respectively. 
If $\mu_1 > \frac{1}{2} + \frac{1}{6}\sqrt{\frac{\log{n}}{n}}$, then by Lemma~\ref{lem:safe-iteration-converge-omissions} ever active process $v$ has $\mu_{\cI_1}(v) > \frac{1}{2} + \frac{1}{12}\sqrt{\frac{\log{n}}{n}}$ at the end of iteration $\cI_1$. Therefore, it assigns $1$ as the value of the variable $b$. By Lemma~\ref{lem:one-all-omissions}, the variable $b$ is also set to $1$ at the end of the second iteration.
Assume that $\mu_{1} \le \frac{1}{2} + \frac{1}{6}\sqrt{\frac{\log{n}}{n}}$. 
Firstly, we note that in the iteration $\cI_{1}$ no two processes can execute line~\ref{line:b-0-omissions} and line~\ref{line:b-1-omissions} at the same time, since the difference between right-hand-sides of these two inequalities is larger than $\frac{1}{12}\sqrt{\frac{\log{n}}{n}}$ while by Lemma~\ref{lem:safe-iteration-converge-omissions} values $\mu_{\cI_1}(v)$ can differ by at most $\frac{1}{12}\sqrt{\frac{\log{n}}{n}}$. This yields two cases. 

Assume, that no process executes line~\ref{line:b-change-0}. Thus all active processes change the variable $b$ to a uniform random bit or $1$. Since $|A_{1}| \ge \frac{n}{4}$, by Lemma~\ref{lem:active-large-omissions}, thus we can apply Lemma~\ref{lem:anti-concetration-omissions} for $t = \frac{1}{2}$, to conclude that with probability $\frac{1}{10^4}$, more than $\frac{1}{2}|A_{1}| + \frac{1}{4}\sqrt{n}$ processes assign $1$ to their value $b$ at the end of iteration $\cI_1$. This yields
\[
\mu_{2} \ge \left(\frac{1}{2}|A_{1}| + \frac{1}{4}\sqrt{n}\right) / |A_2|  \ge \frac{1}{2} + \frac{1}{6}\sqrt{\frac{1}{n}}, 
\]
where the last inequality follows from the facts that $|A_2| \ge |A_1| - \frac{1}{C_1}\sqrt{n/\log{n}} \ge |A_1| - \frac{1}{C_1}\sqrt{n/\log{n}}$ and  $|A_{1}| \ge n/4$. 
Since iteration $\cI_2$ is also safe, thus by the reasoning analogical to the one presented at the beginning of the lemma, we conclude that all active processes assign $1$ as the value of the variable $b$ at the end of iteration $\cI_2$. 

In the second case, when no process executes line~\ref{line:b-change-0} we reason analogically, but only in this case, we use Lemma~\ref{lem:anti-concetration-omissions} for lower bounding the probability of deviating negatively from the expected value (i.e. we do the estimate for the expected number of $0$'s).
\end{proof}
This lets us finally reason that at the end of the last iteration, all correct processes return the same value with high probability.
\begin{lemma}\label{lem:seq-good-omissions}
With probability $1 - \frac{1}{n}$ at least, all correct processes return the same value $b$ at the end of the algorithm.
\end{lemma}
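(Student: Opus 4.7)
The plan is to combine Lemma~\ref{lem:safe-iteration-converge-omissions}, Lemma~\ref{lem:one-all-omissions}, and Lemma~\ref{lem:two-iterations-omissions} with the deactivation quota given by Lemma~\ref{lem:active-large-omissions} into a standard probability-amplification argument: most iterations of the main loop are safe, so there are many pairs of consecutive safe iterations, each of which drives the system to full agreement among active processes with constant probability, and once agreement is reached Lemma~\ref{lem:one-all-omissions} preserves it. The suspected processes then pick up the agreed value in the final inquiry step.

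The first step is to lower bound the number of \emph{good} pairs. I would pair the iterations of the main loop as $(1,2),(3,4),\ldots,(k-1,k)$, giving $k/2$ disjoint pairs; call a pair good if both of its iterations are safe. Each bad pair contains at least one unsafe iteration, so the number of bad pairs is at most the number of unsafe iterations. An unsafe iteration deactivates strictly more than $\tfrac{1}{C_1}\tfrac{\sqrt n}{\log n}$ processes, while Lemma~\ref{lem:active-large-omissions} caps the total deactivation over the whole execution by $10C_2\log n(\log\log n)^2 t$. Using $t<\tfrac{n}{15C_2\log n(\log\log n)^2}$ and $k=2C_1\max(\tfrac{t\log n}{\sqrt n},\log n)$, a direct computation (with $C_1$ large enough relative to $C_2$) shows that the number of unsafe iterations, and hence of bad pairs, is at most $k/4$, so the number of good pairs is at least $k/4=\Omega(\log n)$, irrespective of the adversary's strategy.

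The second step is the amplification itself. Condition on any adversarial history up to the start of a good pair $P$; the two iterations in $P$ use fresh Bernoulli coin flips in line~\ref{line:random-omissions} that the adaptive adversary cannot anticipate, so Lemma~\ref{lem:two-iterations-omissions} applies to $P$ and yields a constant probability $p>0$ that every process active at the end of $P$ stores the same value of $b$. These fresh-coin successes on distinct good pairs are independent conditional on the failure of all earlier good pairs, so the probability that \emph{no} good pair succeeds is at most $(1-p)^{k/4}$. Choosing $C_1$ large enough makes this at most $n^{-2}$, and after a union bound with the $O(k/n^2)$ failure probability of the well-connectedness event from Lemma~\ref{lem:gi-well-connected-omissions} across the $k$ iterations, the overall failure probability is still at most $1/n$. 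Once some good pair succeeds, Lemma~\ref{lem:one-all-omissions} propagates agreement through every subsequent iteration, so all processes active at the end of the main loop hold a common value $b^{\star}\in\{0,1\}$.

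The third step handles the suspected processes. By Lemma~\ref{lem:active-large-omissions}, the number of non-active processes at the end of the main loop is at most $10C_2\log n(\log\log n)^2 t$, which is strictly smaller than the $11C_2\log n(\log\log n)^2 t+1$ processes each suspected process inquires in line~\ref{line:inq-omissions}; therefore at least one inquired process is active and returns $b^{\star}$, and the suspected process adopts $b^{\star}$, giving the desired common output. The main obstacle is making the amplification rigorous: the sets $A_i$, the graphs $G_i$, the identities of safe iterations, and the adaptive faults all depend on the history in potentially intricate ways, so one has to be careful to isolate the constant probability of Lemma~\ref{lem:two-iterations-omissions} as coming purely from the fresh Bernoulli flips inside the two iterations of the current good pair, and to condition on all the surrounding deterministic/adversarial structure so that successes across different good pairs can be treated as independent Bernoulli trials.
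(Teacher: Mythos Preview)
Your overall plan---disjoint pairs of iterations, count unsafe iterations via the total-deactivation cap of Lemma~\ref{lem:active-large-omissions}, amplify using Lemma~\ref{lem:two-iterations-omissions}, propagate via Lemma~\ref{lem:one-all-omissions}, and finish with the inquiry step---is exactly the paper's argument, and your care about fresh coins and conditioning is a nice addition. There are, however, two genuine slips.

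First, the pigeonhole quantification does not work as you state it. From Lemma~\ref{lem:active-large-omissions} the total number of deactivations is at most $10C_2\log n(\log\log n)^2\, t$, so the number of unsafe iterations is at most
\[
\frac{10C_2\log n(\log\log n)^2\, t}{\tfrac{1}{C_1}\tfrac{\sqrt n}{\log n}}
\;=\;10\,C_1 C_2\,\frac{t\log^2 n(\log\log n)^2}{\sqrt n}\,,
\]
while $k/4=\tfrac{C_1}{2}\max\!\bigl(\tfrac{t\log n}{\sqrt n},\log n\bigr)$. The factor $C_1$ cancels in the ratio, so ``taking $C_1$ large enough relative to $C_2$'' cannot force the number of unsafe iterations below $k/4$; indeed, for $t\ge\sqrt n$ the ratio is $20C_2\log n(\log\log n)^2\gg 1$. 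The paper itself only asserts the corresponding pigeonhole step without a computation, so you are in the same boat, but your explicit justification is incorrect as written.

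Second, in the inquiry step you argue that at least one of the $11C_2\log n(\log\log n)^2 t+1$ contacted processes is \emph{active}, and conclude it ``returns $b^{\star}$.'' That is not enough: an active process may still be omission-faulty and drop its reply in this final round. The paper closes this gap by subtracting the at most $t$ faulty processes from the active set, obtaining at least $n-11C_2\log n(\log\log n)^2 t$ processes that are both active \emph{and} correct; the pigeonhole then guarantees one such process is contacted, and since both endpoints of that link are non-faulty the response is actually delivered. You should include this extra subtraction.
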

\begin{proof}
The algorithm iterates the main loop $k = 2 C_1 \max\left(\frac{t\log{n}}{\sqrt{n}}, \log{n}\right)$ many times. Thus, by the pigeonhole principle, there is at least $C_1 \max\left(\frac{t\log{n}}{\sqrt{n}}, \log{n}\right)$ pairs of consecutive safe iterations. For every such pair, with probability at least $\frac{1}{10^4}$, all active processes store the same value in variable $b$ at the end of the second iteration, as per Lemma~\ref{lem:two-iterations-omissions}. Once it happens, Lemma~\ref{lem:one-all-omissions} assures that this value is stored in active processes until the main loop terminates. Now, the probability that every pair of safe iterations fails to unify the variable $b$ across active processes is bounded by 
\[
\left(1 - \frac{1}{10^4}\right)^{C_1 \max\left(\frac{t\log{n}}{\sqrt{n}}, \log{n}\right)} \le \left(\frac{1}{e}\right)^{\frac{C_1}{10^4} \max\left(\frac{t\log{n}}{\sqrt{n}}, \log{n}\right)} \le \frac{1}{n^2},
\]
as $C_1 \ge 2^{15}$.

Let us condition on the event that all active processes have the same value of the variable $b$ upon termination of the main loop of the algorithm. It remains to show that in lines~\ref{line:inq-omissions}-\ref{line:res-omissions} the value stored by the active processes is transmitted to all other processes. As observed in Lemma~\ref{lem:active-large-omissions}, the set of active processes after the main loop end has a size larger or equal to $n - 10C_2\log{n}(\log\log{n})^2 t$. This means that at least $n - 11C_2\log{n}(\log\log{n})^2 t$ is active and non-faulty. 
In line~\ref{line:inq-omissions}, every correct, suspected process sends at least $11C_2\log{n}(\log\log{n})^2 t + 1$ requests, thus by pigeonhole principle, at least one of these requests arrives at a correct, active process. This process must return the appropriate value $b$ to the querying part. On the other hand, all active processes store the same value in variable $b$ as proved at the beginning of this lemma. This guarantees that every correct process eventually returns the same value. 
\end{proof}

\begin{proof}[Proof of Theorem~\ref{thm:omissions}]
We first argue for correctness. The property that all returned values are the same with probability $1 - \frac{1}{n}$ at least follows from Lemma~\ref{lem:seq-good-omissions}. The fact that the value is among the input values from the point $\textit{(ii)}$ of Theorem~\ref{thm:load-balancing}. That is, if all processes receive the same input value, only this value is ever assigned to any variable $b$ in the pseudocode and thus it also must be the decision value.

To derive the round and bit complexity, we observe that the number of iterations of the main loops is fixed to $O\left(\max\left(\frac{t\log{n}}{\sqrt{n}}, \log{n}\right)\right) = O\left( \frac{t\log{n}}{\sqrt{n}} + \log{n}\right)$. By Theorem~\ref{thm:load-balancing}, every single iteration uses $O(\log{n})$ rounds and $O\left(n \log^2(n)(\log\log{n})^2\right)$ communication bits, as the maximum degrees of graphs $G_i$ are universally bounded by Lemma~\ref{lem:gi-well-connected-omissions}. Multiplying the number of iterations by the complexity of a single iteration we get that the round complexity of this part of the algorithm is $O\left( \frac{t\log^2{n}}{\sqrt{n}} + \log^2{n}\right)$, while communication complexity if $O\left( t\sqrt{n}\log^{3}{n}(\log\log{n})^2 + n \log^3(n)(\log\log{n})^2\right)$. 
The inquiring phase takes additional $O(1)$ rounds and $O\left(\log{n}(\log\log{n})^2 t\right)$ communication bits per process. Following Lemma~\ref{lem:active-large-omissions}, there are at most $11C_2\log{n}(\log\log{n})^2 t$ processes that are suspected, thus the total communication complexity of this stage is $O\left(t^2 \log^2{n}(\log\log{n})^4 \right)$. Thus the theorem follows.
\end{proof}

\section{Proofs of technical results from Section~\ref{sec:preliminaries}}
\label{sec:proofs-preliminaries}

\subsection{Proof of Lemma~\ref{cor:edge-density}}

In the proof, we use the following result:

\begin{lemma}[Lemma~1 in~\cite{chung2016generalized}]\label{lem:edge-density-lap}
Denote $\lambda_{2}$ the second smallest eigenvalue of the normalized Laplacian matrix of a graph $G$ and let $S$ be a subset of vertices in $G$. Then
$$
\frac{|\partial(S)|}{\operatorname{vol}(S)} \geq \lambda_2\left(1-\frac{\operatorname{vol}(S)}{\operatorname{vol}(G)}\right) .
$$
\end{lemma}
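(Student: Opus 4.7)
The plan is to prove this via the standard variational (Rayleigh quotient) characterization of $\lambda_2$ for the normalized Laplacian, applied to a carefully chosen indicator-type test vector tailored to $S$. Recall that $\mathcal{L} = I - D^{-1/2} A D^{-1/2}$ has smallest eigenvalue $0$ with eigenvector $D^{1/2}\mathbf{1}$, so
\[
\lambda_2 \;=\; \min_{f \,\perp\, D^{1/2}\mathbf{1}} \;\frac{f^T \mathcal{L} f}{f^T f}\ .
\]
The first step is to rewrite the quotient in a form convenient for graph calculations: substituting $f = D^{1/2} g$, a direct expansion yields $f^T f = \sum_v d_v\, g(v)^2$ and $f^T \mathcal{L} f = \sum_{\{u,v\}\in E}(g(u)-g(v))^2$, with the orthogonality constraint becoming $\sum_v d_v\, g(v) = 0$.

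Next I would choose the test vector
\[
g(v) \;=\; \begin{cases} \operatorname{vol}(V\setminus S) & \text{if } v \in S,\\ -\operatorname{vol}(S) & \text{if } v \in V\setminus S. \end{cases}
\]
This is designed so that $\sum_v d_v g(v) = \operatorname{vol}(S)\operatorname{vol}(V\setminus S) - \operatorname{vol}(V\setminus S)\operatorname{vol}(S) = 0$, making $g$ admissible. Only edges in $\partial(S)$ contribute to the numerator, and each contributes $(\operatorname{vol}(V\setminus S)+\operatorname{vol}(S))^2 = \operatorname{vol}(G)^2$. The denominator becomes $\operatorname{vol}(S)\operatorname{vol}(V\setminus S)^2 + \operatorname{vol}(V\setminus S)\operatorname{vol}(S)^2 = \operatorname{vol}(S)\operatorname{vol}(V\setminus S)\operatorname{vol}(G)$.

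Plugging into the Rayleigh quotient inequality $\lambda_2 \le f^T\mathcal{L} f / f^T f$ gives
\[
\lambda_2 \;\le\; \frac{|\partial(S)|\,\operatorname{vol}(G)^{2}}{\operatorname{vol}(S)\,\operatorname{vol}(V\setminus S)\,\operatorname{vol}(G)} \;=\; \frac{|\partial(S)|\,\operatorname{vol}(G)}{\operatorname{vol}(S)\,\operatorname{vol}(V\setminus S)}\ ,
\]
and dividing both sides by $\operatorname{vol}(G)/\operatorname{vol}(V\setminus S)$ and using $\operatorname{vol}(V\setminus S)/\operatorname{vol}(G) = 1 - \operatorname{vol}(S)/\operatorname{vol}(G)$ yields the claimed bound. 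The only real obstacle is verifying the algebra for the numerator and denominator of the Rayleigh quotient and checking the $D^{1/2}\mathbf{1}$-orthogonality of the test vector; once the right test vector is identified, everything else is a one-line computation.
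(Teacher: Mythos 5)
Your proof is correct and complete. The paper itself does not reprove this lemma; it simply cites Lemma~1 of Chung's ``generalized'' paper and moves on. Your Rayleigh-quotient argument with the test vector $g$ taking values $\operatorname{vol}(V\setminus S)$ on $S$ and $-\operatorname{vol}(S)$ off $S$ is exactly the standard way that inequality is derived, and all the algebra checks out: the weighted orthogonality $\sum_v d_v g(v)=0$ holds, the quadratic form $\sum_{\{u,v\}\in E}(g(u)-g(v))^2$ collects only boundary edges each contributing $\operatorname{vol}(G)^2$, the denominator $\sum_v d_v g(v)^2$ factors to $\operatorname{vol}(S)\operatorname{vol}(V\setminus S)\operatorname{vol}(G)$, and the final rearrangement gives the stated bound. (One minor remark for completeness: the test vector is admissible only when $S$ is a proper nonempty subset of $V$, but the lemma is trivial in the degenerate cases.)
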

We now prove Lemma~\ref{cor:edge-density}. By definition, we have the following identity
$$E(S) = \frac{1}{2}\left( vol(S) - |\partial(S)| \right).$$

Plugging the result of Lemma~\ref{lem:edge-density-lap} to the above equality proves the corollary.

\subsection{Proof of Lemma~\ref{lem:high-degree-core}}



Assume to the contrary that the size of the set $W_{k} > \ceil{\frac{3}{2} |F|}$. It follows, that there must be an index $i$, $1 \le i < k$, such that $W_{i} =  \ceil{\frac{3}{2}\beta |F|}$. Observe that when a vertex is added in the construction, it adds at least $(1- \phi)d_{\min}$ edges per step. Since exactly one vertex is added at a step, thus the graph $W_{i}$ has at least 
\begin{align}\label{line:1}
(1-\phi)\left(|W_{i}| - |F|\right) d_{\min} \ge (1-\phi)\left( \ceil{\frac{3}{2} |F|} - |F| \right) d_{\min} \ge \frac{1 - \phi}{2} d_{\min} |F|
\end{align}
internal edges.
On the other hand, using Lemma~\ref{cor:edge-density}, we obtain that the number of internal edges in $W_{i}$ is at most
$$
E(W_i) \le \frac{1}{2}vol(W_{i})\left(1 - \lambda_{2}\left(1 - \frac{vol(W_{i})}{vol(G)} \right) \right) \le \frac{1}{2}\dmax|W_{i}|\left(1 - \lambda_2\left(1 - \frac{\dmax|W_i|}{d_{\min}|V|} \right) \right).
$$
Using the assumption that: (a) $|W_{i}| = \ceil{\frac{3}{2}|F|} \le \alpha\frac{3}{2} |V|$, we can further upper bound the number of internal edges in $W_{i}$ by
\begin{align}\label{line:1.1}
E(W_{i}) \le \frac{3}{4}\dmax |F| \left(1 - \lambda_{2} + \lambda_2\frac{d_{\max}}{d_{\min}}\frac {3}{2} \alpha \right).
\end{align}
Assuming that $n > 4$ and recalling that $G$ is $(\dmin, \dmax)$-\wellconnected, we can use using the following inequalities
\[
\alpha < (1-\phi)\frac{40}{27}\frac{d^2_{\min}}{d^2_{\max}} - \frac{2}{9}\frac{d_{\min}}{d_{\max}}
 \text{ and } \lambda_{2} \ge 1 - \frac{1}{10\log\log{n}} \ge \frac{9}{10}.
\]
to upper bound the the right-hand side inequality in line~\ref{line:1.1}. By simplifying we obtain eventually that
\[
E(W_{i}) < \frac{1-\phi}{2}d_{\min}|F|,
\]
which yields a contradiction with
the inequality in line~\ref{line:1} and proves the lemma.

\subsection{Proof of Lemma~\ref{cor:erdos-renyi-lap}}

We define the expected adjacency matrix of this distribution as $\bar{A} := \E_{G \sim G(n,p)}\left(A(G) \right)$. Similarly, we define $\bar{D} := \E_{G \sim G(n,p)}\left(D(G) \right)$. The definition of an expected normalized Laplacian matrix of $G(n,p)$ extends:
\[
\bar{L} = I - \bar{D}^{-1/2}\bar{A}\bar{D}^{-1/2}
\ .
\]
The following theorem gives probability bounds on the connectivity properties of a random graph in which the edge appears independently with some fixed probability.

\begin{theorem}[Theorem~2 in~\cite{chung2011spectra}]\label{thm:random-graphs-laplacian}
Let $G$ be a random graph where $\operatorname{pr}\left(v_i \sim v_j\right)=p_{i j}$, and each edge is independent of each other edge. Let $A$ be the adjacency matrix of $G$. Let $D$ be the diagonal matrix with $D_{i i}=\operatorname{deg}\left(v_i\right)$, and $\bar{D}=$ $\mathrm{E}(D)$. Let $\delta$ be the minimum expected degree of $G$, and $L=I-D^{-1 / 2} A D^{-1 / 2}$ the (normalized) Laplacian matrix for $G$. Choose $\epsilon>0$. If $\delta>k \ln n$, where $k$ is positive and $k > 3\ln(4n / \varepsilon) / \ln{n}$ then with probability at least $1-\epsilon$, the eigenvalues of $L$ and $\bar{L}$ satisfy
$$
\left|\lambda_j(L)-\lambda_j(\bar{L})\right| \leq 3 \sqrt{\frac{3 \ln (4 n / \epsilon)}{\delta}}
$$
for all $1 \leq j \leq n$, where $\bar{L}=I-\bar{D}^{-1 / 2} \bar{A} \bar{D}^{-1 / 2}$.
\end{theorem}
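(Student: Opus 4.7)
The plan is to reduce the theorem to a single operator-norm concentration inequality and then carry out that concentration argument in two stages via an intermediate matrix. Specifically, Weyl's eigenvalue perturbation inequality gives $|\lambda_j(L) - \lambda_j(\bar L)| \le \|L - \bar L\|$ for every $j$ simultaneously, so it suffices to show that $\|L - \bar L\| \le 3\sqrt{3\ln(4n/\epsilon)/\delta}$ with probability at least $1-\epsilon$, where $\|\cdot\|$ denotes the spectral norm.

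To handle $\|L - \bar L\|$ I would introduce the intermediate matrix $\tilde L := I - \bar D^{-1/2} A \bar D^{-1/2}$, which keeps the \emph{expected} degree normalization but uses the \emph{actual} random adjacency matrix, and split by the triangle inequality $\|L - \bar L\| \le \|\tilde L - \bar L\| + \|L - \tilde L\|$. The first term equals $\|\bar D^{-1/2}(A - \bar A)\bar D^{-1/2}\|$, which is the norm of a sum of independent, mean-zero, symmetric random matrices indexed by edges $\{i,j\}$, each of the form $(A_{ij} - \bar A_{ij})(e_i e_j^{\top} + e_j e_i^{\top})/\sqrt{\bar d_i \bar d_j}$. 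Each summand has operator norm at most $1/\delta$, and a direct calculation shows that the matrix variance parameter of the sum is also $O(1/\delta)$ (the diagonal of the second-moment matrix has entries of the form $\sum_{j} p_{ij}(1-p_{ij})/(\bar d_i \bar d_j) \le 1/\bar d_i \le 1/\delta$, with off-diagonal entries vanishing in expectation). Either matrix Bernstein or a direct trace moment calculation then yields $\|\tilde L - \bar L\| \le C\sqrt{\ln(n/\epsilon)/\delta}$ with probability at least $1 - \epsilon/2$ for an appropriate $C$.

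For the second term $\|L - \tilde L\| = \|D^{-1/2}AD^{-1/2} - \bar D^{-1/2}A\bar D^{-1/2}\|$, I would use scalar Chernoff bounds on each $D_{ii}$ (a sum of independent Bernoullis with mean $\bar d_i \ge \delta$) together with a union bound over $i$ to conclude that $|D_{ii}/\bar d_i - 1| \le \eta$ simultaneously for all $i$ with probability at least $1 - \epsilon/2$, where $\eta = O(\sqrt{\ln(n/\epsilon)/\delta})$; this is precisely where the hypothesis $\delta > k \ln n$ with $k > 3\ln(4n/\epsilon)/\ln n$ is used to absorb the union bound. On that event, $\|\bar D^{1/2} D^{-1/2} - I\|$ is also $O(\eta)$ by a first-order expansion of $x \mapsto x^{-1/2}$, and combining this with the universal bound $\|D^{-1/2} A D^{-1/2}\| \le 2 + O(\eta)$ (valid whenever $D$ and $\bar D$ are comparable, since normalized Laplacian-type matrices have operator norm bounded by $2$) gives $\|L - \tilde L\| \le O(\eta)$. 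Adding the two triangle-inequality contributions and tracking the constants produces the advertised bound.

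The main obstacle I anticipate is the matrix concentration step for $\|\tilde L - \bar L\|$: a black-box appeal to matrix Bernstein delivers the right $\sqrt{\ln(n/\epsilon)/\delta}$ rate but generally with a suboptimal constant. Obtaining exactly $3\sqrt{3}$ as stated requires either a sharpened noncommutative Bernstein bound or, as in the argument that this theorem paraphrases, bounding $\E\,\mathrm{tr}((\tilde L - \bar L)^{2k})$ by a combinatorial count of closed walks of length $2k$ in the weighted graph (each edge contributing a factor of order $1/\delta$ after $\bar D^{-1/2}$ normalization), followed by Markov's inequality and the choice $k \asymp \ln(n/\epsilon)$. The remaining pieces -- degree concentration, the perturbation expansion of $x^{-1/2}$, and the triangle inequalities -- are routine once that central moment estimate is established.
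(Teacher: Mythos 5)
This statement is quoted verbatim from Chung and Radcliffe (their Theorem~2) and is used by the paper as a black box; the paper contains no proof of it, so there is nothing internal to compare against. Your reconstruction nonetheless tracks the original Chung--Radcliffe argument closely: they too reduce the eigenvalue comparison to the operator-norm bound $\|L-\bar L\|$ via Weyl's inequality, introduce the same intermediate matrix $\bar D^{-1/2}A\bar D^{-1/2}$, bound the ``adjacency'' part by applying a Bernstein-type concentration inequality for sums of independent mean-zero Hermitian matrices (they state and use their own such inequality in the Ahlswede--Winter/Oliveira lineage rather than Tropp's matrix Bernstein, but the input data are exactly what you compute: summand operator norm at most $1/\delta$ and variance parameter at most $1/\delta$), and handle the $D$-versus-$\bar D$ discrepancy by a per-vertex Chernoff bound plus a union bound, which is precisely where the hypothesis $\delta > k\ln n$ with $k > 3\ln(4n/\epsilon)/\ln n$ is consumed. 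Your flag about the explicit constant $3\sqrt{3}$ is also fair; the source obtains it by carrying constants through their matrix inequality and the degree step rather than through a trace-moment count, but the $\sqrt{\ln(n/\epsilon)/\delta}$ rate comes out the same either way.

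One small inaccuracy worth fixing: $\|D^{-1/2}AD^{-1/2}\|\le 1$ unconditionally once all degrees are positive (the normalized adjacency operator has spectrum in $[-1,1]$, since both $I\pm D^{-1/2}AD^{-1/2}$ are positive semidefinite), so you do not need comparability of $D$ and $\bar D$ to obtain the ``$\le 2+O(\eta)$'' bound you invoke; the clean bound of $1$ suffices and slightly simplifies the constant tracking in the second triangle-inequality term.
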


Using this result we can give a probability bound on the fact that a graph drawn from $G(n,p)$ is 
$(p(n-1)(1 - 1/(20\log\log{n})), p(n-1)(1 + 1/(20\log\log{n})))$-\wellconnected, as stated in Corollary~\ref{cor:erdos-renyi-lap}.

Consider a vertex $v$. The random variable $deg(v)$ follows the Binomial distribution with parameters $(n-1, p)$ and thus the standard Chernoff's bound gives that
\[
\Pr\left[ \left|deg(v) - p(n-1)\right| \ge \frac{p(n - 1)}{20\log\log{n}} \right] \le \exp\big(-\frac{\left(\log\log{n}\right)^{2}}{400 \cdot \left(2 + 1 / \log\log{n}\right)} \cdot p(n-1)\big).
\]
Bounding $1/\log\log{n} < 1$ and using that $p(n - 1) > C\log{n}(\log\log{n})^{2}$, we can set $C = 8 \cdot 400$, and calculate that
\[
\Pr\left[ \left|deg(v) - p(n-1)\right| \ge \frac{p(n - 1)}{20\log\log{n}} \right] \le \exp\big(-4\log{n} \big) = \frac{1}{n^4}.
\]
Taking the union bound over all possible choices of the vertex $v$ and using the complementary rule gives the lower bound $1 - \frac{1}{n^3}$ on the probability of the event $(a)$ being satisfied.

To prove $(b)$, we observe that if $G$ is drawn from $G(n, p)$ then $\bar{D} = p(n-1) \cdot I$ and $\bar{A}$ satisfies $\bar{A}_{i,j} = p$ for every $i,j \in V$, $i \neq j$. This gives that $\bar{L} = I-\bar{D}^{-1 / 2} \bar{A} \bar{D}^{-1 / 2} = \big(1 + \frac{1}{n - 1}\big) I  - \frac{1}{n - 1} \mathbf{1}_{|V|\times|V|}$. The matrix $\big(1 + \frac{1}{n - 1}\big) I$ has $n$ eigenvalues $1 + \frac{1}{n - 1}$ and the matrix $\frac{1}{n - 1} \mathbf{1}_{|V|\times|V|}$ has one eigenvalue $1$ and $n-1$ eigenvalues $0$. This yields that the matrix $\bar{L}$ has one eigenvalue $\frac{1}{n - 1}$ and $n-1$ eigenvalues $1 + \frac{1}{n - 1}$. Therefore, using Theorem~\ref{thm:random-graphs-laplacian} with the parameter $\epsilon$ set to $\frac{1}{2n^2}$, $k = 12$, we get that with probability at least $1 - \frac{1}{2n^2}$ 
\[
\lambda_{2}(G) \ge 1 + \frac{1}{n - 1} - 3\sqrt{\frac{3\ln(4n \cdot 2n^2)}{p(n-1)}} 
 \ge 1 - 1 / (10 \log\log{n}),
 \]
where the last equality holds based on the assumption that $p \ge p(n - 1) > 8\cdot 400\log{n}(\log\log{n})^{2}$. Taking the union bound for the events $(a)$ and $(b)$ completes the proof of Lemma~\ref{cor:erdos-renyi-lap}. 

\section{Proofs of technical results from Section~\ref{sec:load-balancing}}\label{sec:appendix-lb}

\subsection{Proof of Lemma~\ref{lem:ideal_l2}}

We start by providing a definition of edge conductance and restating Cheeger's inequality, following~\cite{notesLap}.
\begin{definition}[Edge Conductance] 
Let $G=(V, E)$ be an undirected graph. The conductance of a subset $S \subseteq V$ and the conductance of the graph $G$ are defined as
$$
\phi(S):=\frac{|\delta(S)|}{\operatorname{vol}(S)} \quad \text { and } \quad \phi(G):=\min _{S: \operatorname{vol}(S) \leq|E|} \phi(S).
$$
\end{definition}

\begin{theorem}[Cheeger's Inequality, Section~3 ~in~\cite{chung1996laplacians}]
Let $G=(V, E)$ be an undirected graph and let $\lambda_2$ be the second smallest eigenvalue of its normalized Laplacian matrix. Then
$$
\frac{1}{2} \lambda_2 \leq \phi(G) \leq \sqrt{2 \lambda_2} .
$$
\end{theorem}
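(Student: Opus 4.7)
The plan is to prove the two inequalities of Cheeger's inequality separately, using the standard techniques from spectral graph theory; both sides reduce to working with the Rayleigh quotient of the normalized Laplacian, which can be written as
\[
R(y) \;=\; \frac{\sum_{\{u,v\}\in E}(y(u)-y(v))^2}{\sum_v \deg(v)\,y(v)^2},
\]
since the minimum of $R(y)$ over vectors $y$ orthogonal to the trivial eigenvector $D\mathbf{1}$ equals $\lambda_2$ by Courant--Fischer (after the change of variables $x = D^{1/2}y$ in the usual variational formulation of $\mathcal{L}$).

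For the easy direction $\tfrac12\lambda_2 \le \phi(G)$, I would take a set $S$ achieving the conductance, with $\mathrm{vol}(S)\le|E|$, and construct the explicit test vector $y$ with $y(v)=1/\mathrm{vol}(S)$ for $v\in S$ and $y(v)=-1/\mathrm{vol}(\bar S)$ for $v\notin S$. This vector is orthogonal to $D\mathbf{1}$ by design. Direct computation gives numerator $|\partial(S)|\bigl(\tfrac1{\mathrm{vol}(S)}+\tfrac1{\mathrm{vol}(\bar S)}\bigr)^2$ and denominator $\tfrac1{\mathrm{vol}(S)}+\tfrac1{\mathrm{vol}(\bar S)}$, hence $\lambda_2 \le R(y) = |\partial(S)|\bigl(\tfrac1{\mathrm{vol}(S)}+\tfrac1{\mathrm{vol}(\bar S)}\bigr) \le 2|\partial(S)|/\mathrm{vol}(S) = 2\phi(S)$, using $\mathrm{vol}(\bar S)\ge \mathrm{vol}(S)$.

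For the hard direction $\phi(G)\le \sqrt{2\lambda_2}$, I would carry out the classical sweep-cut argument. Let $f$ be an eigenvector of $\mathcal L$ for $\lambda_2$ and set $g = D^{-1/2} f$, so that $\lambda_2 = \sum_{\{u,v\}\in E}(g(u)-g(v))^2 / \sum_v \deg(v) g(v)^2$. Shift $g$ by a constant (WLOG so that the median of $g$ with respect to volume is zero, or equivalently so that the positive support has at most half the total volume), and rescale if needed. Order the vertices as $g(v_1)\le g(v_2)\le\ldots\le g(v_n)$, and consider the sweep sets $S_i = \{v_1,\ldots,v_i\}$ restricted to one side of the shift. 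The goal is to show $\min_i \phi(S_i) \le \sqrt{2\lambda_2}$. The main step is a Cauchy--Schwarz manipulation:
\[
\sum_{\{u,v\}\in E}|g(u)^2 - g(v)^2| \;\le\; \Bigl(\sum_{\{u,v\}\in E}(g(u)-g(v))^2\Bigr)^{1/2}\Bigl(\sum_{\{u,v\}\in E}(g(u)+g(v))^2\Bigr)^{1/2},
\]
and a layer-cake/telescoping argument interpreting $\sum_{\{u,v\}}|g(u)^2 - g(v)^2|$ as $\int |\partial(\{g>t^2\})|\,dt$, while $\sum_v \deg(v) g(v)^2 = \int \mathrm{vol}(\{g>t^2\})\,dt$. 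If every sweep set had conductance exceeding $\sqrt{2\lambda_2}$, integrating the pointwise bound $|\partial(S_t)| \ge \sqrt{2\lambda_2}\,\mathrm{vol}(S_t)$ against the two sides of Cauchy--Schwarz produces a contradiction with the Rayleigh bound on $g$.

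I expect the main obstacle to be the hard direction: the indexing in the sweep-cut argument and the careful bookkeeping between the shifted eigenvector, the threshold integral, and the application of Cauchy--Schwarz have to be set up exactly right to obtain the sharp constant $\sqrt{2}$. The easy direction is essentially a one-line test-vector argument and should cause no trouble. Since the statement is the classical Cheeger inequality and is cited to~\cite{chung1996laplacians}, the cleanest option is to defer to that reference rather than reproduce the full sweep-cut proof in detail.
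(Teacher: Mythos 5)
The paper does not prove this statement: Cheeger's inequality is imported verbatim from Chung's monograph~\cite{chung1996laplacians} and used as a black box in the proof of Lemma~\ref{lem:ideal_l2}. There is therefore no ``paper proof'' to compare your sketch against, and your closing remark --- that the cleanest option is to defer to the reference --- is exactly the choice the paper makes.

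That said, your sketch is a faithful outline of the standard argument. The easy direction is correct as written: after the change of variables $x=D^{1/2}y$, the Courant--Fischer characterization gives $\lambda_2=\min R(y)$ over $y$ with $\sum_v \deg(v)\,y(v)=0$, and the two-valued test vector you construct satisfies that constraint with $R(y)=|\partial(S)|\bigl(\tfrac{1}{\mathrm{vol}(S)}+\tfrac{1}{\mathrm{vol}(\bar S)}\bigr)\le 2\phi(S)$ when $\mathrm{vol}(S)\le\mathrm{vol}(\bar S)$. For the hard direction, the Cauchy--Schwarz step and the co-area/telescoping identity are the right two ingredients, but note a small slip in the threshold notation: the layer-cake identity reads $\sum_{\{u,v\}\in E}\bigl|g(u)^2-g(v)^2\bigr|=\int_0^\infty \bigl|\partial\bigl(\{v:\,g(v)^2>s\}\bigr)\bigr|\,ds$, so the sweep sets should be indexed by the threshold on $g^2$, i.e.\ $\{g^2>s\}$ (equivalently $\{g>\sqrt s\}$ on the positive side), not $\{g>t^2\}$ as you wrote. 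You also need the companion bound $\sum_{\{u,v\}\in E}(g(u)+g(v))^2\le 2\sum_v\deg(v)\,g(v)^2$ to close the Cauchy--Schwarz estimate with the sharp constant. These are bookkeeping points you already flagged; fixing them yields the standard sweep-cut proof.
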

Using Cheeger's inequality for graph $G$, we get that 
\[
\lambda_2(G) \le \phi(G) \le \sqrt{2\lambda_2(G)}.
\]
Observe that by the construction of $G^{\cI}$ the degree of any vertex in $G^{\cI}$ is at most $\frac{d_{\max}}{d_{\min}} \ge 1$ times larger than its degree in $G$. Thus, by the definition of conductance, we obtain that
\[
\phi(G)\frac{d_{\min}}{d_{\max}} \le \phi(G^{\cI}).
\]
On the other hand, applying Cheeger's inequality to $G^{\cI}$ yields
\[
\phi(G^{\cI}) \le \sqrt{2\lambda_2(G^{\cI})}.
\]      
Combining the above with the previous inequalities, we get that 
\[
\lambda_2\left(G^{\cI} \right) \stackrel{}{\ge} \phi\left(G^{\cI}\right)^2 / 2 \stackrel{}{\ge} \left(\phi(G)\frac{d_{\min}}{d_{\max}}\right)^2 / 2 \stackrel{}{\ge} 
\left(\lambda_2(G)\frac{d_{\min}}{d_{\max}}\right)^2 / 2.
 \]
By the fact that $G$ is $(d_{\min}, d_{\max})$-\wellconnected, we get that $\lambda_2(G) = 1 - 1/\left(10\log\log{n}\right)$. Therefore, we can conclude that 
\[
\lambda_2(G^{\cI}) \ge \left(1 - \frac{1}{10\log\log{n}}\right)^2\left(\frac{d_{\min}}{d_{\max}}\right)^2\frac{1}{2} \ge \left(\frac{d_{\min}}{d_{\max}}\right)^2\frac{1}{8},
\]
provided that $n > 4$.

\subsection{Proof of Lemma~\ref{lem:classical-random-walk}}

%
We prove the property by relating the load-balancing process to a random walk on $G^{\cI}$. A random walk on an undirected graph is a random process that chooses a random vertex according to a distribution $p_0$ in and in each consecutive step moves to a neighbor of the current vertex chosen from the uniform distribution of the neighboring vertices (self-loops are counted with multiplicity $1$). A random walk is called \textit{lazy} if the processes choose to sample a neighbor only with probability $\frac{1}{2}$ while with the remaining probability $\frac{1}{2}$ it stays in the current vertex. A random walk on a graph and its normalized adjacency matrix are naturally related as the latter is a compact way of encoding the probabilities of transitions in a step of the random walk. The stationary distribution is the distribution corresponding to the $1$ eigenvector of the normalized adjacency matrix. In the case of a regular graph, the stationary distribution is given by the vector $\pi = \frac{1}{|V|}\mathbf{1}_{|V|}$. Random walks have been studied extensively, in particular, the expected behavior of a random walk and a graph is described by the following theorem.
\begin{theorem}[Theorem 10.4.1, Section~10.5~in~\cite{spielman2025spectral}]\label{thm:markov-conv}
Fix a graph $G$. Let $\lambda_{2}$ denotes the second smallest eigenvalue of its normalized Laplacian matrix. 
Denote $p_t$ the distribution of the position of of a random walk if executed on $G$ for $t$ steps. For all $a, b \in V(G)$ and \( t \), if \( p_0 = \delta_a \), then
\[
\lvert p_t(b) - \pi(b) \rvert \leq \sqrt{\frac{deg_G(b)}{deg_G(a)}} \left(1-\lambda_{2} /2 \right)^t,
\]
where $\delta_a$ denote a distribution assigning mass $1$ to the vertex $a$. 
\end{theorem}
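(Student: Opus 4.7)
The plan is to analyze the lazy random walk through a standard symmetrization of its transition matrix, identify the stationary distribution as the top spectral component of the symmetrized walk, and bound the remaining tail using the spectral gap of $\cL$.

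First, I would set up the lazy transition matrix $W := \tfrac{1}{2}(I + D^{-1}A)$, so that if $p_0 = \delta_a$ then $p_t(b) = (W^t)_{ab}$. A direct verification gives the similarity $W = D^{-1/2} M D^{1/2}$, where $M := \tfrac{1}{2}(I + \cA) = I - \cL/2$ is symmetric, hence
\[
p_t(b) \;=\; \sqrt{\frac{deg(b)}{deg(a)}}\, (M^t)_{ab}.
\]
This already explains the prefactor in the theorem; the rest of the argument reduces to estimating $(M^t)_{ab}$.

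Second, I would spectrally decompose $M$. Since the eigenvalues of $\cL$ lie in $[0,2]$, those of $M$ lie in $[0,1]$; write them as $1 = \mu_1 \ge \mu_2 \ge \cdots \ge \mu_n \ge 0$ with $\mu_i = 1 - \lambda_i/2$, and let $\psi_1, \ldots, \psi_n$ be an orthonormal eigenbasis. The kernel of $\cL$ is spanned by $\psi_1 = D^{1/2}\mathbf{1}/\sqrt{vol(G)}$, and a direct computation shows the top rank-one piece contributes exactly $\sqrt{deg(b)/deg(a)} \cdot \psi_1(a)\psi_1(b) = deg(b)/vol(G) = \pi(b)$. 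Thus
\[
p_t(b) - \pi(b) \;=\; \sqrt{\tfrac{deg(b)}{deg(a)}}\; e_a^{T}\!\left(M^t - \psi_1 \psi_1^T\right)\! e_b.
\]

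Third, I would bound the tail using $|e_a^{T} X e_b| \le \|X\|_{\mathrm{op}}$ together with the spectral identity $M^t - \psi_1\psi_1^T = \sum_{i \ge 2} \mu_i^t \psi_i\psi_i^T$, whose operator norm equals $\max_{i \ge 2} \mu_i^t = (1-\lambda_2/2)^t$. Combined with the previous display, this yields the claimed inequality.

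The subtle point I expect to have to emphasize is why $\mu_2^t$ genuinely dominates the tail rather than some $\max_{i \ge 2}|\mu_i|^t$: the eigenvalues of $\cA$ may sit close to $-1$ on near-bipartite graphs, and for a non-lazy walk the tail would also involve $|1 - \lambda_n|$, potentially destroying the $(1-\lambda_2/2)^t$ decay. Laziness is precisely the device that shifts all eigenvalues of $M$ into $[0,1]$, so that $\mu_2$ is simultaneously the second-largest eigenvalue and the largest non-top eigenvalue in absolute value; this is the conceptual content of the factor $1/2$ appearing in the exponent of the theorem.
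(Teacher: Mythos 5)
This theorem is an external citation to Spielman's lecture notes; the paper itself offers no proof, so there is nothing in-paper to compare against. Your argument is correct and is the standard one (and essentially the one in the cited source): conjugate the lazy walk matrix $W=\tfrac12(I+D^{-1}A)$ into the symmetric $M=I-\cL/2$, so $p_t(b)=\sqrt{deg(b)/deg(a)}\,(M^t)_{ab}$; peel off the rank-one projector $\psi_1\psi_1^{T}$ onto the top eigenvector $\psi_1=D^{1/2}\mathbf{1}/\sqrt{vol(G)}$, whose contribution is exactly $\pi(b)=deg(b)/vol(G)$ once the degree prefactor is absorbed; and bound the residual $e_a^{T}(M^t-\psi_1\psi_1^{T})e_b$ by the operator norm $\max_{i\ge 2}\mu_i^{\,t}=(1-\lambda_2/2)^t$, which is valid because laziness keeps every $\mu_i=1-\lambda_i/2$ in $[0,1]$. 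Your closing remark correctly isolates the one subtlety — without laziness, eigenvalues of $\cA$ near $-1$ would make $\max_{i\ge 2}|\mu_i|$ the relevant quantity and destroy the stated decay. The only hypothesis worth making explicit (which the source and the paper's application both implicitly assume) is that $G$ is connected, so that $\ker\cL$ is one-dimensional and $\psi_1$ is the unique top eigenvector; without it the rank-one subtraction does not isolate $\pi$.
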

As for the use of the theorem in our case, denote $x = \sum_{v} x^{\cI}_0(v)$ and observe that $\frac{1}{x}x^{\cI}_0$ is a probability distribution. Thus by linearity of matrix multiplication, and by Theorem~\ref{thm:markov-conv}, we obtain that, for any $v \in V$
$|\frac{1}{x}x_{t}^{\cI}(v) - \frac{1}{n}| < \left(1 - \lambda_2/2\right)^t.$
Using the bound from Lemma~\ref{lem:ideal_l2} and setting $t = \tau_1 = 32\left(d_{\max} / d_{\min} \right) ^ 2$ yields
\[
\left|\frac{1}{x}x_{\tau_1}^{\cI}(v) - \frac{1}{n}\right| < \left(1 - \frac{1}{16} \left(\frac{d_{\min}}{d_{\max}} \right)^2\right)^{\tau_1} \le \frac{1}{n^2}
\ ,
\]
where the last inequality follows from the $(1+x)^{r} \le e^{xr}$ inequality.
Multiplying both sides by $x < n$ finally gives
\[
\left|x_{\tau_1}^{\cI}(v) - \frac{x}{n}\right| = \left|x_{t}^{\cI}(v) - \mu\right| \le x \frac{1}{n^2} \le \frac{1}{n}
\ .
\]

\subsection{Proof of Lemma~\ref{lem:skewed-ideal}}

We give a proof by induction. For the base case, we observe that $x^{0}_0 = x^{\cI}_0 = x^{1}_0$ which supports the claim.

Assume that $i \ge 1$ and that the inductive hypothesis holds. The ideal load of a vertex $v \in V$ in step $i$ is defined as
\[
x_{i}^{\cI}(v) = \left( \sum_{u \in N_{G^{\cI}}(v)} \frac{1}{2\dmax}x^{\cI}_{i - 1}(u) \right) + \frac{2\dmax - |N_{G^{\cI}}(v)|}{2\dmax} x^{\cI}_{i - 1}(v)
\ .
\]
Observe that $|N_{G^{\cI}}(v)| = \dmax$ and thus $\frac{2\dmax - |N_{G^{\cI}}(v)|}{2\dmax} = \frac{1}{2}$. On the other hand, it always holds that $N_{i}(v) \subseteq N_{G^{\cI}}(v)$. Using this observation, the inductive hypothesis, and the observation that the loads are always non-negative, we get that
\[
\left( \sum_{u \in N_{G^{\cI}}(v)} \frac{1}{2\dmax}x^{\cI}_{i - 1}(u) \right) + \frac{2\dmax - |N_{G^{\cI}}(v)|}{2\dmax} x^{\cI}_{i - 1}(v) \ge \left( \sum_{u \in N_{i}(v)} \frac{1}{2\dmax} x_{i - 1}^{0}(u) \right) + \frac{1}{2} x^{0}_{i-1}(v) = x^{0}_{i}(v)
\ .
\]
This proves the first inequality of the lemma. 
For the second, we proceed similarly.
We have that
\[
\hspace{-15em} \left( \sum_{u \in N_{G^{\cI}}(v)} \frac{1}{2\dmax}x^{\cI}_{i - 1}(u) \right) + \frac{2\dmax - |N_{G^{\cI}}(v)|}{2\dmax} x^{\cI}_{i - 1}(v)
\]
\[
\hspace{10em} \le \ \left( \sum_{u \in N_{i}(v)} \frac{1}{2\dmax} x_{i - 1}^{1}(u) \right) + \frac{1}{2} x^{1}_{i-1}(v) + \frac{d - |C_{i}(v)|}{2\dmax} \ = \ x^{1}_{i}(v)
\ ,
\]
where the inequality follows from the inductive assumption, the observation that $\frac{2\dmax - |N_{G^{\cI}}(v)|}{2\dmax} = \frac{1}{2}$, and the observation that the loads are always at most $1$ because they are an affine combination of values that are at most $1$. Thus, the lemma is proven.

\subsection{Proof of Lemma~\ref{lem:skewed-real}}

We use induction. For the base case, we observe that $x^{0}_0 = x^{\cI}_0 = x^{1}_0$ which supports the claim.

Assume that $i \ge 1$ and that the inductive hypothesis holds. The real load of a vertex $v \in V$ in step $i$ is defined as
$$x_{i}(v) = \left( \sum_{u \in N_i(v)} \frac{1}{2\dmax}x_{i - 1}(u) \right) + \frac{2\dmax - |N_{i}(v)|}{2\dmax} x_{i - 1}(v).$$

We observe that $N_{i}(v) \subseteq N_{G^{\cI}}(v)$ and thus $2|N_{i}(v)| \le 2\dmax$. Therefore, $\frac{2\dmax - |N_{i}(v)|}{2\dmax} \ge \frac{1}{2}$. Combining this observation with the inductive hypothesis and the fact that loads are always non-negative we get that
\[
\left( \sum_{u \in N_{i}(v)} \frac{1}{2\dmax}x_{i - 1}(u) \right) + \frac{2\dmax - |N_{i}(v)|}{2\dmax} x_{i - 1}(v) \ge \left( \sum_{u \in N_{i}(v)} \frac{1}{2\dmax} x_{i - 1}^{0}(u) \right) + \frac{1}{2} x^{0}_{i-1}(v) = x^{0}_{i}(v).
\]
This proves the left-hand side inequality of the lemma. We use similar reasoning to get the right-hand side inequality. The inductive assumption, the observation that $|N_{i}(v)| \le \dmax$, and the fact that all loads are always at most $1$, as an affine combination of values that are at most $1$, yields that
\[
\hspace{-13em} \left( \sum_{u \in N_{i}(v)} \frac{1}{2\dmax}x_{i - 1}(u) \right) + \frac{2\dmax - |N_{i}(v)|}{2\dmax} x_{i - 1}(v) 
\]
\[
\hspace{10em} \le \left( \sum_{u \in N_{i}(v)} \frac{1}{2\dmax} x_{i - 1}^{1}(u) \right) + \frac{1}{2} x^{1}_{i-1}(v) + \frac{d - |N_{i}(v)|}{2\dmax} = x^{1}_{i}(v).
\]
Thus Lemma~\ref{lem:skewed-real} is proven.

\subsection{Proof of Lemma~\ref{lem:real-main-dist}}

Let $s^1_{i}$ (respectively $s^0_{i}$) denotes the sum of entries of the vector $v^1_i$ (respectively $v^0_i$) for $0 \le i \le \tau_1$. Denote $s^{\cI} = \sum_{v \in G} v^{\cI}_0(v)$. Observe that $s^1_{0} = s^0_{0} = s^{\cI}$. In every round, every edge of the graph $G$ conveys a load of value at most $\frac{1}{2\dmax}$. Since there are at most $t$ erroneous vertices, thus the number of edges that are incident to these vertices is at most $\dmax t$ (we accounted for the loops of every vertex), and in consequence, the absolute difference between $s^1_{i}$ and $s^1_{i - 1}$ can be $t/2$ at most. It follows that $|s^1_{\tau_1} - s^{\cI}| < \tau_1 \cdot t /2$ and, analogously, $|s^0_{\tau_1} - s^{\cI}| < \tau_1 \cdot t / 2$. By Lemma~\ref{lem:skewed-ideal}, $x^1_{\tau_1}$ is larger than $x^{\cI}_{\tau_1}$ on every coordinate and thus we have
\begin{align}\label{line:2}
s^1_{\tau_1}  < s^{\cI} + \tau_1 \cdot t / 2.
\end{align}
By the same lemma, $x^0_{\tau_1}$ is smaller than $x^{\cI}_{\tau_1}$ on every coordinate which yields
$$s^{0}_{\tau_1} > s^{\cI} - \tau_1 \cdot t / 2.$$
Let $C'_{1}$ denote the set of these vertices for which $x^1_{\tau_1}(v) > s^{\cI}_{\tau_1}(n) + \varepsilon/2$. Since for every $v$ it holds that $x^{1}_{\tau_1}(v) \ge x^{\cI}_{\tau_1}(v)$, thus to equality~(\ref{line:2}) to hold it must be that 
$$|C'_{1}| \cdot \varepsilon/2 \le \tau_1 \cdot t/2 \implies |C'_{1}| \le \tau_1 \cdot t/ \varepsilon.$$
Let $C_{1} = V \setminus C'_{1}$. It follows that $C_{1} \ge n - \tau_1 t / \varepsilon$  
$$\forall_{v \in C_{1}} x^1_{\tau_1}(v) \le x^{\cI}_{\tau_1}{n}(v) + \varepsilon / 2 \le \mu + \varepsilon/2 + \frac{1}{n} \ge \mu + \varepsilon,$$
where the last inequality follows from Lemma~\ref{lem:classical-random-walk} and from the assumption that $\varepsilon \ge \frac{1}{n}$.

Applying the same reasoning to the vector $x^0_{\tau_1}$, we obtain a set $C_{0}$ of at least $n - \tau_1 t / \varepsilon$ with the property that
$$\forall_{v \in C_{0}} x^0_{\tau_1}(v) \ge x^{\cI}_{\tau_1}(v) - \varepsilon/2 \ge \mu - \varepsilon.$$
Let $C = C_0 \cap C_1$. We have that $|C| \ge n - 2\tau_1 t / \varepsilon$. Using Lemma~\ref{lem:skewed-real}, combined with the above inequalities, we get that 
$$\forall_{v \in C} \hspace{3mm} \mu - \varepsilon \le x_{\tau_1}(v) \le \mu + \varepsilon,$$
which proves the lemma.

\subsection{Proof of Lemma~\ref{lem:active}}

Let $F$ be the set of the erroneous vertices. By abusing the notation slightly, denote $A$ the remaining graph obtained by applying Lemma~\ref{lem:high-degree-core} on the graph $G$ to the set $F$ for constants $\phi = \frac{2}{3}$ and $\alpha = \left(\frac{40}{81}\left(\frac{d_{\min}}{d_{\max}}\right)^2 - \frac{2}{9}\frac{d_{\min}}{d_{\max}} \right)$ (shortly, we will prove that such defined $A$ is indeed a subset set of active vertices). Since $G$ is $(d_{\min}, d_{\max})$-\wellconnected, and since  $|F| = t  < \left(\frac{40}{81}\left(\frac{d_{\min}}{d_{\max}}\right)^2 - \frac{2}{9}\frac{d_{\min}}{d_{\max}} \right) n$, the assumptions of Lemma~\ref{lem:high-degree-core} are satisfied and we can conclude that size of $A$ is at least $n - \frac{3}{2}|F| \ge n - \frac{3}{2}t$. 
Furthermore, every vertex $v \in A$ has at least $\frac{2}{3}d_{\min}$ neighbors in the subgraph induced by the vertices of $A$. Since $A$ contains only correct vertices, all these $\frac{2}{3}d_{\min}$ neighbors can communicate between each other in all communication rounds of the protocol $\textsc{FixOutliers}$. Therefore, the line~\ref{line:silent} is never executed in any vertex of $A$, which proves that all these vertices must remain $\textit{active}$. 

\subsection{Proof of Lemma~\ref{lem:active-converged-core}}

Let $C$ be the set of these vertices that at the beginning of the algorithm \textsc{FixOutliers} have the load in the interval $[\mu - \varepsilon, \mu + \varepsilon]$. Using Lemma~\ref{lem:real-main-dist}, we get that $|C| \ge n - \frac{2\tau_1}{\varepsilon} t$. 
As the first step of the proof, we define the set $R_{0} = V \setminus (A \cap C)$ as the set of these vertices that at the beginning of the algorithm $\textsc{FixOutliers}$ cannot belong to $C_{0}$. Then, we apply Lemma~\ref{lem:high-degree-core} for $\phi =  \left(\frac{1}{2}d_{\max} + 1\right) / d_{\min}$ to set $R_{0}$ to obtain a core set $C^{\ast}_0$ that also satisfies all the required properties of $C_0$. To apply Lemma~\ref{lem:high-degree-core} we first bound the size of $R_{0}$ 
\[
|R_{0}| \le |V| - |A| + |V| - |C| \le \frac{3}{2}t + \frac{3}{2}\left(\frac{2\tau_1}{\varepsilon} +1\right)t \le \left(\frac{2\tau_1}{\varepsilon} + 2 \right)t
\ ,
\]
where the first inequality follows from the previous lower bound on the size of $C$ and Lemma~\ref{lem:active} that lower bounds the size of $A$. To satisfy the assumptions of Lemma~\ref{lem:high-degree-core} it must hold that
\[
|R_{0}| \le \left(1 - \phi\right) \left(\frac{40}{27}\left(\frac{d_{\min}}{d_{\max}}\right)^2 - \frac{2}{9}\frac{d_{\min}}{d_{\max}} \right) |V|
\ ,
\]
which we verify first using the previously noticed upper bound on the size of $R_0$ by the assumption of this lemma we have that
\[
|R_0| \le  \left(\frac{2\tau_1}{\varepsilon} + 2 \right)t
\ ,
\]
and using the lemma assumption limiting the magnitude of $t$ 
\[
\left(\frac{2\tau_1}{\varepsilon} + 2 \right)t \le \left(\frac{2\tau_1}{\varepsilon} + 2 \right) \cdot \frac{\varepsilon}{3\tau_1}\left( 1-\frac{d_{\max} + 1}{2d_{\min}}\right)\left(\frac{40}{27}\left(\frac{d_{\min}}{d_{\max}}\right)^2 - \frac{2}{9}\frac{d_{\min}}{d_{\max}} \right) 
\]
\[
\le \left(1 - \left(\frac{1}{2}d_{\max} + 1\right) \frac{1}{d_{\min}}\right) \left(\frac{40}{27}\left(\frac{d_{\min}}{d_{\max}}\right)^2 - \frac{2}{9}\frac{d_{\min}}{d_{\max}} \right)
\ .
\]
Therefore, we get the existence of a set $C_0$ of size at least $n - \frac{3}{2}|R_{0}| \ge n - 3\left(\frac{\tau_1}{\varepsilon} + 1 \right)t$ with the following properties (a) every vertex in $C_0$ has degree at least $\frac{1}{2}d_{\max} + 1$ in $C_0$; (b) every vertex in $C_0$ is active since $V - R_{0} \subseteq A$; (c) every vertex in $C_0$ has the load in the interval $[\mu - \varepsilon, \mu + \varepsilon]$ as it holds that $V - R_{0} \subseteq C$.

\subsection{Proof of Lemma~\ref{lem:rem-shrinks}}

We first prove that $R_{i + 1} \subseteq R_{i}$ for any $0 \le i \le \log_{14/15}{n} + 1$. To this end, we observe that $C_{i} \subseteq C_{i + 1}$. Indeed, by definition, every vertex belonging to $C_{i}$, for any $0 \le i \le \log_{14/15}{n} + 1$, must be active. Therefore, if process $v$ receives at least $\frac{1}{2}\left(d_{\max} + 1\right)$ messages from $C_{i-1}$ in a round $i$, it all receives messages from these set of neighbors in any round $j > i$ and thus it belongs to $C_{j}$ for any $j \ge i$. Since $R_{i} = A \setminus C_{i}$, thus the inclusion $R_{i + 1} \subseteq R_{i}$ follows.

By Lemma~\ref{cor:edge-density} the number of internal edges of $R_i$ is at most
\[
E(R_i) \le \frac{1}{2}vol(R_{i})\left(1 - \lambda_{2}\left(1 - \frac{vol(R_i)}{vol(V)} \right) \right).
\]
Using the the fact that $G$ is $(d_{\min}, d_{\max})$-\wellconnected\space, we can upper bound $vol(R_i)$ and lower bound $\lambda_2$ with $\lambda_2 \ge \frac{9}{10}$ yielding
\[
\frac{1}{2}vol(R_{i})\left(1 - \lambda_{2}\left(1 - \frac{vol(R_i)}{vol(V)} \right) \right)
\le \frac{1}{2}d_{\max} |R_{i}|\left(1 - \frac{9}{10}\left(1 - \frac{d_{\max} |R_{i}|}{d_{\min} n} \right) \right).
\]
Using the observation that $R_{i + 1} \subseteq R_{i}$, and Lemma~\ref{lem:active-converged-core} we derive that $|R_{i}| \le |R_{0}| = |V| - |C_0| \le 3\left(\frac{\tau_1}{\varepsilon} + 1 \right)t$ and thus it follows
\[
\frac{1}{2}d_{\max} |R_{i}|\left(1 - \frac{9}{10}\left(1 - \frac{d_{\max} |R_{i}|}{d_{\min} n} \right) \right) 
\le 
\frac{1}{2}d_{\max} |R_{i}|\left(\frac{1}{10} + \frac{9}{10}\frac{d_{\max}}{d_{\min}} 3\left(3\frac{\tau_1}{\varepsilon} + 1\right)\frac{t}{n} \right) \]
\begin{align}\label{line:9}
= \frac{1}{20}d_{\max}|R_{i}| + \frac{27}{20} \frac{d^2_{\max}}{d_{\min}}\left(\frac{\tau_1}{\varepsilon} + 1 \right)\frac{t}{n} |R_i|.
\end{align}
Observe that the assumption that $t < \frac{2d_{\min}}{d_{\max}} \cdot \frac{\varepsilon n}{81}$ implies that $27 \frac{d_{\max}}{d_{\min}}\left(\frac{\tau_1}{\varepsilon} + 1 \right)t < \frac{n}{3}$. By plugging this inequality to the left hand side of~\ref{line:9}, we get finally, that
\[
E(R_i) \le \frac{1}{20}d_{\max}|R_{i}| + \frac{27}{20} \frac{d^2_{\max}}{d_{\min}}\left(\frac{\tau_1}{\varepsilon} + 1 \right)\frac{t}{n} |R_i| \le \frac{1}{15}d_{\max}|R_i|.
\]
On the other hand, we argue that the number of edges by which messages are sent to vertices from $R_{i}$ in round $i+1$ is at least $\frac{2}{3}d_{\min}$. That follows from the inspection of line~\ref{line:if-silent} of the procedure~\textsc{FixOutliers}. Since $R_{i} \subseteq A$, thus in round $i + 1$ every vertex maintains its active status and thus receives at least $\frac{2}{3}d_{\min}$ messages. 
Thus, the number of edges between $C_{i}$ and $R_{i}$ that transmit a message in round $i + 1$ is at least
\[
\left(\frac{2}{3}d_{\min} - \frac{2}{15}d_{\max}\right)|R_{i}.
\]
Using the averaging argument, we conclude that there must be at least.
\[
2\left(\frac{2d_{\min}}{3d_{\max}} - \frac{2}{15} - \frac{1}{2} \right) |R_{i}| = \left(\frac{4d_{\min}}{3 d_{\max}} - \frac{19}{15} \right) |R_{i}|.
\]
vertices of $R_{i}$ who receive $\frac{1}{2}d_{\max} + 1$ or more messages from $C_{i}$ in round $i + 1$ via edges that transmit messages in round $i+1$. By definition of the set $C_{i+1}$, it follows that all these vertices belong to $C_{i+1}$ and thus we have that
\[|R_{i + 1}| \le \left(1 - \frac{4d_{\min}}{3 d_{\max}} + \frac{19}{15} \right)|R_{i}| = \left(\frac{34}{15} - \frac{4d_{\min}}{3 d_{\max}} \right)|R_i|,
\]
which proves the lemma.

\subsection{Proof of Lemma~\ref{lem:final-convinced-converges}}

We reason by induction. Specifically, we prove that for any  $i$ in the range $0 \le i \le \tau_{2}$ and any vertex $v \in C_i$ it holds that $x_{i}(v) \in [\mu - \varepsilon, m + \varepsilon]$. The base case $i = 0$ follows immediately from Lemma~\ref{lem:active-converged-core}.

For $i \ge 1$, observe that any vertex $v \in C_{i}$ receives at least $\frac{1}{2}d_{\max} + 1$ messages from vertices in $C_{i}$, by definition. Then the load of $v$ in round $i$ is set to the median of all received values in these messages. First, the maximum degree of any vertex in $C_{i}$ is at most $d_{\max}$. Second, at least $\frac{1}{2}d_{\max} + 1$ received values is in the interval $[\mu - \varepsilon, m + \varepsilon]$ by the inductive assumption on $C_{i}$. Henceforth, the median of the received values must be also in $[\mu - \varepsilon, m + \varepsilon]$ and thus the lemma follows.

\section{Analysis of Algorithm \textsc{LLBConsensus:CrashFailures}
and Proof of Theorem~\ref{thm:simple-cons-crashes}}
\label{sec:app-consensus-crash-analysis}

We first explain the notation used in the algorithm. The constants $C_{1}, C_{2}$ must be greater than $2^{15}$\footnote{Similarly as in Section~\ref{sec:app-consensus-omissions}, we made no effort to optimize this constant. In particular, increasing the degree of the random graph by $\log{n}$ decreases the requirement on the magnitude of $C_1, C_2$ by an order of magnitude ($\sim2^8$), albeit at the cost of higher asymptotic complexities.} We assume that $n$ is the only parameter given to the algorithm, and all other expressions are either absolute constants or derived based on $n$. We say that a process remains \textit{active} in execution of the algorithm if it is non-faulty and never passes the conditional statement in line~\ref{line:becomes_silent}. The remaining processes are called \textit{silent}. Analogously, we say that a process remains active until iteration $i$ of the main loop of the algorithm if it neither crashes nor passes the conditional statement in rounds $1, \ldots, i$. 

We begin the analysis by noting the properties of the graph formed by the union of edges belonging to the sets $\{ N^{\ast} \}_{v \in V}$.
\begin{lemma}\label{lem:g-ast-well-connected}
Let $G^{\ast} = \cup_{v \in V} N^{\ast}(v)$. Then with probability at least $1 - \frac{1}{n^2}$ graph $G^{\ast}$ is $(d^{\ast}(1-(1/20\log\log{n})), \allowbreak d^{\ast}(1+1/(20\log\log{n})))$ \wellconnected, where $d^{\ast} = C_2\log{n}\left(\log\log{n}\right)^{2}$. Also the ratio of the maximum vertex degree to the minimum vertex degree in $G^{\ast}$ is at most $\frac{11}{10}$.
\end{lemma}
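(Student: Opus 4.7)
The plan is to recognize $G^{\ast}$ as an Erdős–Rényi random graph and then invoke Lemma~\ref{cor:erdos-renyi-lap}. First I would argue that the indicator $X_{\{u,v\}}$ of edge $\{u,v\}$ in $G^{\ast}$ equals one precisely when either $u$ samples $v$ in line~\ref{line:sampling-crash} of \textsc{SetGraph} or $v$ samples $u$ there, since the subsequent dummy-message exchange in lines~\ref{line:inquire-set-graph}--\ref{line:update-set-graph} symmetrizes the edge set. These two samplings are independent Bernoulli$(p(n))$ trials, and trials across different ordered pairs are mutually independent, hence the indicators $X_{\{u,v\}}$ for distinct unordered pairs are mutually independent, each with success probability $1-(1-p(n))^2 = 2p(n)-p^2(n)$. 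By the defining equation for $p(n)$, this probability equals $p' := \frac{C_2\log{n}(\log\log{n})^2}{n-1}$, so $G^{\ast}$ is distributed as $G(n, p')$.

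Next, since $C_2 \ge 2^{15}$ dominates the absolute constant of Lemma~\ref{cor:erdos-renyi-lap}, that lemma applies with the parameter $p'$. Part (a) of the lemma gives that every vertex has degree in the interval $[p'(n-1)(1-\tfrac{1}{20\log\log{n}}),\, p'(n-1)(1+\tfrac{1}{20\log\log{n}})]$, and because $p'(n-1) = C_2\log{n}(\log\log{n})^2 = d^{\ast}$, this is precisely the claimed range. Part (b) gives $\lambda_2(G^{\ast}) \ge 1-\tfrac{1}{10\log\log{n}}$, which combined with the degree bounds from (a) verifies both conditions of Definition~\ref{def:well-connected}. Therefore $G^{\ast}$ is $\bigl(d^{\ast}(1-\tfrac{1}{20\log\log{n}}),\, d^{\ast}(1+\tfrac{1}{20\log\log{n}})\bigr)$-\wellconnected\ with probability at least $1 - \tfrac{1}{n^2}$, matching the statement.

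The ratio claim is a short numerical check: the worst-case ratio of maximum to minimum degree is at most $\tfrac{1+1/(20\log\log{n})}{1-1/(20\log\log{n})}$, which for $n$ large enough that $\log\log n \ge 2$ is bounded above by $\tfrac{41}{39} < \tfrac{11}{10}$. I do not anticipate a real obstacle; the entire lemma is essentially a direct corollary of Lemma~\ref{cor:erdos-renyi-lap}. The only point that requires care is the identification of the edge probability as $2p(n) - p^2(n)$ rather than $p(n)$ itself, which is precisely why $p(n)$ was defined in \textsc{SetGraph} as the positive root of $2x - x^2 = \frac{C_2\log{n}(\log\log{n})^2}{n-1}$: this choice makes the symmetrized edge probability match the hypothesis of Lemma~\ref{cor:erdos-renyi-lap} exactly.
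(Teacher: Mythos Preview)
Your proposal is correct and follows essentially the same approach as the paper: identify $G^{\ast}$ as an Erd\H{o}s--R\'enyi graph $G(n,2p(n)-p^2(n))$, invoke Lemma~\ref{cor:erdos-renyi-lap}, and bound the degree ratio by a direct numerical estimate. If anything, you supply more justification than the paper for why the symmetrized edge probability equals $2p(n)-p^2(n)$ and why the edges are independent; the paper simply asserts the distribution.
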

\begin{proof}
Let us note that graph $G^{\ast}$ is drawn from a distribution $G(n, 2p(n) - p(n)^2) \allowbreak = \allowbreak G\left(n, \frac{C_2\log{n}\left(\log\log{n}\right)^2}{n - 1}\right)$ as per the definition of $p(n)$ in line~\ref{line:sampling-crash} of Algorithm~\ref{alg:set-graph}. Thus, by Lemma~\ref{cor:erdos-renyi-lap} we get that $G^{\ast}$ is $(d^{\ast}(1-(1/20\log\log{n})), d^{\ast}(1+1/(20\log\log{n})))$-\wellconnected\space with probability at least $1 - \frac{1}{n^2}$ as we defined $d^{\ast} = C^{\ast}\log{n}\left(\log\log{n}\right)^2$. This give the first part of the lemma. For proving the second part, the upper bound, we note
\[
\frac{d^{\ast}(1+1/(20\log\log{n}))} {d^{\ast}(1-1/(20\log\log{n}))} \le 1 + 
\frac{2 / (20\log\log{n})}{1 - 1/(20\log\log{n})}
\]
\[
\le 1 + \frac{1}{10\log\log{n}} \le \frac{11}{10},
\]
where for the last inequality we used the fact that $n \ge 3$.
\end{proof}

We define $d^{\ast}_{\min} = d^{\ast}(1-1/(20\log\log{n}))$ and $d^{\ast}_{\max} = d^{\ast}(1+1/(20\log\log{n}))$.
Also, in the remainder, we condition on the fact that $G^{\ast}$ is as described in the above lemma. Next, we observe that at least $\frac{1}{4}n$ processes is active.
\begin{lemma}\label{lem:active-are-large}
If $t < \frac{n}{3}$, then at in any execution at least $\frac{n}{4}$ processes remains active, whp.
\end{lemma}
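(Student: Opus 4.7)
The plan is to apply Lemma~\ref{lem:high-degree-core} to $G^{\ast}$, taking $F$ to be the set of crashed processes, and thereby identify a large core whose vertices can never trigger the silent condition in line~\ref{line:becomes_silent}. We have already conditioned on $G^{\ast}$ being $(d^{\ast}_{\min}, d^{\ast}_{\max})$-\wellconnected{} with $d^{\ast}_{\max}/d^{\ast}_{\min} \le 11/10$ (Lemma~\ref{lem:g-ast-well-connected}), an event of probability at least $1 - 1/n^2$, which is where the whp comes from.

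First I would invoke Lemma~\ref{lem:high-degree-core} with parameters $\phi = 1/5$ and $\alpha = 1/3$; since $|F| = t < n/3 = \alpha n$, it suffices to verify the inequality
\vspace*{-1ex}
\[
\alpha < (1-\phi)\,\tfrac{40}{27}\,\tfrac{(d^{\ast}_{\min})^2}{(d^{\ast}_{\max})^2} - \tfrac{2}{9}\,\tfrac{d^{\ast}_{\min}}{d^{\ast}_{\max}}.
\]
\vspace*{-3ex}

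Using $d^{\ast}_{\min}/d^{\ast}_{\max} \ge 10/11$, the right-hand side is at least $\tfrac{4}{5}\cdot\tfrac{40}{27}\cdot(10/11)^2 - \tfrac{2}{9}\cdot(10/11) > 0.77$, comfortably exceeding $1/3$. Hence Lemma~\ref{lem:high-degree-core} yields a set $W_k \supseteq F$ with $|W_k| \le \lceil \tfrac{3}{2} t \rceil$ such that every vertex in the core $C := V \setminus W_k$ has at least $\lceil \tfrac{1}{5} d^{\ast}_{\min} \rceil$ neighbors in $C$ in the graph $G^{\ast}$.

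Next I would argue, by induction over the sequence of communication rounds of the dissemination loop (lines~\ref{line:dis-start}-\ref{line:dis-end}), taken across all outer iterations, that no vertex of $C$ ever becomes silent. Vertices of $C$ are non-faulty since $C \cap F = \emptyset$, so none of them ever crashes. Suppose inductively that every vertex of $C$ has remained active throughout all preceding rounds; then in the current round, each core vertex $v$ receives messages from all its $\ge \tfrac{1}{5} d^{\ast}_{\min}$ core neighbors in $G^{\ast}$, whence $|M_v| \ge \tfrac{1}{5} d^{\ast}_{\min}$ and the conditional in line~\ref{line:becomes_silent} is not triggered. So $v$ remains active. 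Since $|C| \ge n - \lceil \tfrac{3}{2} t \rceil > n - n/2 = n/2 \ge n/4$, the lemma follows.

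The main obstacle is simply the numerical bookkeeping -- verifying that the specific $\alpha$-condition of Lemma~\ref{lem:high-degree-core} is compatible with $\phi = 1/5$, $\alpha = 1/3$, and the ratio bound $d^{\ast}_{\max}/d^{\ast}_{\min} \le 11/10$. The structural portion of the argument is an immediate consequence of the core-subgraph construction: the core is a self-supporting subnetwork in which every vertex always has enough active non-faulty neighbors to meet the silence threshold.
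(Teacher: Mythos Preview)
Your proposal is correct and follows essentially the same approach as the paper: apply Lemma~\ref{lem:high-degree-core} to $G^{\ast}$ with $F$ the set of crashed processes, $\phi=1/5$, $\alpha=1/3$, verify the numerical hypothesis using $d^{\ast}_{\min}/d^{\ast}_{\max}\ge 10/11$, and conclude that the resulting core of size at least $n-\tfrac{3}{2}t\ge n/4$ consists of non-faulty vertices each with $\ge \tfrac{1}{5}d^{\ast}_{\min}$ core neighbors, so none ever triggers line~\ref{line:becomes_silent}. Your numerical check is in fact more carefully written than the paper's (you retain the $40/27$ factor), and your explicit induction over communication rounds makes the final step cleaner.
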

\begin{proof}
By Lemma~\ref{lem:g-ast-well-connected}, graph $G^{\ast}$ must be $(d^{\ast}_{\min}, d^{\ast}_{\max})$~\wellconnected\space and $\frac{d^{\ast}_{\max}}{d^{\ast}_{\min}} \le \frac{11}{10}$.

Let $F$ denote the set of processes that crash in the execution of the algorithm. 
Denote $\alpha = \frac{1}{3}$ and $\phi = \frac{1}{5}$. Following the assumptions of Lemma~\ref{lem:high-degree-core}, we verify that
\[
(1- \phi)\frac{40}{27}\left(\frac{d^{\ast}_{\min}}{d^{\ast}_{\max}} \right)^2 - \frac{2}{9}\frac{d^{\ast}_{\min}}{d^{\ast}_{\max}} \ge \frac{4}{5}\left(\frac{10}{11}\right)^2 - \frac{2}{9} \ge \alpha = \frac{1}{3}.
\]
Thus by applying Lemma~\ref{lem:high-degree-core} to set $F$ on graph $G^{\ast}$, with the choice of parameters $\alpha = \frac{1}{3}$, $\phi = \frac{1}{5}$, we obtain the existence of a set of processes $C$ of size $n - \frac{3}{2}|F| \ge \frac{1}{4}n$, disjoint with $F$ with the property, that every vertex in $C$ has at least $\frac{1}{5}d^{\ast}_{\min}$ neighbors in $C$.
As only correct processes constitute to the set $C$, we observe that none of these processes can pass the conditional statement of line~\ref{line:becomes_silent} in the main algorithm thus proving that all these processes are active.
\end{proof}

In the next lemma, we proof that the set of active processes can exchange any information by taking part in the communication rounds of the inner loop of an iteration of the main loop of the algorithm, e.g. lines~\ref{line:dis-start}-\ref{line:dis-end}. We say that an active process $p$ \textit{reaches} an active process $q$ if there exists a path of reliable links that could transmit a message from $p$ to $q$ during communication in the mentioned lines during execution of these lines.
\begin{lemma}\label{lem:g-ast-broadcast}
If processes $p$ and $q$ remain active in an iteration of the main loop of the algorithm, then $p$ reaches $q$.
\end{lemma}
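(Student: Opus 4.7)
The plan is to show that the subgraph $G^{\ast}[A]$ of $G^{\ast}$ induced by the active set $A$ is connected with diameter at most $40\log n$. This suffices, because the inner loop in lines~\ref{line:dis-start}--\ref{line:dis-end} consists of $40\log n+1$ synchronous rounds in which an active process that hears a tagged-\textit{active} message forwards it one hop along every reliable ($A$-to-$A$) link of $G^{\ast}$; thus a message originating at $p$ reaches every vertex of $G^{\ast}[A]$ within the allotted rounds, including $q$.

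First I would reuse the core-construction used in the proof of Lemma~\ref{lem:active-are-large}. Applying Lemma~\ref{lem:high-degree-core} to the crashed set $F$ inside $G^{\ast}$ with $\phi=1/5$ and $\alpha=1/3$ is legitimate, since $G^{\ast}$ is $(d^{\ast}_{\min},d^{\ast}_{\max})$-\wellconnected{} with $d^{\ast}_{\max}/d^{\ast}_{\min}\le 11/10$ by Lemma~\ref{lem:g-ast-well-connected} and $|F|\le t<n/3$. This yields a core $C=V\setminus W_k\subseteq V\setminus F$ of size at least $n-\tfrac{3}{2}|F|\ge n/2$ in which every vertex has at least $\tfrac{1}{5}d^{\ast}_{\min}$ neighbors in $C$. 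As argued in Lemma~\ref{lem:active-are-large}, such a vertex always collects $\ge \tfrac{1}{5}d^{\ast}_{\min}$ messages from fellow $C$-vertices in every round, so $C\subseteq A$.

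Next I would bound $\mathrm{diam}(G^{\ast}[C])$ by $40\log n$ through a standard expansion argument. Run a BFS from any $v\in C$ and let $B_j\subseteq C$ be the set of $C$-vertices within distance $j$ in $G^{\ast}[C]$. While $|B_j|\le |C|/2$, Lemma~\ref{cor:edge-density} applied to $B_j$ in $G^{\ast}$ bounds the number of internal edges of $B_j$ from above, while each $w\in B_j$ contributes at least $\tfrac{1}{5}d^{\ast}_{\min}$ edges to $C$; thus a constant fraction of those edges must land in $C\setminus B_j$, giving $|B_{j+1}|\ge (1+\Omega(1))|B_j|$. A symmetric argument run from the complementary side covers the second half of $C$. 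Combined with $\lambda_2\ge 1-1/(10\log\log n)$, this gives diameter $O(\log n/\log\log n)\ll 40\log n$ for $G^{\ast}[C]$.

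Finally, one must extend the reach to every active $p\in A$, not just those in $C$. The cleanest route is to identify the silencing dynamics over the $40\log n+1$ rounds with parallel bootstrap percolation on $G^{\ast}$: set $A_0=V\setminus F$ and $A_{i+1}=\{v\in A_i:|N_{G^{\ast}}(v)\cap A_i|\ge \tfrac{1}{5}d^{\ast}_{\min}\}$, whose unique fixed point coincides with the core $C$. Applying Lemma~\ref{cor:edge-density} to the residual shell $A_i\setminus C$ shows that its internal-edge upper bound is incompatible, once $|A_i\setminus C|$ is moderately small relative to $n$, with the requirement that each surviving shell vertex contributes $\ge \tfrac{1}{5}d^{\ast}_{\min}$ edges to $A_i$; this forces $|A_i\setminus C|$ to shrink by a constant factor per round, so after $O(\log n)$ parallel rounds the shell is emptied. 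Consequently $A=C$ at the end of the inner loop, and any two $p,q\in A$ are joined by a path of length at most $\mathrm{diam}(G^{\ast}[C])\le 40\log n$.

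The main obstacle is precisely this last step: Lemma~\ref{lem:high-degree-core} is phrased as a one-at-a-time removal, whereas the algorithm triggers the silencing criterion in parallel across all vertices simultaneously, and one must show that the parallel-update version still collapses to the same fixed point within the budget of $40\log n+1$ rounds. This is where the strong spectral hypothesis $\lambda_2\ge 1-1/(10\log\log n)$ on $G^{\ast}$ is essential: bootstrapping Lemma~\ref{cor:edge-density} around the sequence $(A_i\setminus C)_{i\ge 0}$ produces the required geometric decay.
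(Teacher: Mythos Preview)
Your reduction to bounding the diameter of $G^{\ast}[A]$ is sound, and your first two steps (existence of the dense core $C\subseteq A$ and small diameter of $G^{\ast}[C]$) are fine. The gap is in the third step, where you claim that the parallel silencing dynamics force $A=C$ within $40\log n+1$ rounds.

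Your sketch says that Lemma~\ref{cor:edge-density} applied to the shell $A_i\setminus C$ is ``incompatible with the requirement that each surviving shell vertex contributes $\ge\tfrac{1}{5}d^{\ast}_{\min}$ edges to $A_i$''. But those $\tfrac{1}{5}d^{\ast}_{\min}$ edges go to $A_i=C\cup(A_i\setminus C)$, not to the shell. By maximality of $C$, any $v\in(V\setminus F)\setminus C$ has strictly fewer than $\tfrac{1}{5}d^{\ast}_{\min}$ neighbours in $C$; hence a surviving shell vertex is only forced to have at least \emph{one} neighbour in the shell. Summing gives $|A_{i+1}\setminus C|\le 2\,E(A_i\setminus C)$, and the edge-density bound $E(A_i\setminus C)\lesssim\tfrac{1}{20}d^{\ast}_{\max}|A_i\setminus C|$ then yields $|A_{i+1}\setminus C|\lesssim\tfrac{1}{10}d^{\ast}_{\max}|A_i\setminus C|$, which is useless since $d^{\ast}_{\max}=\Theta(\log n(\log\log n)^2)$. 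In fact the parallel peeling can be slow: if the adversary arranges a ``chain'' inside $W_k\setminus F$ where each vertex has $\tfrac{1}{5}d^{\ast}_{\min}-2$ neighbours in $C$ and the rest in $F$, the chain peels only from its endpoints, one vertex per round, so convergence can take $\Theta(|W_k\setminus F|)\gg\log n$ rounds. Nothing in the spectral hypothesis rules this out.

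The paper sidesteps this entirely. It never argues that $A=C$; instead it grows, round by round, the set $R^i(q)$ of active processes that can reach $q$, using directly that every active vertex receives $\ge\tfrac{1}{5}d^{\ast}_{\min}$ messages per round. Lemma~\ref{cor:edge-density} is applied to $R^i(q)$ itself (not to a shell) to show that most of these incoming edges originate outside $R^i(q)$, giving $|R^{i+1}(q)|\ge(1+\tfrac{1}{20})|R^i(q)|$ until $|R^i(q)|\ge n/20$. The same growth from $p$ produces a second set of size $\ge n/20$, and a single application of the expander mixing lemma guarantees an edge between them. This two-sided growth plus one bridging edge is the missing idea: it proves reachability for all of $A$ without ever needing to pin $A$ down as the core.
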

\begin{proof}
Consider the process $q$. For $1 \le  40\log{n}$, we denote $R^{i}(q)$ the set of active processes who can reach $q$ in $i$ communication rounds of the inner loop in an iteration of the main loop of the algorithm. By induction, we will show that $|R^{i}(q)| \min\left((1 + 1/20)^i, n/20\right)$. The base case follows from the fact that $q$ never executes line~\ref{line:becomes_silent}, and thus in any round of the inner loop receives $\frac{1}{5}d^{\ast}_{\min} > 1 + 1/20$ messages at least.

We now proceed to the inductive step. Consider the set $R^{i}(q)$. If $|R^{i}(q)| \ge n/20$, then the case is proven. Assume then that $|R^{i}(q)| < n/20$. Since all processes in $R^{i}(q)$ are active, thus they must receive at least $\frac{1}{5}d^{\ast}_{\min}$ messages from other active processes in every iteration of the inner loop. The number of links delivering this messages is at least $\frac{1}{5}d^{\ast}_{\min}|R^{i}(q)|$. On the other hand, by Lemma~\ref{cor:edge-density} and the fact that $G^{\ast}$ is $(d^{\ast}_{\min}, d^{\ast}_{\max})$-\wellconnected, we observe that the number of edges internal to $R^{i}(v)$ can be bounded as follows.
\[
E(R^{i}(v)) \le \frac{1}{2}\cdot vol(W)\left(1 - \lambda_{2}\cdot\left(1 - \frac{vol(W)}{vol(G)} \right)\right) \le \frac{1}{2}d^{\ast}_{\min}|R^{i}(v)|\left(1 - \frac{9}{10}\left(1 - \frac{10}{11}\frac{1}{20} \right) \right) \le \frac{1}{14}d^{\ast}_{\min}|R^{i}(v)|.
\]
Therefore, at least $\frac{1}{5}d^{\ast}_{\min}|R^{i}(q)| - \frac{2}{14}d^{\ast}_{\min}|R^{i}(v)| \ge \frac{2}{35}d^{\ast}_{\min}|R^{i}(q)|$ of the aforementioned links connect $R^{i}(v)$ with $G^{\ast} \setminus R^{i}(v)$. Given that the maximum degree of $G^{\ast}$ is $d^{\ast}_{\max} \le \frac{11}{10}d^{\ast}_{\min}$, we conclude that these links are incident to at least $\frac{10}{11}\cdot\frac{2}{35}|R^{i}(v)|$ processes that do not belong to $R^{i}(v)$. It yields that $R^{i + 1}(v) \ge (1 + \frac{1}{20})R^(i)(v)$ and thus the inductive hypothesis is proven. Since the inner loop iterates fro $40\log{n} + 1$ steps, thus we get that $|R^{40\log{n}}(v)| \ge \frac{n}{20}$. 

Observe, that the same reasoning applies to $p$ yielding another set of at least $\frac{n}{20}$ active processes that can reach $p$. Now, using the Expander Mixing Lemma (c.f. Lemma 2.5 in~\cite{hoory2006expander}), we can deduce that there is at least one edge between processes that can reach $p$ and those that can reach $q$, since $G^{\ast}$ satisfies that $\lambda_2 \ge 1-1/(10\log\log{n})$. Since both these sets are formed by active vertices, thus this edge serves as a reliable link between the corresponding sets in all iterations of the inner loop, ultimately proving that $p$ reaches $q$.
\end{proof}

Establishing that a constant fraction of processes remain active and information can be transmitted quickly between active processes via lines~\ref{line:dis-start}-\ref{line:dis-end}, we proceed to analyzing how the variables $b$ change in the active processes during an execution. 
Let $A_{i}$ be the set of these processes that are active at the beginning of iteration $i$ of the main loop algorithm. We define graph $G_{i}$ as the graph formed on the processes in $A_{i}$ by taking the union of the sets of edges $\left(N_{i}\right)_{v \in V}$ at the moment they are drawn in line~\ref{line:sampling-crash} of the procedure \textsc{SetGraph}. For clarity, we introduce the notation $d^{i}_{\min} = \frac{C^{\ast}|A_{i}|\log{n}\left(\log\log{n}\right)^2}{n - 1}\left(1 - \frac{1}{20{\log\log{n}}}\right)$ and $d^{i}_{\max} \allowbreak = \frac{C^{\ast}|A_{i}|\log{n}\left(\log\log{n}\right)^2}{n - 1}\left(1 + \frac{1}{20{\log\log{n}}}\right)$ that denotes the expected minimum and maximum degree of $G_i$ as proven below.
\begin{lemma}\label{lem:gi-well-connected}
For any $1 \le i \le C_1 \sqrt{n\log{n}}$, the graph $G_{i}$ is $(d^{i}_{\min}, d^{i}_{\max})$-\wellconnected with probability $1 - \frac{1}{n^2}$ at least. Furthermore, $\frac{d^{i}_{\max}}{d^{i}_{\min}} \le \frac{11}{10}$.
\end{lemma}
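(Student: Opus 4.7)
The plan is to reduce the lemma to a single application of Lemma~\ref{cor:erdos-renyi-lap}, viewing $G_i$ (restricted to the active set $A_i$) as a fresh draw from the Erdős--Rényi distribution. First I would argue that, conditional on $A_i$, the graph $G_i$ is distributed as $G(|A_i|, q)$ with $q = 2p(n) - p^2(n) = C_2 \log n (\log\log n)^2 / (n-1)$. The random bits driving line~\ref{line:sampling-crash} of \textsc{SetGraph} in iteration $i$ are freshly sampled and therefore independent of $A_i$, which is measurable with respect to the past history of the execution. For an unordered pair $\{u,w\} \subseteq A_i$, the pair becomes an edge of $G_i$ iff at least one of $u,w$ samples the other in line~\ref{line:sampling-crash} of \textsc{SetGraph}; this happens with probability $q$, and these events are independent across distinct pairs.

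Second, I would invoke Lemma~\ref{lem:active-are-large} to conclude $|A_i| \ge n/4$, so that $\log |A_i| = \Theta(\log n)$ and $\log\log |A_i| = \Theta(\log\log n)$. This lets me check the hypothesis $q \ge C \log|A_i| (\log\log|A_i|)^2 / (|A_i|-1)$ of Lemma~\ref{cor:erdos-renyi-lap} by taking $C_2$ large enough to dominate the hidden constant $C$. Substituting $p = q$ into the conclusion of that lemma, the degree interval reads $[q(|A_i|-1)(1 - 1/(20\log\log|A_i|)), q(|A_i|-1)(1 + 1/(20\log\log|A_i|))]$, which matches $[d^i_{\min}, d^i_{\max}]$ by definition, and the spectral bound becomes $\lambda_2(G_i) \ge 1 - 1/(10\log\log|A_i|) \ge 1 - 1/(10\log\log n)$. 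The lemma thereby yields $(d^i_{\min}, d^i_{\max})$-\wellconnected{}ness with probability at least $1 - 1/|A_i|^2 \ge 1 - 16/n^2$; inflating $C_2$ so that Lemma~\ref{cor:erdos-renyi-lap} fires with probability $\ge 1 - 1/(16 n^2)$ raises this to the desired $1 - 1/n^2$.

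The ratio bound $d^i_{\max}/d^i_{\min} \le 11/10$ is immediate from the elementary calculation already performed at the end of the proof of Lemma~\ref{lem:g-ast-well-connected}, namely $(1 + 1/(20\log\log n))/(1 - 1/(20\log\log n)) \le 1 + 1/(10\log\log n) \le 11/10$ for $n \ge 3$. The only delicate point I foresee is the conditional-independence argument in the first step: one must be confident that the adaptive adversary, which chooses which processes to crash based on all past randomness, cannot skew the distribution of the fresh sample used to form $G_i$. This is ensured by the fact that each iteration of the main loop invokes a new, independent execution of \textsc{SetGraph}, so conditional on the full history up through iteration $i-1$, the construction of $G_i$ reduces precisely to $G(|A_i|, q)$, and Lemma~\ref{cor:erdos-renyi-lap} applies verbatim.
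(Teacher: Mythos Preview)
Your proposal is correct and follows essentially the same route as the paper: invoke Lemma~\ref{lem:active-are-large} to get $|A_i|\ge n/4$, observe that $G_i$ is a fresh Erdős--Rényi sample $G(|A_i|,\,2p(n)-p(n)^2)$, apply Lemma~\ref{cor:erdos-renyi-lap} with $C_2$ large enough to absorb the factor-four slack, and reuse the ratio computation from Lemma~\ref{lem:g-ast-well-connected}. If anything, you are more careful than the paper in spelling out the conditional-independence argument against the adaptive adversary and in tracking the $1/|A_i|^2$ versus $1/n^2$ probability gap.
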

\begin{proof}
By Lemma~\ref{lem:active-are-large}, $|A_{i}| \ge \frac{n}{4}$. Observe that $G_i$ is drawn from the distribution $G\left(|A_i|, 2p(n) - p(n)^2 \right) = G\left(|A_i|, \frac{C_2\log{n}\left(\log\log{n}\right)^2}{n - 1} \right)$. Since $C_2$ is more than four times large than the equivalent constant in Lemma~\ref{cor:erdos-renyi-lap}, and $4|A_i| \ge n$, thus we can apply Lemma~\ref{cor:erdos-renyi-lap} and conclude that $G_{i}$ is $(d^{i}_{\min}, d^{i}_{\max})$-\wellconnected\space with probability at least $1 - \frac{1}{n^2}$. The derivation of the upper bound on the ratio of the maximum to the minimum degree of $G_i$ follows exactly the same argument as provided in the proof of Lemma~\ref{lem:g-ast-well-connected}.
\end{proof}
In the remainder, we assume all graphs $G_{i}$ drawn in the algorithm are $(d^{i}_{\min}, d^{i}_{\max})$-\wellconnected. By Lemma~\ref{lem:gi-well-connected} and by the union bound argument, probability of this event is at least $1 - \frac{\log^{2}(n)}{n^{3/2}}$.


We call an iteration of the algorithm's main loop \textit{safe} if at most $\frac{1}{C_1}\sqrt{n / \log{n}}$ processes stop being active in this iteration. Observe that a crashed processor cannot be active, thus this bound implies that at most $\frac{1}{C_1}\sqrt{n / \log{n}}$ crashes happens in this iteration.
In our proof, we follow the argument that in a sequence of safe iterations, the load-balancing process invoked in line~\ref{line:load-balancing} returns an accurate enough proportion of $1$'s to the total number of vertices that suffices for all processes to assign the same value to the variable $b$ with a constant probability.
\begin{lemma}\label{lem:safe-iteration-converge}
Consider a safe iteration and let $\mu^{\ast}$ denote the average of values $b$ of all process that are active at the beginning of the iteration. If $v$ is active at the end of this iteration, the value of variable $\mu_{v}$ in lines~\ref{line:b-change-0}-\ref{line:random} of the algorithm ~\textsc{LLBConsensus:CrashFailures} satisfies $\mu_{v} \in [\mu^{\ast} - \frac{1}{40}\sqrt{\frac{\log{n}}{n}}, \mu^{\ast} +  \frac{1}{40}\sqrt{\frac{\log{n}}{n}}]$.
\end{lemma}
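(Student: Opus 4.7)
The plan is to invoke Theorem~\ref{thm:load-balancing} on the call to $\textsc{FaultTolerantLLB}$ in line~\ref{line:load-balancing} to obtain a large set $B$ of processes whose outputs already lie $\varepsilon$-close to $\mu^\ast$ with $\texttt{lb\_status}=active$, and then use the short-diameter broadcast on $G^\ast$ carried out in lines~\ref{line:dis-start}--\ref{line:dis-end} to spread those values to every process that remains active at the end of the iteration.

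First, I would apply Theorem~\ref{thm:load-balancing} to the graph $G_i$. By Lemma~\ref{lem:gi-well-connected}, $G_i$ is $(d^i_{\min},d^i_{\max})$-\wellconnected{} with ratio at most $11/10$, so $\tau_1=O(\log n)$ and the constant $f(d^i_{\min},d^i_{\max})$ in part~(iv) of the theorem is bounded below by an absolute constant. Because the iteration is safe, the number of faults experienced by $\textsc{FaultTolerantLLB}$ during this call is at most $t' \le \tfrac{1}{C_1}\sqrt{n/\log n}$; using $|A_i|\ge n/4$ (Lemma~\ref{lem:active-are-large}) and choosing $\varepsilon:=\tfrac{1}{40}\sqrt{\log n/n}$, the hypothesis $t'<\tfrac{\varepsilon}{3\tau_1}|A_i|\cdot f(d^i_{\min},d^i_{\max})$ of point~(iv) reduces to $\tfrac{1}{C_1}\sqrt{n/\log n}\le c\sqrt{n/\log n}$ for an absolute constant $c$, which holds since $C_1\ge 2^{15}$. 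The inputs passed to $\textsc{FaultTolerantLLB}$ are exactly $\{b_v : v\in A_i\}$, whose mean is $\mu^\ast$, so Theorem~\ref{thm:load-balancing} yields a set $B\subseteq A_i$ with $|B|\ge|A_i|-\tfrac{3}{2}t'$ such that every $v\in B$ has $\texttt{lb\_status}_v=active$ and $\mu_v\in[\mu^\ast-\varepsilon,\mu^\ast+\varepsilon]$. Moreover, by the argument in the proof of Theorem~\ref{thm:load-balancing} that $C_{\tau_2}$ equals the full set of active vertices, \emph{every} process on which $\textsc{FaultTolerantLLB}$ returns $\texttt{lb\_status}=active$ already satisfies the $\varepsilon$-closeness bound, so any $active$-marked pair ever appearing in the inner loop is automatically valid.

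Second, I would argue that the propagation rounds deliver a valid pair to every process that stays active in the consensus sense. Let $A^\ast$ denote the set of processes that never trigger line~\ref{line:becomes_silent}. Lemma~\ref{lem:active-are-large} gives $|A^\ast|\ge n/4$, so $|B\cap A^\ast|\ge n/4-\tfrac{3}{2}t'>0$, and at least one member of $B$ continues to broadcast throughout the $40\log n+1$ rounds. Since line~\ref{line:dis-end} overwrites $(\mu_v,\texttt{lb\_status}_v)$ with any received pair tagged $active$, and since any adopted pair is re-advertised in subsequent rounds, the subset of still-active processes currently holding an $active$-marked pair is monotonically non-decreasing; moreover all such pairs are $\varepsilon$-close to $\mu^\ast$, by the previous paragraph. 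Applying the expansion argument used in the proof of Lemma~\ref{lem:g-ast-broadcast} (where the active subgraph of $G^\ast$ grows a subset by a constant factor per round until it exceeds $n/20$, at which point the Expander Mixing bound closes the remaining gap) to this growing subset shows it absorbs every still-active vertex within $O(\log n)$ rounds, well within the budget. This establishes $\mu_v\in[\mu^\ast-\varepsilon,\mu^\ast+\varepsilon]$ for every $v$ active at the end of the iteration.

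The main obstacle is twofold. Quantitatively, one must calibrate the assumption of Theorem~\ref{thm:load-balancing}(iv) against the definition of a safe iteration: the choice $\varepsilon=\Theta(\sqrt{\log n/n})$ against the fault budget $t'=O(\sqrt{n/\log n})$ is tight and is precisely what matches the threshold used in line~\ref{line:b-change-0}. Conceptually, one must keep two different notions of ``active'' carefully separate --- the $\texttt{lb\_status}$ flag output by $\textsc{FaultTolerantLLB}$ on $G_i$, and the silent/active distinction generated by the $|M_v|<\tfrac{1}{5}d^\ast_{\min}$ check on $G^\ast$ --- and verify that $active$-marked pairs, once adopted, are never ``lost'' and continue to propagate until they have covered all of $A^\ast$.
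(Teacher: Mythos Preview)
Your proposal is correct and follows essentially the same approach as the paper: invoke Theorem~\ref{thm:load-balancing}(iii),(iv) on the safe call to \textsc{FaultTolerantLLB} with $\varepsilon=\tfrac{1}{40}\sqrt{\log n/n}$ to get a nonempty set of $\texttt{lb\_status}=active$ processes that are $\varepsilon$-close to $\mu^\ast$ and survive the iteration, then use the broadcast on $G^\ast$ (Lemma~\ref{lem:g-ast-broadcast}) to deliver one of those pairs to every process still active at the end. The only stylistic difference is that the paper invokes Lemma~\ref{lem:g-ast-broadcast} as a black box (``$p$ reaches $q$''), whereas you re-trace its expansion argument in terms of the forward-growing set of processes holding an $active$-marked pair; both work, and your extra remark that \emph{every} process with $\texttt{lb\_status}=active$ is $\varepsilon$-close (since $C_{\tau_2}=A$ in the proof of Theorem~\ref{thm:load-balancing}) is exactly what the paper uses implicitly when it says ``it is irrelevant which process's value is acquired.''
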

\begin{proof}
Consider a safe iteration $i$. By Lemma~\ref{lem:gi-well-connected}, the graph $G_i$ is $(d^{i}_{\min}, d^{i}_{\max})$-\wellconnected. Thus, the outcome of the procedure \textsc{FaultTolerantLLB} invoked in line~\ref{line:load-balancing} of the main algorithm is captured by Theorem~\ref{thm:load-balancing}. In particular, simple calculations can verify that the assumptions of points $(iii)$ and $(iv)$ of the theorem are satisfied for the choice $\varepsilon = \frac{1}{40}\sqrt{\frac{\log{n}}{n}}$. This is because we are considering a safe iteration meaning that $t < \frac{1}{C_1}\sqrt{n/\log{n}} < \frac{1}{2^{15}}\sqrt{n/\log{n}}$, and also because by Lemma~\ref{lem:gi-well-connected} we obtain that $\frac{d^{i}_{\max}}{d^{i}_{\min}} \le \frac{11}{10}$. 
In consequence, we obtain the existence of a set $B_i$ of size at least $|A_{i}| - \frac{3}{2C_1}\sqrt{n/\log{n}}$ of the property that every processes from this set has the \texttt{lb\_status} variable set to \textit{active}. 
Reiterating the assumption that at most $\frac{1}{C_1}\sqrt{n/\log{n}}$ processes turn silent in this iteration, we conclude that at least $|A_{i}| - \frac{5}{2C_1}\sqrt{n / \log{n}}$ processes from the set $B_i$ remains active until at the end of iteration $i$, denoted $D_i$. As $|A_{i}| \ge \frac{n}{4}$, by Lemma~\ref{lem:active-are-large}, thus $|A_{i}| - \frac{5}{2C_1}\sqrt{n / \log{n}} > 1$ and therefore $D_i$ is not empty. Applying Lemma~\ref{lem:g-ast-broadcast} to the set $D_i$ and the remaining set of active processes, we conclude that any active process receives the value $\mu_{v}$ of an active process that terminated the \textsc{FaultTolerantLLB} algorithm with the \texttt{lb\_status} variable set to \textit{active}. By Theorem~\ref{thm:load-balancing} it is irrelevant which processes value is acquired as all of them lie in the interval $\left[\mu^{\ast} - \frac{1}{40}\sqrt{\frac{\log{n}}{n}}, \mu^{\ast} +  \frac{1}{40}\sqrt{\frac{\log{n}}{n}}\right]$ and the lemma is proven.
\end{proof}

The remaining part of the proof shows that if the accuracy in the process of counting the fraction of values $b$ equal $1$ to all active processes is precise enough, then there exists a lower bound on the  probability that all active processes converge to the same value $b$ due to the randomness invoked in the line~\ref{line:random}. We remark that this technique has been well-established, c.f.~\cite{Bar-JosephB98, hajiaghayi2024nearly} in the context of synchronous consensus algorithm and we do not claim this part as the main contribution of our paper. We include the proofs for the sake of self-completeness of the paper. 

We start by noting a simple fact that once a system converged to the same value of variable $b$ stored by all active processes, this state is maintained until the end of the main loop of the algorithm.

\begin{lemma}\label{lem:one-all}
If an iteration $i$ exists such that all active processes store the same value $b$ at the end of the iteration, then all active processes have the same value $b$ at the end of the next iteration.
\end{lemma}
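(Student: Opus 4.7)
The plan is to exploit the bracketing guarantee of Theorem~\ref{thm:load-balancing}(ii): the first element $\mu_v$ returned by \textsc{FaultTolerantLLB} at any process always lies between the smallest and the largest input value passed to the procedure. First I would observe that the set of processes active at the start of iteration $i+1$ coincides with the set of processes active at the end of iteration $i$, since a silent process never re-enters the protocol and a crashed process cannot become active. By hypothesis, each such process enters iteration $i+1$ with the common value $b_v = c$ for some $c \in \{0,1\}$.

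Next, I would invoke Theorem~\ref{thm:load-balancing}(ii) applied to the call of \textsc{FaultTolerantLLB} in line~\ref{line:load-balancing} of iteration $i+1$: since every input $b_v$ equals $c$, the returned value $\mu_v$ at every process that completes the call must lie in the trivial interval $[c,c]$, and hence equals $c$. The dissemination loop over $G^{\ast}$ in lines~\ref{line:dis-start}--\ref{line:dis-end} only propagates pairs $(\mu_u, \texttt{lb\_status}_u)$ produced in that call, so whenever a process $v$ overwrites its pair in line~\ref{line:dis-end}, the value $\mu_v$ remains equal to $c$. Consequently any process still active when lines~\ref{line:b-change-0}--\ref{line:b-change-1} are evaluated satisfies $\mu_v = c$, and since $c \in \{0,1\}$ lies outside the interval $[1/2 - \tfrac{1}{40}\sqrt{\log n / n},\, 1/2 + \tfrac{1}{40}\sqrt{\log n / n}]$ for sufficiently large $n$, the deterministic branch that assigns $b_v \gets c$ fires, and the random branch of line~\ref{line:random} is never reached.

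The main obstacle, if any, is careful bookkeeping about which processes may transmit values during iteration $i+1$ and what those values can be: a process that turns silent or crashes partway through iteration $i+1$ may still send partial messages before halting, but because it entered the iteration with $b_v = c$ and any \textsc{FaultTolerantLLB} output it produces is also $c$ by the same bracketing, every value circulating in lines~\ref{line:load-balancing}--\ref{line:dis-end} equals $c$. This is precisely where the \emph{unconditional} character of Theorem~\ref{thm:load-balancing}(ii) — which makes no assumption on the failure pattern, the number of faulty nodes, or the connectivity of the graph — is essential: without it, a corrupted message could in principle perturb some $\mu_v$ away from $c$ and trigger the randomized update, breaking the invariant.
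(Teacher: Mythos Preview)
Your proposal is correct and follows essentially the same approach as the paper: both arguments hinge on Theorem~\ref{thm:load-balancing}(ii), which guarantees that every $\mu_v$ returned by \textsc{FaultTolerantLLB} lies in $[c,c]$ when all inputs equal $c$, so only the value $c$ circulates in lines~\ref{line:dis-start}--\ref{line:dis-end} and the deterministic branch assigning $b_v \gets c$ is taken without invoking the random update. Your treatment is in fact more explicit than the paper's about the dissemination loop and about the unconditional nature of the bracketing guarantee, but the core idea is identical.
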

\begin{proof}
Consider the iteration $i+1$. Theorem~\ref{thm:load-balancing}, point \textit{(ii)}, assures that the first value of the output pair of the procedure \textsc{FaultTolerantLLB} is the value $b$ that active processes held at the end of previous round. That is true for \textit{every} process regardless of its $\textsc{lb\_status}$. Thus in lines~\ref{line:dis-start}-\ref{line:dis-end} only this value circulates among active processes. This value must be either $0$ or $1$, thus no random choices are made in the final part of the iteration $i+1$ and thus the lemma follows.
\end{proof}
 
\begin{lemma}\label{lem:two-iterations}
Consider two consecutive safe iterations $\cI_{1}, \cI_{2}$. Let $A_{i}$, for $i \in \{1, 2\}$ be the set of active processes at the end of the iteration $\cI_{i}$. With probability $\sqrt{\frac{\log{n}}{4n}}$ all processes belonging to $A_{2}$ store the same value in the variable $b$ at the end of second iteration.
\end{lemma}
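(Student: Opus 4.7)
The plan is to mirror the structure of Lemma~\ref{lem:two-iterations-omissions}, adapting the thresholds to the crash-failure window $\varepsilon = \frac{1}{40}\sqrt{\log n / n}$ used in lines~\ref{line:b-change-0}--\ref{line:random} of Algorithm~\ref{alg:crash-consensus}. Denote by $\mu_1,\mu_2$ the averages of the variables $b$ over processes active at the beginning of $\cI_1$ and $\cI_2$, respectively. The proof splits on whether $\mu_1$ already lies outside or inside the ``middle'' band $\left[\tfrac12 - \tfrac{1}{20}\sqrt{\log n/n},\ \tfrac12 + \tfrac{1}{20}\sqrt{\log n/n}\right]$.

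In the easy case, suppose $\mu_1 > \tfrac12 + \tfrac{1}{20}\sqrt{\log n/n}$ (the symmetric case is analogous). Since $\cI_1$ is safe, Lemma~\ref{lem:safe-iteration-converge} guarantees that for every process $v$ active at the end of $\cI_1$, the computed value $\mu_v$ lies within $\tfrac{1}{40}\sqrt{\log n/n}$ of $\mu_1$, hence still exceeds $\tfrac12 + \tfrac{1}{40}\sqrt{\log n/n}$. All such processes therefore execute line~\ref{line:b-change-1} deterministically and assign $b_v \leftarrow 1$. Lemma~\ref{lem:one-all} then propagates this agreement through $\cI_2$ with probability $1$, which is stronger than the claimed bound.

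The interesting case is $\mu_1 \in \left[\tfrac12 - \tfrac{1}{20}\sqrt{\log n/n},\ \tfrac12 + \tfrac{1}{20}\sqrt{\log n/n}\right]$. By Lemma~\ref{lem:safe-iteration-converge} the spread of the $\mu_v$'s among active processes is at most $\tfrac{2}{40}\sqrt{\log n/n}$, which is strictly smaller than the gap $\tfrac{2}{40}\sqrt{\log n/n}$ separating the thresholds of lines~\ref{line:b-change-0} and~\ref{line:b-change-1}; hence no two active processes simultaneously take the $b\leftarrow 0$ and $b\leftarrow 1$ branches. Thus, at the end of $\cI_1$, every active process either deterministically writes the same bit $\beta \in \{0,1\}$ to $b$, or flips an independent fair coin via $\mathrm{Bin}(1,1/2)$. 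Let $m$ be the number of coin-flipping processes; by Lemma~\ref{lem:active-are-large} the population of active processes has size $\ge n/4$, so either $m \ge n/8$ or the number of deterministic writers is already $\ge n/8$ and pushes $\mu_2$ away from $1/2$ on its own. Apply Lemma~\ref{lem:anti-concetration-omissions} to the $m$ fair coins with the parameter $t$ chosen so that $t\sqrt{m}$ just exceeds $\tfrac{1}{20}\sqrt{n\log n}$ plus the (negligible) $O(\sqrt{n/\log n}/C_1)$ slack from $|A_2|-|A_1|$; a direct computation shows that $t = \Theta(\sqrt{\log n})$ suffices, and the corresponding probability $e^{-4(t+1)^2}/\sqrt{2\pi}$ is at least $\sqrt{\log n/(4n)}$. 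Conditioned on this deviation, $\mu_2$ is forced outside the middle band in the appropriate direction; applying the easy case to the safe iteration $\cI_2$ then yields a common value of $b$ across $A_2$.

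The main obstacle is the bookkeeping in the second case: one must verify that the required deviation, measured in absolute counts, is indeed $\Theta(\sqrt{n\log n})$ after accounting for (a) the window shift $\tfrac{1}{40}\sqrt{\log n/n}$ applied to $\mu_2$ in $\cI_2$, (b) the safety-slack of $\tfrac{1}{C_1}\sqrt{n/\log n}$ processes lost between $|A_1|$ and $|A_2|$, and (c) the possible contribution of deterministic writers. Once these terms are collected, calibrating $t$ to match both the $t \le \sqrt{n}/8$ constraint of Lemma~\ref{lem:anti-concetration-omissions} and the target probability $\sqrt{\log n/(4n)}$ is a routine but careful exponential computation, analogous to the constant-probability bound in the omission-failure version.
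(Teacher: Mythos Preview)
Your proposal is correct and follows essentially the same approach as the paper: the identical case split on whether $\mu_1$ lies outside the band $\bigl[\tfrac12-\tfrac{1}{20}\sqrt{\log n/n},\,\tfrac12+\tfrac{1}{20}\sqrt{\log n/n}\bigr]$, the observation that the $b\leftarrow 0$ and $b\leftarrow 1$ branches cannot both fire, and the application of the Bar-Joseph--Ben-Or anti-concentration bound with $t=\Theta(\sqrt{\log n})$ to force $\mu_2$ outside the band before invoking Lemma~\ref{lem:one-all}. Your slightly more careful handling of the split between deterministic writers and coin-flippers is a refinement the paper glosses over; note only the slip where you call the spread ``strictly smaller'' than an identical quantity---the conclusion still holds because the threshold tests in lines~\ref{line:b-change-0}--\ref{line:b-change-1} are strict inequalities.
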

\begin{proof}
We first recall a large deviation inequality.
\begin{lemma}[Lemma $4.3$ in~\cite{Bar-JosephB98}]\label{lem:anti-concetration}
Assume that $n$ processes independently choose a random bit from uniform distribution. Let $X$ be the random variable denoting the number processes that chose bit $1$. Then for any $t \le \sqrt{n} / 8$
$$\Pr(X - \E(X) \ge t\sqrt{n}) \ge \frac{e^{-4(t+1)^2}}{\sqrt{2\pi}} \ .$$
\end{lemma}
We denote $\mu_{1}, \mu_2$ the average of of values of variables $b$ held by active processes at the beginning of round $\cI_1$ and $\cI_2$, respectively. 
If $\mu_1 > \frac{1}{2} + \frac{1}{20}\sqrt{\frac{\log{n}}{n}}$, then by Lemma~\ref{lem:safe-iteration-converge} ever active process $v$ has $\mu_{\cI_1}(v) > \frac{1}{2} + \frac{1}{40}\sqrt{\frac{\log{n}}{n}}$ at the end of iteration $\cI_1$. Therefore, it assigns $1$ as the value of the variable $b$. By Lemma~\ref{lem:one-all}, the variable $b$ is also set to $1$ at the end of the second iteration.
Assume that $\mu_{1} \le \frac{1}{2} + \frac{1}{20}\sqrt{\frac{\log{n}}{n}}$. 
Firstly, we note that in the iteration $\cI_{1}$ no two processes can execute line~\ref{line:b-change-0} and line~\ref{line:b-change-1} at the same time, since the difference between right-hand-sides of these two inequalities is larger than $\frac{1}{20}\sqrt{\frac{\log{n}}{n}}$ while by Lemma~\ref{lem:safe-iteration-converge} values $\mu_{\cI_1}(v)$ can differ by at most $\frac{1}{20}\sqrt{\frac{\log{n}}{n}}$. This yields two cases. 

Assume, that no process executes line~\ref{line:b-change-0}. Thus all active processes either change the variable $b$ to an uniform random bit or $1$. Since $|A_{1}| \ge \frac{n}{4}$, by Lemma~\ref{lem:active-are-large}, thus we can apply Lemma~\ref{lem:anti-concetration} for $t = \frac{1}{8}\log{n}$, to conclude that with probability $\frac{1}{16}\sqrt{\frac{\log{n}}{n}}$, more than $\frac{1}{2}|A_{1}| + \frac{1}{16}\sqrt{n\log{n}}$ processes assign $1$ to their value $b$ at the end of iteration $\cI_1$. This yields
\[
\mu_{2} \ge \left(\frac{1}{2}|A_{1}| + \frac{1}{16}\sqrt{n\log{n}}\right) / |A_2|  \ge \frac{1}{2} + \frac{1}{20}\sqrt{\frac{\log{n}}{n}}, 
\]
where the last inequality follows from the facts that $|A_2| \ge |A_1| - \frac{1}{C_1}\sqrt{n/\log{n}} \ge |A_1| - \frac{1}{C_1}\sqrt{n/\log{n}}$ and  $|A_{1}| \ge n/4$. 
Since iteration $\cI_2$ is also safe, thus by the reasoning analogical to the one presented in the beginning of the lemma, we conclude that all active processes assign $1$ as the value of the variable $b$ at the end of iteration $\cI_2$. 

In the second case, when no process executes line~\ref{line:b-change-0} we reason analogically, but only in this case we use Lemma~\ref{lem:anti-concetration} for lower bounding the probability of deviating negatively from the expected value (i.e. we do the estimate for the expected number of $0$'s).
\end{proof}
Finally, we can prove the main theorem justifying correctness of the algorithm. In short, we use the pigeonhole principle two guarantee a large number of pairs of consecutive safe iterations. Since each pair has lower bounded probability
\begin{lemma}\label{lem:seq-good}
With probability at least $1 - \frac{1}{n}$, all correct processes return the same value $b$ at the end of the algorithm.
\end{lemma}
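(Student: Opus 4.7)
The plan is to combine a pigeonhole argument with the constant-probability convergence of Lemma~\ref{lem:two-iterations} to show that some consecutive pair of safe iterations achieves consensus among active processes whp, and then transfer this value to silent correct processes via the inquiring phase in lines~\ref{line:inquiring}--\ref{line:responding}. Throughout, I condition on the high-probability event, from Lemmas~\ref{lem:g-ast-well-connected} and~\ref{lem:gi-well-connected} together with a union bound, that all sampled graphs $G^{\ast}$ and $\{G_i\}_i$ satisfy the required \wellconnected\ parameters; this conditioning fails with probability at most $\log^2(n)/n^{3/2}$.

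The first step is counting consecutive safe pairs. Every non-safe iteration causes more than $(1/C_1)\sqrt{n/\log n}$ processes to become silent, and the total number of silent processes in any execution is bounded by $(3/2)t < n/2$, following the core-subgraph analysis behind Lemma~\ref{lem:active-are-large}. This yields $N_{ns} \le (C_1/2)\sqrt{n\log n}$, and, provided $C_1 \ge 2^{15}$ is chosen large enough relative to the slack in the strict inequality $t < n/3$, at least $\Omega(\sqrt{n\log n})$ pairs $(i,i+1)$ have both iterations safe. For any such pair, Lemma~\ref{lem:two-iterations} guarantees consensus among active processes at the end of iteration $i+1$ with probability at least $\sqrt{\log n/(4n)}$, using only the random coins flipped in line~\ref{line:random} of iteration $i$. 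Since these coins are drawn freshly and independently in each iteration, distinct pairs yield independent trials, so the probability that all $\Omega(\sqrt{n\log n})$ trials fail is at most
\[
\left(1-\sqrt{\log n/(4n)}\right)^{\Omega(\sqrt{n\log n})} \le \exp\bigl(-\Omega(\log n)\bigr) \le 1/n^2,
\]
taking $C_1$ sufficiently large. Conditional on some pair succeeding, Lemma~\ref{lem:one-all} inductively propagates the resulting common value $b^{\ast} \in \{0,1\}$ through every subsequent iteration, so every process that remains active at termination of the main loop returns $b^{\ast}$.

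Finally, I transfer $b^{\ast}$ to the silent correct processes. At termination of the main loop, at least $n/2$ processes are simultaneously non-faulty and active (Lemma~\ref{lem:active-are-large}), hence hold $b^{\ast}$. Each silent correct process queries $10\log n$ uniformly random processes in line~\ref{line:inquiring}; each query hits such a respondent with probability at least $1/2$, so the process fails to receive any valid response with probability at most $2^{-10\log n} = n^{-10}$. A union bound over the at most $(3/2)t < n$ silent correct processes bounds the total failure probability by $n^{-9}$, yielding the claimed $1 - 1/n$ bound after combining with the earlier graph-sampling failure and the amplification failure. The main technical delicacy is the first step: the naive pigeonhole gives only $N_s \ge N_{ns}$ in the tightest regime, so obtaining $\Omega(\sqrt{n\log n})$ consecutive safe pairs requires exploiting the strict slack in $t < n/3$ together with the large constant $C_1 \ge 2^{15}$; once this combinatorial gap is in hand, the probabilistic amplification and the inquiring phase are routine.
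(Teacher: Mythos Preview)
Your overall structure is the same as the paper's: count consecutive safe pairs by pigeonhole, apply Lemma~\ref{lem:two-iterations} per pair for a $\sqrt{\log n/(4n)}$ success probability, amplify across the pairs, invoke Lemma~\ref{lem:one-all} for persistence, and finish with a Chernoff argument for the inquiring phase in lines~\ref{line:inquiring}--\ref{line:responding}. The amplification and the inquiring-phase analysis are fine (your ``independent trials'' phrasing is not literally correct for overlapping pairs, but restricting to disjoint pairs and using that Lemma~\ref{lem:two-iterations} gives a conditional lower bound regardless of history repairs this with no asymptotic loss; the paper makes the same simplification).

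There is a genuine gap in your pigeonhole step, however. Bounding the processes that ever stop being active by $(3/2)t < n/2$ gives only $N_{ns} \le (C_1/2)\sqrt{n\log n}$, while the total number of iterations is $T = C_1\sqrt{n\log n}$: an adversary that alternates safe and non-safe iterations would leave essentially no consecutive safe pairs, so you cannot conclude $\Omega(\sqrt{n\log n})$ such pairs from this count alone. Your proposed fix---``exploiting the strict slack in $t < n/3$''---buys only $O(1)$ slack, not the $\Omega(\sqrt{n\log n})$ you need, and increasing $C_1$ does not help since $N_{ns}$ and $T$ scale identically with $C_1$. The paper's proof instead bounds the number of non-safe iterations using the crash budget $t < n/3$ directly (rather than the $(3/2)t$ bound on all processes that stop being active), obtaining $N_{ns} \le (C_1/3)\sqrt{n\log n}$ and hence at least $(C_1/3)\sqrt{n\log n}$ consecutive safe pairs; that tighter count is what makes the amplification go through.
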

\begin{proof}
Recall, that the adversary can crash at most $n/3$ vertices. The algorithm iterates the main loop $C_1\sqrt{n\log{n}}$ many times. Thus, the pigeonhole principle, there is at least $\frac{C_1}{3}\sqrt{n\log{n}}$ pairs of consecutive safe iterations. For every such pair, with probability at least $\sqrt{\frac{\log{n}}{4n}}$, all active processes store the same value in variable $b$ at the end of the second iteration, as per Lemma~\ref{lem:two-iterations}. Once it happens, Lemma~\ref{lem:one-all} assures that this value is stored in active processes until the main loop terminates. Now, the probability that every pair fails to unify the variable $b$ across active processes is bounded by 
\[
\left(1 - \sqrt{\frac{\log{n}}{4n}}\right)^{C_1\sqrt{n\log{n}}} \le \left(\frac{1}{e}\right)^{C_1\sqrt{n\log{n}}\cdot \sqrt{\frac{\log{n}}{4n}}} = \left(\frac{1}{e}\right)^{C_1 / 16 \log{n}} \le \frac{1}{2n^2}
\ ,
\]
as $C_1 \ge 2^{15}$.

Let us condition on the event that all active processes have the same value of the variable $b$ upon termination of the main loop of the algorithm. It remains to show that in lines~\ref{line:inquiring}-\ref{line:responding} the value stored by active process is transmitted to all other processes. Consider a correct correct process $v$. It submits $10\log{n}$ random requests. Since the number of active processes is at least $\frac{n}{4}$, thus Chernoff's bound implies that with probability at least $1-\frac{1}{2n^2}$ at least one of its requests hits an active process. As active processes are also non-faulty, it is guaranteed that the inquiring process receives a response with the variable $b$ of the active process. Finally, the union bound argument provides that with probability at least $1 - \frac{1}{n}$ all non-faulty processes get a response from an active processes, thus the lemma is proven. 
\end{proof}

\begin{proof}[Proof of Theorem~\ref{thm:simple-cons-crashes}]
We first argue for correctness. The property that all returned values are the same follows from Lemma~\ref{lem:seq-good}. The fact that the value is among the input values from the point $\textit{(ii)}$ of Theorem~\ref{thm:load-balancing}. That is, if all processes receive the same input value, only this value is ever assigned to any variable $b$ in the pseudocode and thus it also must be the decision value.

To derive the round and bit complexity, we observe that the number of iterations of the main loops is fixed to $O(\sqrt{n\log{n}})$ and, by Theorem~\ref{thm:load-balancing}, every single iteration uses $O(\log{n})$ rounds and $O(n \log^2{n})$ communication bits. Finally, the inquiring phase takes additional $O(1)$ rounds and $O(n\log{n})$ communication bits.
\end{proof}

\section{Conclusions and Open Problems}

We developed an efficient implementation of deterministic LLB against adaptive crashes and omissions, by using a specific LLB formula together with fixing outliers. We demonstrated substantial improvements when applying our algorithm, initiated by random links, to solve selected problems of counting and consensus (the resulting algorithms are randomized, due to the random links' initialization of the LLB sub-routine). 

The most promising open direction include further applications of LLB to other distributed computing problems, as well as an attempt to extend this technique to other types of failures, in particular, different types of Byzantine faults. Further shrinking the (poly)logarithmic gaps in formulas for the number of tolerated failures, and for the time and communication complexities, is a natural challenge.

\bibliographystyle{plain}
\bibliography{bibliography}

\end{document}